\numberwithin{equation}{section}
\setlist[enumerate,1]{itemsep=0pt,label=(\arabic*),ref=(\arabic*)}
\setlist[itemize,1]{itemsep=0pt}
\newtheorem{theorem}{Theorem}[section]
\newtheorem{proposition}[theorem]{Proposition}
\newtheorem{corollary}[theorem]{Corollary}
\newtheorem{lemma}[theorem]{Lemma}
\theoremstyle{definition}
\newtheorem{remark}[theorem]{Remark}
\newtheorem{example}[theorem]{Example}
\newtheorem{definition}[theorem]{Definition}
\theoremstyle{remark}
\newtheorem*{acknowledgment}{Acknowledgment}
\DeclareMathOperator{\id}{id}
\DeclareMathOperator{\sgn}{sgn}
\DeclareMathOperator{\Hom}{Hom}
\DeclareMathOperator{\Spec}{Spec}
\DeclareMathOperator{\Aut}{Aut}
\DeclareMathOperator{\Pic}{Pic}
\DeclareMathOperator{\Cr}{Cr}
\DeclareMathOperator{\qP}{\mathnormal{q}-P}
\DeclareMathOperator{\Cone}{Cone}
\DeclareMathOperator{\source}{src}
\DeclareMathOperator{\target}{tar}
\newcommand*{\CC}{\mathbb{C}}
\newcommand*{\QQ}{\mathbb{Q}}
\newcommand*{\RR}{\mathbb{R}}
\newcommand*{\NN}{\mathbb{N}}
\newcommand*{\ZZ}{\mathbb{Z}}
\newcommand*{\PP}{\mathbb{P}}
\newcommand*{\maxzero}[1]{[#1]_{+}}
\newcommand*{\cat}[1]{\mathsf{#1}}
\newcommand*{\bs}{\mathbf{s}}
\newcommand{\bc}{\mathbf{c}}
\newcommand{\bi}{\mathbf{i}}
\tikzset{
mid arrow/.style={postaction={decorate,decoration={
  markings,
  mark=at position .55 with {\arrow[#1]{To[length=1.1mm]}}
}}},
mid 7 arrow/.style={postaction={decorate,decoration={
  markings,
  mark=at position .7 with {\arrow[#1]{To[length=1.1mm]}}
}}},
vertex/.style={circle,inner sep=1.5pt},
e_vertex/.style={ellipse,inner sep=1.5pt},
fan arrow/.style=-{Stealth[length=1.9mm,width=1.7mm]},
}
\title{\texorpdfstring{$q$-Painlev\'{e} equations on cluster Poisson varieties via toric geometry}{q-Painlev\'{e} equations on cluster Poisson varieties via toric geometry}}
\author{Yuma Mizuno}
\date{}
\begin{document}
\maketitle
\begin{abstract}
	We provide a relation between the geometric framework for $q$-Painlev\'{e} equations and cluster Poisson varieties by using toric models of rational surfaces associated with $q$-Painlev\'{e} equations. We introduce the notion of seeds of $q$-Painlev\'{e} type by the negative semi-definiteness of symmetric bilinear forms associated with seeds, and classify the mutation equivalence classes of these seeds. This classification coincides with the classification of $q$-Painlev\'{e} equations given by Sakai. We realize $q$-Painlev\'{e} systems as automorphisms on cluster Poisson varieties associated with seeds of $q$-Painlev\'{e} type.
%
%
\end{abstract}
\section{Introduction}
Sakai introduced a geometric framework for discrete Painlev\'{e} equations \cite{Sakai}.
In his framework, discrete Painlev\'{e} systems are realized as birational automorphisms on families of rational surfaces.
On the level of Picard groups of rational surfaces,
these automorphisms are called Cremona isometries, and related to the action of affine Weyl groups on root lattices.
Moreover, he classified discrete Painlev\'{e} equations based on the classification of rational surfaces.
According to Sakai's classification,
there are three types of discrete Painlev\'{e} equations:
elliptic type, multiplicative type, and additive type.
A discrete Painlev\'{e} equation of multiplicative type is often called a \emph{$q$-Painlev\'{e} equation}, which is the main topic of this paper.

In the recent paper \cite{BGM}, Bershtein, Gavrylenko, and Marshakov showed that all the $q$-Painlev\'{e} systems in Sakai's classification can be realized on \emph{cluster varieties}.
Cluster varieties are introduced by Fock and Goncharov \cite{FG} as geometric counterparts of cluster algebras, which are introduced by Fomin and Zelevinsky \cite{FZ1}.
Fock and Goncharov also define \emph{cluster modular groups}, which are groups of automorphisms on cluster varieties.
Cluster modular groups have a nice combinatorial description since an element of these groups can be expressed as a composition of \emph{mutations},
which are operations among quivers described by a combinatorial way.
Bershtein, Gavrylenko, and Marshakov proved that groups of Cremona isometries for $q$-Painlev\'{e} systems can be embedded into cluster modular groups associated with appropriate quivers.
Note that the relation between $q$-Painlev\'{e} equations and the cluster theory was already noticed by Okubo \cite{Okubo13} in several cases.

The purpose of this paper is to provide a more precise relation between Sakai's theory and the cluster theory.
The basic idea is that rational surfaces together with their anti-canonical divisors associated with $q$-Painlev\'{e} equations appearing in Sakai's theory have toric models in the sense of \cite{GHK2015Birational,GHK2015Moduli}.
From combinatorial data of these toric models,
we can define cluster varieties by using a construction given by Gross, Hacking, and Keel \cite{GHK2015Birational}.
It turns out that the resulting cluster varieties are essentially the families of the interiors of rational surfaces on which Cremona isometries act in Sakai's theory.
Moreover, the null space of a skew-symmetric bilinear form associated with such a cluster variety is identified with a root lattice of affine type as a sublattice of the Picard group of a rational surface.
The relation between the cluster theory and Sakai's theory for $q$-Painlev\'{e} equations is roughly summarized in Table \ref{table:dictionary}.

Our main results are the following.

\begin{theorem}[See Theorem \ref{thm:seeds are q-P} and \ref{thm:classification of q-P seeds}]
	The seeds of $q$-Painlev\'{e} type (Definition \ref{def:q-P type}) modulo mutation equivalence are classified into the ten seeds in Figure \ref{fig:seeds of q-P type}.
\end{theorem}

\begin{theorem}[See Theorem \ref{theorem:qP action on X}]
	Let $\bs$ be a seed of $q$-Painlev\'{e} type, and $\bi$ be a free cover of $\bs$ (Definition \ref{def:free cover}).
	Then the action of the cluster modular group $\Gamma_{\bi}$ on the cluster Poisson variety $\mathcal{X}_{\bi}$ gives $q$-Painlev\'{e} systems in the sense of \cite{Sakai}.
\end{theorem}

\begin{figure}[t]
	\begin{alignat*}{5}
	E_8^{(1)}
	&\begin{tikzpicture}[scale=0.6,baseline=(o.base),
	seed/.style={star,star point ratio=1.9,draw,fill,inner sep=0pt,minimum size=1.1mm}]
		\node[] (o) at (0,0) {};
		\draw[help lines] (-1.1,-1.1) grid (1.1,1.1);
		\draw[draw,fill] (0,0) circle (0.4mm);
		\foreach \i/\j in {0/1,-1/0,0/-1,1/0}{
			\draw[fan arrow] (0,0) -- (\i,\j);
		};
		\node[seed] at (-1.3,0) {};
		\node[seed] at (-0.3,-1.3) {};
		\node[seed] at (0.3,-1.3) {};
		\node[seed] at (0,-1.3) {};
		\node[seed] at (1.3,0) {};
		\node[seed] at (0,1.6) {};
		\node[seed] at (-0.3,1.6) {};
		\node[seed] at (0.3,1.6) {};
		\node[seed] at (-0.3,1.3) {};
		\node[seed] at (0.3,1.3) {};
		\node[seed] at (0,1.3) {};
	\end{tikzpicture}
	&\quad E_7^{(1)}
	&\begin{tikzpicture}[scale=0.6,baseline=(o.base),
	seed/.style={star,star point ratio=1.9,draw,fill,inner sep=0pt,minimum size=1.1mm}]
		\node[] (o) at (0,0) {};
		\draw[help lines] (-1.1,-1.1) grid (1.1,1.1);
		\draw[draw,fill] (0,0) circle(0.4mm);
		\foreach \i/\j in {0/1,-1/0,0/-1,1/0}{
			\draw[fan arrow] (0,0) -- (\i,\j);
		};
		\node[seed] at (-1.3,0) {};
		\node[seed] at (-0.15,-1.3) {};
		\node[seed] at (0.15,-1.3) {};
		\node[seed]  at (1.3,-0.3) {};
		\node[seed]  at (1.3,0) {};
		\node[seed]  at (1.3,0.3) {};
		\node[seed] at (-0.45,1.3) {};
		\node[seed] at (-0.15,1.3) {};
		\node[seed] at (0.15,1.3) {};
		\node[seed] at (0.45,1.3) {};
	\end{tikzpicture}
	&\quad E_6^{(1)}
	&\begin{tikzpicture}[scale=0.6,baseline=(o.base),
	seed/.style={star,star point ratio=1.9,draw,fill,inner sep=0pt,minimum size=1.1mm}]
		\node[] (o) at (0,0) {};
		\draw[help lines] (-1.1,-1.1) grid (1.1,1.1);
		\draw[draw,fill] (0,0) circle(0.4mm);
		\foreach \i/\j in {0/1,-1/0,0/-1,1/0}{
			\draw[fan arrow] (0,0) -- (\i,\j);
		};
		\node[seed] at (-0.3,1.3) {};
		\node[seed] at (0,1.3) {};
		\node[seed] at (0.3,1.3) {};
		\node[seed] at (-1.3,0) {};
		\node[seed] at (-0.3,-1.3) {};
		\node[seed] at (0,-1.3) {};
		\node[seed] at (0.3,-1.3) {};
		\node[seed] at (1.3,0.15) {};
		\node[seed] at (1.3,-0.15) {};
	\end{tikzpicture}
	&\quad E_5^{(1)}
	&\begin{tikzpicture}[scale=0.6,baseline=(o.base),
	seed/.style={star,star point ratio=1.9,draw,fill,inner sep=0pt,minimum size=1.1mm}]
		\node[] (o) at (0,0) {};
		\draw[help lines] (-1.1,-1.1) grid (1.1,1.1);
		\draw[draw,fill] (0,0) circle(0.4mm);
		\foreach \i/\j in {0/1,-1/0,0/-1,1/0}{
			\draw[fan arrow] (0,0) -- (\i,\j);
		};
		\node[seed]  at (-0.15,1.3) {};
		\node[seed]  at (0.15,1.3) {};
		\node[seed]  at (-1.3,-0.15) {};
		\node[seed]  at (-1.3,0.15) {};
		\node[seed] at (-0.15,-1.3) {};
		\node[seed]  at (0.15,-1.3) {};
		\node[seed] at (1.3,-0.15) {};
		\node[seed]  at (1.3,0.15) {};
	\end{tikzpicture}
	&\quad E_4^{(1)}
	&\begin{tikzpicture}[scale=0.6,baseline=(o.base),
	seed/.style={star,star point ratio=1.9,draw,fill,inner sep=0pt,minimum size=1.1mm}]
		\node[] (o) at (0,0) {};
		\draw[help lines] (-1.1,-1.1) grid (1.1,1.1);
		\draw[draw,fill] (0,0) circle(0.4mm);
		\foreach \i/\j in {0/1,-1/0,0/-1,1/0,-1/1}{
			\draw[fan arrow] (0,0) -- (\i,\j);
		};
		\node[seed]  at (0,1.3) {};
		\node[seed] at (-1.2,1.2) {};
		\node[seed]  at (-1.3,0) {};
		\node[seed] at (-0.15,-1.3) {};
		\node[seed] at (0.15,-1.3) {};
		\node[seed] at (1.3,-0.15) {};
		\node[seed] at (1.3,0.15) {};
	\end{tikzpicture}\\
	E_3^{(1)}
	&\begin{tikzpicture}[scale=0.6,baseline=(o.base),
	seed/.style={star,star point ratio=1.9,draw,fill,inner sep=0pt,minimum size=1.1mm}]
		\node[] (o) at (0,0) {};
		\draw[help lines] (-1.1,-1.1) grid (1.1,1.1);
		\draw[draw,fill] (0,0) circle (0.4mm);
		\foreach \i/\j in {0/1,-1/1,-1/0,0/-1,1/-1,1/0}{
			\draw[fan arrow] (0,0) -- (\i,\j);
		};
		\node[seed] (w1) at (0,1.3) {};
		\node[seed] (w2) at (-1.2,1.2) {};
		\node[seed] (w3) at (-1.3,0) {};
		\node[seed] (w4) at (0,-1.3) {};
		\node[seed] (w5) at (1.2,-1.2) {};
		\node[seed] (w6) at (1.2,0) {};
	\end{tikzpicture}
	&\quad E_2^{(1)}
	&\begin{tikzpicture}[scale=0.6,baseline=(o.base),
	seed/.style={star,star point ratio=1.9,draw,fill,inner sep=0pt,minimum size=1.1mm}]
		\node[] (o) at (0,0) {};
		\draw[help lines] (-1.1,-1.1) grid (1.1,2.1);
		\draw[draw,fill] (0,0) circle (0.4mm);
		\foreach \i/\j in {-1/2,-1/0,0/-1,1/-1,1/0}{
			\draw[fan arrow] (0,0) -- (\i,\j);
		};
		\node[seed] (w1) at (-1.1,2.2) {};
		\node[seed] (w2) at (-1.3,0) {};
		\node[seed] (w3) at (0,-1.3) {};
		\node[seed] (w4) at (1.2,-1.2) {};
		\node[seed] (w5) at (1.3,0) {};
	\end{tikzpicture}
	&\quad E_1^{(1)}
	&\begin{tikzpicture}[scale=0.6,baseline=(o.base),
	seed/.style={star,star point ratio=1.9,draw,fill,inner sep=0pt,minimum size=1.1mm}]
		\node[] (o) at (0,0) {};
		\draw[help lines] (-1.1,-1.1) grid (1.1,2.1);
		\draw[draw,fill] (0,0) circle (0.4mm);
		\foreach \i/\j in {-1/2,-1/-1,1/-1,1/0}{
			\draw[fan arrow] (0,0) -- (\i,\j);
		};
		\node[seed] (w1) at (-1.1,2.2) {};
		\node[seed] (w2) at (-1.2,-1.2) {};
		\node[seed] (w3) at (1.2,-1.2) {};
		\node[seed] (w4) at (1.3,0) {};
	\end{tikzpicture}
	&\quad E_1^{(1)'}
	&\begin{tikzpicture}[scale=0.6,baseline=(o.base),
	seed/.style={star,star point ratio=1.9,draw,fill,inner sep=0pt,minimum size=1.1mm}]
		\node[] (o) at (0,0) {};
		\draw[help lines] (-1.1,-2.1) grid (1.1,2.1);
		\draw[draw,fill] (0,0) circle (0.4mm);
		\foreach \i/\j in {-1/2,-1/0,1/-2,1/0}{
			\draw[fan arrow] (0,0) -- (\i,\j);
		};
		\node[seed] (w1) at (-1.1,2.2) {};
		\node[seed] (w2) at (-1.3,0) {};
		\node[seed] (w3) at (1.1,-2.2) {};
		\node[seed] (w4) at (1.3,0) {};
	\end{tikzpicture}
	&\quad E_0^{(1)}
	&\begin{tikzpicture}[scale=0.6,baseline=(o.base),
	seed/.style={star,star point ratio=1.9,draw,fill,inner sep=0pt,minimum size=1.1mm}]
		\node[] (o) at (0,0) {};
		\draw[help lines] (-1.1,-1.1) grid (2.1,2.1);
		\draw[draw,fill] (0,0) circle (0.4mm);
		\foreach \i/\j in {-1/2,-1/-1,2/-1}{
			\draw[fan arrow] (0,0) -- (\i,\j);
		};
		\node[seed] (w1) at (-1.1,2.2) {};
		\node[seed] (w2) at (-1.2,-1.2) {};
		\node[seed] (w3) at (2.2,-1.1) {};
	\end{tikzpicture}
	\end{alignat*}
\caption{Representatives of the mutation equivalence classes of seeds of $q$-Painlev\'{e} type. The numbers of marks at the end of each vector stand for the multiplicity of the vector in the seed. The symbol at the left of each seed is the symmetry type $R^{\perp}$ of the seed.}
\label{fig:seeds of q-P type}
\end{figure}
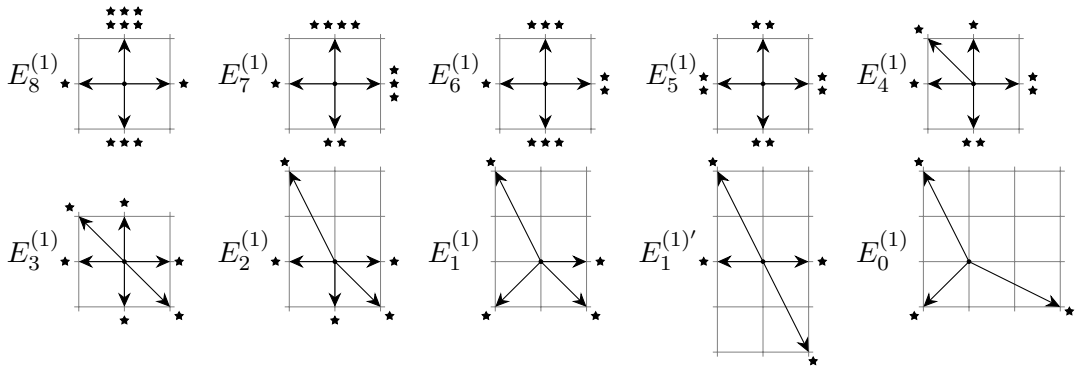

Realizing $q$-Painlev\'{e} systems using the cluster theory has the following advantages:
\begin{itemize}
\item Cluster varieties have positivity in the sense that they can be defined over the semiring $\NN$.
So $q$-Painlev\'{e} systems can be defined over $\NN$.
In particular, we obtain $q$-Painlev\'{e} systems over any semifield.
It is worth mentioning that $\ZZ^{\mathrm{Trop}}$-valued points of a cluster variety play an important role in the construction of theta bases of its Langlands dual cluster algebra \cite{GHKK}, where $\ZZ^{\mathrm{Trop}} = (\ZZ \sqcup \{ -\infty \}, {\max},{+})$ is the max-plus semifield of integers. 
\item Cluster Poisson varieties have quantization \cite{FG}.
Using this quantization, we obtain quantum $q$-Painlev\'{e} systems.
Bershtein, Gavrylenko, and Marshakov present a formal solution of a quantum $q$-Painlev\'{e} equation of type $E_1^{(1)'}$ using $q$-deformed conformal blocks or $5$-dimensional Nekrasov functions \cite{BGM}.
\end{itemize}

\begin{remark}
	There is a relationship between cluster varieties and dimer models, called cluster integrable systems \cite{GoncharovKenyon}.
	In \cite{BGM}, they find quivers associated with $q$-Painlev\'{e} systems based on cluster integrable systems, except for types $E_7^{(1)}$ and $E_8^{(1)}$.
	This method seems to be different from the method in this paper based on toric geometry.
	One advantage of our method is that we can deal with all the ten types in a unified way.
	It would be interesting to study the relation between theses two methods.
\end{remark}

This paper is organized as follows.
In Section \ref{section:preliminaries},
we review basis definitions in the cluster theory.
We define cluster Poisson varieties and the action of cluster modular groups on these varieties.
In Section \ref{section:seeds in rank 2}, we study seeds in lattices of rank $2$.
We define a symmetric bilinear form associated with a free cover of a full dimensional primitive seed in lattices of rank $2$.
We show that this symmetric bilinear form is invariant under seed cluster transformations.
In Section \ref{section:q-P type},
we introduce seeds of $q$-Painlev\'{e} type.
We say that a seed is of $q$-Painlev\'{e} type if the symmetric bilinear form defined in Section \ref{section:seeds in rank 2} is negative semi-definite but not negative definite.
By the results in Section \ref{section:seeds in rank 2}, we see that this notion is invariant under cluster transformations.
For a seed of $q$-Painlev\'{e} type, we define its symmetry type $R^{\perp}$.
In Section \ref{section:fano},
we classify the mutation equivalence classes of seeds of $q$-Painlev\'{e} type.
This classification is based on a classification for Fano polygons given by Kasprzyk, Nill, and Prince \cite{KNP2017}.
In fact, we associate Fano polygons with seeds of $q$-Painlev\'{e} type by using null root for these seeds.
It turns out that there are exactly ten mutation equivalence classes, and this classification agree with the classification of the $q$-Painlev\'{e} equations given by Sakai \cite{Sakai}.
In Section \ref{section:qP on cluster}, we show that (the opposite of) the groups of Cremona isometries are embedded in cluster modular groups.
This induces the (right) action of the groups of Cremona isometries on cluster Poisson varieties,
and we get $q$-Painlev\'{e} systems on cluster Poisson varieties.
As an example, we realize the sixth $q$-Painlev\'{e} system as an alternating actions of two involutive cluster transformations on a cluster Poisson variety.
In Appendix \ref{section:cluster data}, we provide basic data for seeds associated with $q$-Painlev\'{e} equations.
Some proofs in the body of this paper depend on the computations in this appendix.

\begin{table}[t]
	\centering
	\begin{tabular}{l|l}
		cluster theory & $q$-Painlev\'{e} equations\\ \hline
		seed &  blowing-down structure \\
		mutation (cluster transformation) & change of blowing-down structure \\
		mutation equivalence class & surface/symmetry type $R/R^{\perp}$\\
		cluster Poisson variety &
		family of $R$-surfaces\\
		orthogonal lattice $K$ & root lattice $Q(R^{\perp})$\\
		$\phi \in \Hom_{\ZZ}(K,\CC^{\times})$ & period point of  $R$-surface\\
		action of cluster modular group & action of group of Cremona isometries
	\end{tabular}
	\caption{The dictionary between the cluster theory and $q$-Painlev\'{e} equations. This dictionary is based on toric geometry.}
	\label{table:dictionary}
\end{table}

\begin{acknowledgment}
	I would like to thank Yuji Terashima, Teruhisa Tsuda, Tsukasa Ishibashi, and Shunsuke Kano for valuable discussions.
	This work is supported by JSPS KAKENHI Grant Number JP18J22576 and JP21J00050.
\end{acknowledgment}

\section{Preliminaries on cluster varieties}
\label{section:preliminaries}
We review the definition of cluster varieties and the action of cluster modular groups on cluster varieties.
We basically follow the formulation in \cite{FG} and \cite{GHK2015Birational}, except for the following minor modifications:
\begin{itemize}
\item We only define $\mathcal{X}$-varieties, and not define $\mathcal{A}$-varieties.
\item We only deal with cluster varieties of skew-symmetric type, not skew-symmetrizable type.
\item Our ambient lattices are torsion-free abelian groups, not just finitely generated free abelian groups (see Remark \ref{rem:torsion-free}).
\item We allow seed isomorphisms to reverse a sign of a skew-symmetric form as in \cite{ISSV}.
\item Mutations are equipped with signs as in \cite{Keller2011}.
\end{itemize}

\paragraph*{Seeds and mutations}
In this paper, a \emph{lattice} means a torsion-free abelian group, written additively, equipped with a skew-symmetric bilinear form.
For a lattice $N$, we denote its skew-symmetric bilinear form by $\{ {\cdot},{\cdot} \}: N \times N \to \ZZ$.

\begin{definition}\label{ref:seed}
	Given a lattice $N$, a finite multiset on $N$ is called a \emph{seed} in $N$.
\end{definition}

A seed is sometimes called an \emph{unlabeled seed}.
On the other hand, a finite tuple of elements of $N$ is called a \emph{labeled seed} in $N$.
For a labeled seed $(e_i)_{i \in I}$ in $N$, we have a seed in $N$ obtained by forgetting label, which we denote by $[(e_i)_{i \in I}]$.
We have a bijection
\begin{equation*}
	\{ \text{seed in $N$} \} \cong \{ \text{labeled seed in $N$} \} / \langle \text{change of index sets} \rangle,
\end{equation*}
where the change of index sets we mean the equivalence relation such that $(e_i)_{i \in I}$ and $(f_i)_{i \in I'}$ are equivalent if there is a bijection $\sigma:I \to I'$ such that $e_i = f_{\sigma(i)}$ for any $i \in I$.
For a seed $\bi$, a labeled seed $(e_i)_{i \in I}$ such that $\bi = [ (e_i)_{i \in I}]$ is called a \emph{labeling} of $\bi$.

Given a lattice $N$, an element $v \in N$, and a sign $\varepsilon = {+}$ or ${-}$, we define a piecewise linear map $\mu_v^{\varepsilon}: N \to N$ by
\begin{equation*}
	\mu_{v}^{\varepsilon} (n) \coloneqq n + \maxzero{ \varepsilon \{n, v\}} v,
\end{equation*}
where $\maxzero{x} \coloneqq \max(x,0)$.

Given a multiset $A$ on a set $X$ and an element $a \in X$, we denote by $A \sqcup \{ a\}$ the multiset on $X$ obtained from $A$ by adding $1$ to the multiplicity of $a$ in $A$,
and denote by $A \setminus \{ a \}$ the multiset on $X$ obtained from $A$ by subtracting $1$ from the multiplicity of $a$ in $A$.
Given a multiset $A$ on $X$ and a function $f:X \to Y$, we denote by $f(A)$ the multiset on $Y$ obtained by applying $f$ to each element of $A$.

\begin{definition}\label{def:seed mutation}
	A \emph{seed mutation} is a tuple $(\bi,\bi',v,\varepsilon)$ such that
	\begin{itemize}
	\item $\bi$ and $\bi'$ are seeds in the same lattice;
	\item $v \in \bi$;
	\item $\varepsilon = \pm$ is a sign;
	\item  $\bi$ and $\bi'$ are related by the following formula:
	\begin{equation*}
		\bi' = (\mu_v^{\varepsilon}(\bi) \setminus \{ v \} ) \sqcup \{ -v\}.
	\end{equation*}
	\end{itemize}
	We denote by $\mu_v^{\varepsilon}:\bi \to \bi'$ the seed mutation $(\bi,\bi',v,\varepsilon)$.
\end{definition}

\begin{definition}\label{def:seed isom}
A \emph{seed isomorphism} is a tuple $(\bi,\bi',\sigma,\varepsilon)$ such that
\begin{itemize}
\item $\bi$ and $\bi'$ are seeds in lattices $N$ and $N'$, respectively;
\item $\sigma : N \to N'$ is an isomorphism of abelian groups;
\item $\bi' = \sigma(\bi)$;
\item $\sigma$ preserves the skew-symmetric forms after multiplying $\varepsilon$, that is, the following diagram commutes:
\begin{equation*}
	\begin{tikzpicture}[baseline=(Z1).base,scale=1.3]
	  \node(N1) at (0,0) {$N \times N$};
	  \node(N2) at (2,0) {$N' \times N'$};
	  \node(Z1) at (0,-1) {$\ZZ$};
	  \node(Z2) at (2,-1) {$\ZZ$};
	  \begin{scope}[every node/.style={midway,auto,font=\scriptsize}]
	  \draw[->] (N1) to node{$\sigma \times \sigma$} (N2);
	  \draw[->] (Z1) to node[swap]{$\varepsilon$} (Z2);
	  \draw[->] (N1) to node[swap]{$\{ {\cdot} ,{\cdot}\}$} (Z1);
	  \draw[->] (N2) to node{$\{ {\cdot} ,{\cdot}\}'$} (Z2);
	  \end{scope}
	\end{tikzpicture}
\end{equation*}
\end{itemize}
We denote by $\varepsilon \sigma:\bi \to \bi'$ the seed isomorphism $(\bi,\bi',\sigma,\varepsilon)$.
The seed isomorphism $+ \sigma:\bi \to \bi'$ will be denoted by $\sigma:\bi \to \bi'$.
\end{definition}

We denote by $\cat{Seed}$ the free groupoid whose objects are the seeds, and morphisms are generated by the seed mutations and the seed isomorphisms.
A morphism in $\cat{Seed}$ is called a \emph{seed cluster transformation}.
If $\mathbf{c}$ is a seed cluster transformation from $\bi$ to $\bi'$,
we denote it by $\mathbf{c}: \bi \to \bi'$.
We say that seeds $\bi$ and $\bi'$ are \emph{mutation equivalent} if there is a seed cluster transformation from $\bi$ to $\bi'$.

The following lemma is obvious.
\begin{lemma}\label{lemma:unique target seed}
	Suppose that $\bi$ is a seed in $N$.
	\begin{enumerate}
	\item For any $v \in \bi$ and sign $\varepsilon$, there is a unique seed $\bi'$ in $N$ such that $\mu_v^{\varepsilon}:\bi \to \bi'$ is a seed mutation.
	\item For any isomorphism $\sigma : N \to N'$ and sign $\varepsilon$, there is a unique seed $\bi'$ in $N'$ such that $\varepsilon \sigma:\bi \to \bi'$ is a seed isomorphism.
	\item For any $v \in \bi$ and sign $\varepsilon$, there is a unique seed $\bi'$ in $N$ such that $\mu_{-v}^{\varepsilon}:\bi' \to \bi$ is a seed mutation.
	\item For any isomorphism $\sigma : N' \to N$ and sign $\varepsilon$, there is a unique seed $\bi'$ in $N'$ such that $\varepsilon \sigma:\bi' \to \bi$ is a seed isomorphism.
	\end{enumerate}
\end{lemma}

\begin{remark}\label{rem:labeled seed mutation}
	Similarly to Definition \ref{def:seed mutation}, we say that a tuple $((e_i)_{i \in I}, (e_i')_{i \in I},k,\varepsilon)$ is a \emph{labeled seed mutation} if $(e_i)_{i \in I}$ and $(e_i')_{i \in I}$ are labeled seeds in the same lattice with the same index set, $k \in I$, $\varepsilon$ is a sign, and the following relation holds:
	\begin{equation*}
		e_i' = 
		\begin{cases}
			e_i + [\varepsilon \{e_i,e_k\} ]_{+} e_k & \text{if $i\neq k$},\\
			-e_k & \text{if $i=k$}.
		\end{cases}
	\end{equation*}
	In this case, it is easy to see that $([(e_i)_{i \in I}], [(e_i')_{i \in I}],e_k,\varepsilon)$ is a seed mutation.
	Moreover, Lemma \ref{lemma:unique target seed} also holds for labeled seed mutations.
	We will use labeled seed mutations to define seed mutations specifically in Appendix \ref{section:cluster data}.
\end{remark}

\begin{remark}\label{rem:torsion-free}
	In usual literature of cluster algebras, a lattice $N$ is assumed to be a finitely generated free abelian group, and seeds in $N$ are assumed to be bases in $N$.
	We do not impose these assumptions in order to treat seeds for $q$-Painlev\'{e} systems in a unified way.
	See Section \ref{section:q-P system on cluster variety}.
	In particular, we remark that a seed for the $q$-Painlev\'{e} system of type $E_1^{(1)'}$ or $E_0^{(1)}$ does not span the underlying lattice $N$, although it spans $N \otimes_{\ZZ} \QQ$ over $\QQ$.
	See Figure \ref{fig:seeds of q-P type}.
\end{remark}

\paragraph*{Cluster modular groupoid}
For any torsion-free abelian group $N$, we denote by $D(N)$ the affine scheme over $\CC$ associated with the group algebra of $N$:
\begin{equation*}
	D(N) \coloneqq \Spec \CC [N].
\end{equation*}
Note that $\CC[N]$ is an integral domain since $N$ is a torsion-free abelian group.
We denote by $\cat{Scheme_{rat}}/\CC$ the category whose objects are the integral schemes that are separated over $\CC$, and morphisms are the dominant rational maps over $\CC$.

\begin{definition}
	We define a functor
	\begin{equation*}
		\mathcal{X} : \cat{Seed}  \to \cat{Scheme_{rat}}/\CC
	\end{equation*}
	as follows.
	\begin{itemize}
	\item We define $\mathcal{X}(\mathbf{i}) \coloneqq D(N)$ for a seed $\mathbf{i}$ in a lattice $N$.
	\item For a seed mutation $\mu_v^{\varepsilon}: \bi \to \bi'$, we define a birational map
	\begin{equation*}
		\mathcal{X}(\mu_v^{\varepsilon}) : D(N)  \dashrightarrow D(N),
	\end{equation*}
	via the following isomorphism of rings:
	\begin{equation*}
		\mathcal{X}(\mu_v^{\varepsilon})^{*} : \CC[N]_{1+z^{\varepsilon v}} \to \CC[N]_{1+z^{\varepsilon v}},\quad
		\mathcal{X}(\mu_v^{\varepsilon})^{*} z^n \coloneqq z^n (1+z^{\varepsilon v})^{-\{ n, v \}},
	\end{equation*}
	where $\CC[N]_{f}$ is the localization of $\CC[N]$ by $f \in \CC[N]$.
	Explicitly, we have an isomorphism between dense open subsets $\mathcal{X}(\mu_v^{\varepsilon}):U \to U$, where $U = D(N) \setminus V(1+z^{\varepsilon v}) \cong \Spec \CC[N]_{1+z^{\varepsilon v}}$.
	\item For a seed isomorphism $\varepsilon \sigma:  \bi \to \bi'$,
	we define an isomorphism
	\begin{equation*}
		\mathcal{X}(\varepsilon \sigma): D(N) \to D(N')
	\end{equation*}
	via the following isomorphism of rings:
	\begin{equation*}
		\mathcal{X}(\varepsilon \sigma)^*: \CC[N'] \to \CC[N],
		\quad\mathcal{X}(\varepsilon \sigma)^{*} z^{n} \coloneqq z^{\varepsilon \sigma^{-1}(n)}.
	\end{equation*}
	\end{itemize}
\end{definition}

For a seed cluster transformation $\mathbf{c}:\bi \to \bi'$,
we define its \emph{source} and \emph{target} by $\source(\mathbf{c}) \coloneqq \bi$ and $\target(\mathbf{c}) \coloneqq \bi'$.
We say that seed cluster transformations $\mathbf{c}_1$ and $\mathbf{c}_2$ are equivalent if $\source(\mathbf{c}_1)=\source(\mathbf{c}_2)$,
$\target(\mathbf{c}_1)=\target(\mathbf{c}_2)$, and
$\mathcal{X}(\mathbf{c}_1)=\mathcal{X}(\mathbf{c}_2)$.

\begin{definition}
	The groupoid $\cat{CMG} \coloneqq \cat{Seed} {/} \langle \text{equivalence} \rangle$ is called the \emph{cluster modular groupoid}.
	In other words, $\cat{CMG}$ is the groupoid 
	whose objects are the seeds,
	and morphisms are the seed cluster transformations modulo equivalence.
	The fundamental group $\Gamma_{\bi} \coloneqq \Aut_{\cat{CMG}}(\bi)$ of the cluster modular groupoid based at a seed $\bi$ is called the \emph{cluster modular group} at $\bi$.
\end{definition}

A seed cluster transformation $\bc:\bi \to \bi'$ gives an isomorphism $\Gamma_{\bi} \to \Gamma_{\bi'}$ by $\bc' \mapsto \mathbf{c}\circ \bc' \circ \mathbf{c}^{-1}$.

\begin{example}[{\cite[Section 6]{FZ1}},{\cite[Section 2.5]{FG}}]
	Let $N=\ZZ^2$.
	Let $\bi = \{e_i \mid i =1,2 \}$ be the standard basis of $N$.
	We define a skew-symmetric form on $N$ by $\{ e_1 , e_2 \} = -\{ e_2 , e_1 \} = 1$.
	Then the cluster modular group at $\bi$ is given by
	\begin{align*}
		\Gamma_{\bi} &= \langle \sigma \circ \mu_{e_1}^+ \rangle \rtimes \langle -\iota \rangle\\
		&\cong \ZZ/5\ZZ \rtimes \ZZ/2\ZZ,
	\end{align*}
	where
	$\sigma: (e_1,e_2) \mapsto (-e_2,e_1)$ and
	$\iota: (e_1,e_2) \mapsto (e_2,e_1)$.
\end{example}

We now see some relations in the cluster modular groupoid.
Given a seed $\bi$ in $N$, an element $v \in N$, and a sign $\varepsilon$,
we define a seed isomorphism
$t_v^{\varepsilon}:\bi \to \bi'$ by the isomorphism $N \to N$ given by
$t_v^{\varepsilon}(n) = n + \varepsilon \{ n, v\} v$.

\begin{proposition}
	Seed mutations and seed isomorphisms satisfy the following relations in the cluster modular groupoid:
	\begin{alignat*}{2}
		\mu_{-v}^+ &\circ \mu_{v}^{-} &&= \mu_{-v}^- \circ \mu_{v}^+ = \id,\\ 
		\mu_{-v}^{\varepsilon} &\circ \mu_{v}^{\varepsilon} &&= t_{v}^{\varepsilon},\\
		\varepsilon \sigma &\circ \varepsilon' \sigma'
		&&=\varepsilon \varepsilon'\sigma \sigma',\\
		\varepsilon \sigma &\circ \mu_{v}^{\varepsilon'} &&=
		\mu_{\sigma(v)}^{\varepsilon \varepsilon'} \circ \varepsilon \sigma.
	\end{alignat*}
\end{proposition}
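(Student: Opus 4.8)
The plan is to establish each identity as an equality of morphisms in $\mathcal{G}$; since $\mathcal{X}\colon\mathcal{G}\to\mathrm{Pos}$ is faithful, this reduces, for the two seed cluster transformations being equated, to checking that they have the same source and target seed and that they induce the same positive birational map under $\mathcal{X}$. The first is a combinatorial computation with the vectors $e_i$ and exchange matrices; the second uses the pull-back formulas $\mathcal{X}(\mu_k^{\varepsilon})^{*}z^{n}=z^{n}(1+z^{\varepsilon e_k})^{\langle v_k,n\rangle}$ and $\mathcal{X}(\varepsilon\sigma)^{*}z^{\sigma(n)}=z^{\varepsilon n}$, functoriality in the contravariant form $\mathcal{X}(\mathbf{c}_2\circ\mathbf{c}_1)^{*}=\mathcal{X}(\mathbf{c}_1)^{*}\circ\mathcal{X}(\mathbf{c}_2)^{*}$, and the elementary facts $\maxzero{x}-\maxzero{-x}=x$ for $x\in\QQ$ and $\langle v_k,e_k\rangle=\{e_k,e_k\}=0$. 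The relation $\varepsilon\sigma\circ\varepsilon'\sigma'=\varepsilon\varepsilon'\sigma\sigma'$ is purely formal and I would do it first: the two commuting squares in the definition compose, exhibiting the composite as a seed isomorphism with index bijection $\sigma^{\sharp}\circ(\sigma')^{\sharp}$, lattice isomorphism $\sigma^{\flat}\circ(\sigma')^{\flat}$, and sign $\varepsilon\varepsilon'$; and $\mathcal{X}(\varepsilon\sigma\circ\varepsilon'\sigma')^{*}z^{\sigma\sigma'(n)}=\mathcal{X}(\varepsilon'\sigma')^{*}\mathcal{X}(\varepsilon\sigma)^{*}z^{\sigma\sigma'(n)}=\mathcal{X}(\varepsilon'\sigma')^{*}z^{\varepsilon\sigma'(n)}=z^{\varepsilon\varepsilon'n}$, which is the defining formula for $\mathcal{X}(\varepsilon\varepsilon'\sigma\sigma')$.

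For the mutation relations, write $\mu_k^{\varepsilon}\colon\mathbf{i}\to\mathbf{i}'$, so $e_k'=-e_k$, $v_k'=\{e_k',\cdot\}=-v_k$ and $\epsilon_{ik}'=-\epsilon_{ik}$. Applying $\mu_k^{\varepsilon'}$ to $\mathbf{i}'$ gives, for $i\neq k$,
\begin{align*}
  e_i'' = e_i + \maxzero{\varepsilon\epsilon_{ik}}\,e_k + \maxzero{-\varepsilon'\epsilon_{ik}}\,(-e_k),
\end{align*}
while $e_k''=e_k$. Taking $\varepsilon'=-\varepsilon$ collapses this to $e_i''=e_i$, so the target is $\mathbf{i}$; taking $\varepsilon'=\varepsilon$ and using $\maxzero{x}-\maxzero{-x}=x$ gives $e_i''=e_i+\varepsilon\epsilon_{ik}e_k$, the seed reached by $t_k^{\varepsilon}$. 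On coordinate rings the one nonformal input is $(1+z^{-\varepsilon e_k})^{-1}=z^{\varepsilon e_k}(1+z^{\varepsilon e_k})^{-1}$; combining it with $\mathcal{X}(\mu_k^{\varepsilon})^{*}z^{\pm e_k}=z^{\pm e_k}$ (immediate from $\langle v_k,e_k\rangle=0$), one computes $\mathcal{X}(\mu_k^{-\varepsilon}\circ\mu_k^{\varepsilon})^{*}z^{n}=z^{n}$ and $\mathcal{X}(\mu_k^{\varepsilon}\circ\mu_k^{\varepsilon})^{*}z^{n}=z^{n}(z^{e_k})^{\varepsilon\langle v_k,n\rangle}$; the latter is $\mathcal{X}(t_k^{\varepsilon})^{*}z^{n}$, since $t_k^{\varepsilon}$ has $\sigma^{\flat}=t_k^{\varepsilon}$, $(t_k^{\varepsilon})^{-1}=t_k^{-\varepsilon}$ and $t_k^{-\varepsilon}(n)=n+\varepsilon\langle v_k,n\rangle e_k$. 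This settles the first two relations.

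For $\varepsilon\sigma\circ\mu_k^{\varepsilon'}=\mu_{\sigma(k)}^{\varepsilon\varepsilon'}\circ\varepsilon\sigma$, let $\varepsilon\sigma\colon\mathbf{i}\to\tilde{\mathbf{i}}$ with $\tilde e_{\sigma(i)}=\sigma^{\flat}(e_i)$; then $\tilde\epsilon_{\sigma(i)\sigma(k)}=\varepsilon\epsilon_{ik}$, and the first commuting square gives $\langle\tilde v_{\sigma(k)},n\rangle=\{\sigma^{\flat}(e_k),n\}=\varepsilon\langle v_k,(\sigma^{\flat})^{-1}(n)\rangle$; note $\varepsilon\sigma$ is a valid seed isomorphism out of $\mathbf{i}'$ as well, that condition not involving the seed. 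Applying $\sigma^{\flat}$ to the $\mu_k^{\varepsilon'}$-mutated basis of $\mathbf{i}$ and comparing with the $\mu_{\sigma(k)}^{\varepsilon\varepsilon'}$-mutation of $\tilde{\mathbf{i}}$, the coefficient $\maxzero{\varepsilon\varepsilon'\tilde\epsilon_{\sigma(i)\sigma(k)}}=\maxzero{\varepsilon^{2}\varepsilon'\epsilon_{ik}}=\maxzero{\varepsilon'\epsilon_{ik}}$ matches the one coming from $\mu_k^{\varepsilon'}$ — which is exactly why the sign on the right must be $\varepsilon\varepsilon'$ — so both composites reach the same seed. Pushing $z^{n}$ through each composite, with $\mathcal{X}(\varepsilon\sigma)^{*}z^{\sigma^{\flat}(m)}=z^{\varepsilon m}$ and the identity for $\langle\tilde v_{\sigma(k)},n\rangle$, both sides equal $z^{\varepsilon(\sigma^{\flat})^{-1}(n)}(1+z^{\varepsilon'e_k})^{\varepsilon\langle v_k,(\sigma^{\flat})^{-1}(n)\rangle}$.

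I expect the difficulty to be organizational rather than conceptual: one must keep straight the contravariance of $\mathbf{c}\mapsto\mathcal{X}(\mathbf{c})^{*}$, the sign flips $e_k\mapsto-e_k$ and $v_k\mapsto-v_k$ at a mutation in direction $k$, and the index relabelling and twist of the skew form induced by $\sigma$. The one genuinely nonformal ingredient is the identity $(1+z^{-\varepsilon e_k})^{-1}=z^{\varepsilon e_k}(1+z^{\varepsilon e_k})^{-1}$, which is what makes all three mutation relations work.
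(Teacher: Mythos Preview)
Your proposal is correct and follows the same direct-calculation approach as the paper: verify that both sides of each relation have the same target seed and induce the same map under $\mathcal{X}$, exploiting faithfulness of $\mathcal{X}\colon\mathcal{G}\to\mathrm{Pos}$. The paper only writes out the fourth relation and declares the rest ``direct calculations''; your version is in fact more complete, since you also check the seed-level targets explicitly (which the paper suppresses) and spell out the key identity $(1+z^{-\varepsilon e_k})^{-1}=z^{\varepsilon e_k}(1+z^{\varepsilon e_k})^{-1}$ that makes the second relation work.
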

\begin{proof}
	These relations are proved by direct calculations.
	We only prove the fourth relation:
 	\begin{align*}
		\mathcal{X}(\varepsilon \sigma \circ \mu_v^{\varepsilon'})^* z^{\sigma(n)}
		&= \mathcal{X}(\mu_v^{\varepsilon'})^* z^{\varepsilon n} \\
		&= \bigl( z^n (1 + z^{\varepsilon' v}  )^{-\{ n,v \}} \bigr)^{\varepsilon}\\
		&=\mathcal{X}(\varepsilon \sigma)^* 
		\bigl(z^{\sigma(n)} (1 + z^{\varepsilon \varepsilon' \sigma(v)})^{- \{ \sigma(n),\sigma(v) \}} \bigr) \\
		&=\mathcal{X}(\mu_{\sigma(v)}^{\varepsilon \varepsilon'} \circ \varepsilon \sigma)^* z^{\sigma(n)}.
	\end{align*}
\end{proof}

\begin{corollary}\label{cor:cluster transformation standard form} 
	In the cluster modular groupoid, any seed cluster transformation $\mathbf{c}$ can be expressed as
	\begin{equation*}
		\mathbf{c}= \varepsilon \sigma \circ 
		\mu_{v_n}^+ \circ \dots \circ \mu_{v_1}^+
	\end{equation*}
	for some seed mutations $\mu_{v_1} ,\dots, \mu_{v_n}$ and seed isomorphism $\varepsilon \sigma$.
\end{corollary}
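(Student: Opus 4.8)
The plan is to use the relations from the preceding proposition to push all seed isomorphisms to the left and all sign-minus mutations away, starting from the standard-form description of morphisms in the free groupoid $\overline{\mathcal{G}}$. First I would recall that, by definition, every seed cluster transformation $\mathbf{c}$ is a composition of finitely many generators, each of which is either a seed mutation $\mu_k^{\varepsilon}$ or a seed isomorphism $\varepsilon\sigma$. I will argue by induction on the number $n$ of mutation factors appearing in such a composition (after freely reassociating), showing that modulo trivial seed cluster transformations one can always bring $\mathbf{c}$ into the asserted form $\varepsilon\sigma\circ\mu_{k_n}^{+}\circ\dots\circ\mu_{k_1}^{+}$.

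The first reduction is to eliminate negative signs on mutations. Given a factor $\mu_k^{-}$, I would like to replace it by $\mu_k^{+}$ up to an isomorphism. From the second relation $\mu_k^{\varepsilon}\circ\mu_k^{\varepsilon}=t_k^{\varepsilon}$ together with the first relation $\mu_k^{+}\circ\mu_k^{-}=\mathrm{id}$, one deduces $\mu_k^{-}=(t_k^{-})^{-1}\circ\mu_k^{+}$, or equivalently $\mu_k^{-}=\mu_k^{+}\circ (t_k^{+})^{-1}$ after a short manipulation; here $t_k^{\pm}$ are direct seed isomorphisms, and their inverses are again seed isomorphisms. Hence any $\mu_k^{-}$ can be rewritten as a composite of $\mu_k^{+}$ with a seed isomorphism. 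After doing this for every negative-sign mutation, $\mathbf{c}$ becomes a composite involving only the mutations $\mu_{k}^{+}$ and (possibly many) seed isomorphisms interleaved in some order.

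The second reduction is to collect the seed isomorphisms on the left. This is exactly what the third and fourth relations of the proposition provide: the fourth relation $\varepsilon\sigma\circ\mu_k^{\varepsilon'}=\mu_{\sigma(k)}^{\varepsilon\varepsilon'}\circ\varepsilon\sigma$ lets me move a seed isomorphism to the left past a mutation, at the cost of relabeling the direction of the mutation (via $\sigma^{\sharp}$) and possibly flipping its sign (if $\varepsilon=-$). If a sign flip occurs, I apply the first reduction again to restore the $+$ sign; since the total number of mutation factors does not increase under either step, this process terminates. Finally, the third relation $\varepsilon\sigma\circ\varepsilon'\sigma'=\varepsilon\varepsilon'\,\sigma\sigma'$ lets me fuse all the accumulated seed isomorphisms on the left into a single $\varepsilon\sigma$. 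What remains is precisely $\varepsilon\sigma\circ\mu_{k_n}^{+}\circ\dots\circ\mu_{k_1}^{+}$.

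The main obstacle is bookkeeping the termination of the interleaved process: each time a seed isomorphism with $\varepsilon=-$ is pushed past a mutation, the sign of that mutation flips and must be corrected, which reintroduces a seed isomorphism to the right of some mutations. I expect this to be handled cleanly by a double induction — an outer induction on the number of mutation factors and an inner induction on the number of seed-isomorphism factors lying to the right of the leftmost mutation — since the first reduction does not create new mutation factors and the move-past step strictly decreases the inner quantity. One should also take care that all identities used hold in $\mathcal{G}$ (i.e.\ modulo trivial transformations), not merely in $\overline{\mathcal{G}}$, which is exactly the setting of the cited proposition, so no extra care beyond working in $\mathcal{G}$ is needed.
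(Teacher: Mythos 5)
Your argument is correct and is exactly the intended one: the paper states this as an immediate corollary of the preceding proposition, relying on the same three reductions you describe (eliminate $\mu_k^{-}$ via $\mu_k^{-}=\mu_k^{+}\circ(t_k^{+})^{-1}$, push isomorphisms left with the fourth relation, fuse them with the third). One small remark: your intermediate form $(t_k^{-})^{-1}\circ\mu_k^{+}$ has the wrong target seed in the groupoid, but the version you actually use, $\mu_k^{-}=\mu_k^{+}\circ(t_k^{+})^{-1}$, is the correct one, so nothing is affected.
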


\paragraph*{Cluster varieties and the action of cluster modular groups}
We will now define cluster varieties.
The main tool needed for the definition is the gluing construction of schemes using birational map.
We first generalize the gluing construction given by \cite[Proposition 2.4]{GHK2015Birational} so that it can be applied to our setting.
Give a scheme $S$,
we denote by $\cat{Scheme_{rat}}/S$ the category whose objects are the integral schemes that are separated over $S$, and morphisms are the dominant rational maps over $S$.
Given a pair $(i,j)$ of objects in a groupoid, we write $i \sim j$ if there is a morphism from $i$ to $j$.
We say that a groupoid is \emph{thin} (resp. \emph{connected}) if there is at most (resp. at least) one morphism for each pair of objects.

\begin{lemma}\label{lemma:gluing data of birational maps}
	Suppose that $\cat{C}$ is a thin groupoid, and $\alpha:\cat{C} \to \cat{Scheme_{rat}}/S$ is a functor.
	For any pair of objects $(i,j)$ in $\cat{C}$ such that $i \sim j$, we denote by $f_{ij}:\alpha(i) \dashrightarrow \alpha(j)$ the dominant rational map obtained by applying $\alpha$ to the unique morphism from $i$ to $j$.
	For any pair of objects $(i,j)$ in $\cat{C}$ such that $i \sim j$,
	we define $U_{ij}$ to be the union of open sets $U \subseteq \alpha(i)$ such that there is an open immersion $\varphi : U \to \alpha(j)$ over $S$ such that $(\varphi,U)$ is a representative of $f_{ij}$,
	and we define $\varphi_{ij}:U_{ij} \to \alpha(j)$ to be the morphism over $S$ obtained by gluing all open immersions $\varphi:U \to \alpha(j)$ over $S$ such that $(\varphi,U)$ is a representative of $f_{ij}$.
	We also define $U_{ij}$ and $\varphi_{ij}$ to be empty if $i \not\sim j$.
	Then:
	\begin{enumerate}
	\item	$\varphi_{ij}$ is an open immersion whose image is $U_{ji}$ for any $i,j$.
	\item $U_{ii} = \alpha(i)$ and $\varphi_{ii} = \id_{\alpha(i)}$ for any $i$.
	\item $\varphi_{ij}(U_{ij} \cap U_{ik}) = U_{ji} \cap U_{jk}$ for any $i,j,k$.
	\item $\varphi_{jk} \circ \varphi_{ij} = \varphi_{ik}$ on $U_{ij} \cap U_{ik}$ for any $i,j,k$.
	\end{enumerate}
	Consequently, we have a scheme $X$ over $S$ obtained by gluing schemes $\alpha(i)$ by open immersions $\varphi_{ij}$.
	Moreover, if $\cat{C}$ is connected and non-empty, then $X$ is integral.
\end{lemma}
\begin{proof}
	By \cite[8.2.8]{ega1}, $\varphi_{ij}$ is an open immersion.
	It is easy to see that $\varphi_{ij}(U_{ij}) = U_{ji}$.
	Thus (1) is proved. (2) is obvious.
	To prove (3) and (4), it suffices to show that $\varphi_{ij}(U_{ij} \cap U_{ik}) \subseteq U_{ji} \cap U_{jk}$.
	Let $V = \varphi_{ij}(U_{ij} \cap U_{ik})$.
	Suppose that $ \varphi_{ij}(x) \in V$.
	Then $\varphi_{ij}(x) \in U_{ij}$ is obvious.
	Since $(\varphi_{ik}|_{\varphi_{ji(V)}} \circ \varphi_{ji}|_{V},V)$ is a representative of $f_{ji}$, and $\varphi_{ik}|_{\varphi_{ji(V)}} \circ \varphi_{ji}|_{V}$ is an open immersion over $S$, we have $\varphi_{ij}(x) \in U_{jk}$.
	The remaining statements follow from \cite[Exericies II 2.12]{Hartshorne}.
\end{proof}

We now use this construction to define cluster varieties.
Let $\bi$ be a seed.
Let $\bi \downarrow \cat{Seed}$ be the groupoid whose objects are the morphisms
in $\cat{Seed}$ with source $\bi$,
and morphisms from $\mathbf{c_1}:\bi \to \bi_1$
to $\bc_2:\bi \to \bi_2$ are the morphisms $\bc: \bi_1 \to \bi_2$ in $\cat{Seed}$ such that $\bc_2 = \bc \circ \bc_1$.
The groupoid $\bi \downarrow \cat{Seed}$ is thin and connected since $\Hom_{(\bi \downarrow \cat{Seed})}(\bc_1,\bc_2)$ is a singleton whose element is $\bc_2 \circ\bc_1^{-1} : \bi_1 \to \bi_2$.
We have a functor $(\bi \downarrow \cat{Seed}) \to \cat{Seed}$
that sends seed cluster transformations to their target seeds,
and morphisms to themselves.
By composing this functor with $\mathcal{X}:\cat{Seed} \to \cat{Scheme_{rat}}/\CC$,
we get the functor $(\bi \downarrow \cat{Seed}) \to \cat{Scheme_{rat}}/\CC$.
Applying Lemma \ref{lemma:gluing data of birational maps} to this functor, we obtain an integral scheme over $\CC$, which we denote by $\mathcal{X}_\bi$:
\begin{equation*}
	\mathcal{X}_{\bi}
	\coloneqq \bigcup_{\bc \in (\bi \downarrow \cat{Seed})} \mathcal{X}(\target(\bc)).
\end{equation*}
The scheme $\mathcal{X}_\bi$ is called the \emph{cluster variety}
associated with $\bi$.
This is also called the \emph{cluster $\mathcal{X}$-variety} or the \emph{cluster Poisson variety} associated with $\bi$.
The cluster variety $\mathcal{X}_\bi$ is not separated over $\CC$ in general.

For any seed cluster transformation $\bc: \bi \to \bi'$,
we have an isomorphism $(\bi \downarrow \cat{Seed}) \to (\bi' \downarrow \cat{Seed})$
given by $\bc' \mapsto \bc' \circ \bc^{-1}$.
This induces an isomorphism between the corresponding cluster varieties:
\begin{equation*}
	\mathcal{X}_{\bi} \to \mathcal{X}_{\bi'}.
\end{equation*}
In particular, any element $\bc \in \Gamma_{\bi}$
acts on $\mathcal{X}_{\bi}$ as an automorphism,
and we obtain the action of the cluster modular group on the cluster variety:
\begin{equation*}
	\Gamma_{\bi}  \to \Aut(\mathcal{X}_{\bi}).
\end{equation*}
The open immersion $D(N) \to \mathcal{X}_{\bi}$ corresponding to the object $\id : \bi \to \bi$ in $\bi \downarrow \cat{Seed}$ is called the \emph{initial affine chart} of $\mathcal{X}_{\bi}$.
On this chart, the action of $\bc \in \Gamma_{\bi}$
coincides with the birational map $\mathcal{X}(\bc): D(N) \dashrightarrow D(N)$.
On the other hand,
we emphasize that the action on the whole $\mathcal{X}_{\bi}$ is given by isomorphisms, not just birational maps.

Given a lattice $N$,
we denote by $K$ the orthogonal complement of itself:
\begin{equation}\label{eq:def of K}
	K = \{ n \in N \mid \forall n' \in N,\ \{ n ,n' \} =0 \}. 
\end{equation}
We define a functor $(\cdot)_0:\cat{Seed} \to \cat{Seed}$ as follows.
Given a seed $\bi$ in $N$, 
we define a seed $\bi_0$ in $K$ to be the image of $\bi$ under the zero morphism $N \to K$.
Given a seed mutation $\mu_v^{\varepsilon}:\bi \to \bi'$, we define $(\mu_v^{\varepsilon})_0$ to be the seed mutation $\mu_{0}^{\varepsilon}: \bi_0 \to \bi_0'$.
Given a seed isomorphism $\varepsilon \sigma : \bi \to \bi$, we define $(\varepsilon \sigma)_0: \bi_0 \to \bi_0'$ by $\varepsilon_0 \coloneqq \varepsilon$ and $\sigma_0 \coloneqq \sigma|_{K}$.

\begin{lemma}\label{lemma:X to T K*}
	Let $\bi$ be a seed in $N$.
	Then $D(K) \cong \mathcal{X}_{\bi_0}$, where the isomorphism is given by the initial affine chart.
	Under this identification, we have a map $\lambda: \mathcal{X}_{\bi} \to D(K)$ induced by the inclusion $K \subseteq N$.	
\end{lemma}
\begin{proof}
	Noting that $\mathcal{X}((\mu_v^{\varepsilon})_0)^*$ is the identity map, the assertions follow from the following commutative diagrams:
	\begin{align*}
		\begin{tikzpicture}[scale=1.5,baseline=(K2).base]
			\node(K1) at (0,0) {$\CC[K]$};
			\node(K2) at (0,-1) {$\CC[K]$};
			\node(N1) at (1.5,0) {$\CC[N]_{1+z^{\varepsilon v}}$};
			\node(N2) at (1.5,-1) {$\CC[N]_{1+z^{\varepsilon v}}$};
			\begin{scope}[every node/.style={midway,auto,font=\scriptsize}]
			\draw[->] (K1) to node[swap]{$\id$} (K2);
			\draw[->] (N1) to node{$\mathcal{X}(\mu_v^{\varepsilon})^*$}(N2);
			\draw[->] (K1) to (N1);
			\draw[->] (K2) to (N2);
			\end{scope}
		\end{tikzpicture}\quad
		\begin{tikzpicture}[auto,scale=1.5,baseline=(K2).base]
			\node(K1) at (0,0) {$\CC[K']$};
			\node(K2) at (0,-1) {$\CC[K]$};
			\node(N1) at (1.5,0) {$\CC[N']$};
			\node(N2) at (1.5,-1) {$\CC[N]$};
			\begin{scope}[every node/.style={midway,auto,font=\scriptsize}]
			\draw[->] (K1) to node[swap]{$\mathcal{X}((\varepsilon \sigma)_0)^*$} (K2);
			\draw[->] (N1) to node{$\mathcal{X}(\varepsilon \sigma)^*$} (N2);
			\draw[->] (K1) to (N1);
			\draw[->] (K2) to (N2);
			\end{scope}
		\end{tikzpicture}
	\end{align*}
\end{proof}

In particular, the cluster modular group $\Gamma_{\bi}$ acts on $D(K)$,
and the map $\lambda: \mathcal{X}_{\bi} \to D(K)$ is equivariant with respect to the actions of $\Gamma_{\bi}$ on $\mathcal{X}_{\bi}$ and $D(K)$.

\begin{remark}
	Let $\bi$ be a seed in a lattice $N$.
	Then the cluster modular group $\Gamma_{\bi}$ and the cluster variety $\mathcal{X}_{\bi}$ are large in the sense that they are proper classes.
	However, we can identify them to small ones in the following way. 
	Let $\cat{Seed}_N$ be the subgroupoid of $\cat{Seed}$ generated by the seed mutations and the seed isomorphisms between seeds in $N$.
	Let $\Gamma_{\bi}^{\mathrm{small}}$ (resp. $\mathcal{X}_{\bi}^{\mathrm{small}}$) be the cluster modular group at $\bi$ (resp. the cluster variety associated with $\bi$) defined by using $\cat{Seed}_N$ instead of $\cat{Seed}$.
	Then Corollary \ref{cor:cluster transformation standard form} implies that the natural maps $\Gamma_{\bi}^{\mathrm{small}} \to \Gamma_{\bi}$ and $\mathcal{X}_{\bi}^{\mathrm{small}} \to \mathcal{X}_{\bi}$ are isomorphisms.
	The correspondence $\bi \mapsto \Gamma_{\bi}^{\mathrm{small}}$ (resp. $\bi \mapsto \mathcal{X}_{\bi}^{\mathrm{small}}$) induces a functor $\cat{CMG} \to \cat{Group}$ (resp. $\cat{CMG} \to \cat{Scheme}$).
\end{remark}

\begin{remark}
In the definition of cluster variety in \cite{GHK2015Birational},
their gluing data involve only seed mutations, not seed isomorphisms.
However, the resulting cluster variety
is the same as that defined in this paper thanks to Corollary \ref{cor:cluster transformation standard form}.
We use seed isomorphisms because we are interested in the action of the cluster modular group.
\end{remark}

\section{\texorpdfstring{$q$-Painlev\'{e} systems on cluster Poisson varieties}{q-Painlev\'{e} systems on cluster Poisson varieties}}
\label{section:q-P system on cluster variety}
In this section, we realize $q$-Painlev\'{e} systems on cluster Poisson varieties via toric geometry.
See \cite{CoxLittleSchenck,FultonToric} for the basics on toric geometry.
\subsection{\texorpdfstring{Seeds in lattices of rank $2$}{Seeds in lattices of rank 2}}
\label{section:seeds in rank 2}

\begin{definition}\label{def:rank 2}
	We say that a lattice $\bar{N}$ is \emph{of rank $2$} if $\bar{N}$ is a free abelian group of rank $2$, and the skew-symmetric bilinear form on $\bar{N}$ comes from an isomorphism $\wedge^2 \bar{N} \cong \ZZ$.
\end{definition}

For a lattice $\bar{N}$ of rank $2$, we can identify $\bar{N}$ and its dual $\Hom_{\ZZ}(\bar{N},\ZZ)$ by $n \mapsto  \cdot \wedge n $.

\begin{definition}\label{def:full and primitive}
	Let $\bs$ be a seed in a lattice $\bar{N}$.
	\begin{enumerate}
	\item We say that $\bs$ is \emph{primitive} if $w$ is primitive in $\bar{N}$ for any $w \in \bs$.
	\item We say that $\bs$ is \emph{full dimensional} if $\bar{N} \otimes_{\ZZ} \QQ$ is generated by $\bs$ as a vector space over $\QQ$.
	\end{enumerate}
\end{definition}

It is easy to see that these two properties are invariant under seed cluster transformations.

\begin{definition}\label{def:free cover}
	Let $\bs$ be a full dimensional primitive seed in a lattice $\bar{N}$.
	A \emph{free cover} $\gamma$ of $\bs$ is a tuple of the following data:
	\begin{itemize}
	\item $N$: a lattice;
	\item $\bi$: a seed in $N$ such that $\bi$ is a basis of $N \otimes_{\ZZ} \QQ$ as a vector space over $\QQ$;
	\item $\psi$: a $\QQ$-linear map from $N \otimes_{\ZZ} \QQ$ to $\bar{N} \otimes_{\ZZ} \QQ$ such that $N=\psi^{-1}(\bar{N})$, $\psi(\bi) =\bs$, and $\{n,n'\} = \{ \psi(n) , \psi(n') \}$ for any $n,n' \in N$.
	\end{itemize}
\end{definition}

\begin{definition}\label{def:blowup data}
	A \emph{blowup data} $\beta$ is a tuple of the following data:
	\begin{itemize}
		\item $\bar{N}$: a lattice of rank $2$;
		\item $\bs$: a full dimensional primitive seed in $\bar{N}$;
		\item $\gamma = (N,\bi,\psi)$: a free cover of $\bs$;
		\item $(e_i)_{i \in I}$: a labeling of $\bi$, where $I = \{ 1,2,\dots, \lvert \bi \rvert \}$;
		\item $\Sigma$: a smooth complete fan in $\bar{N} \otimes_\ZZ \RR$ such that $\RR_{\geq 0} w$ is a ray of $\Sigma$ for any $w \in \bs$;
		\item $\phi \in D(K)(\CC) = \Hom_{\ZZ}(K,\CC^{\times})$, where $ K = \ker \psi$.
	\end{itemize}
\end{definition}

\begin{lemma}\label{lemma:exist blowup data}
	Let $\bs$ be a full dimensional primitive seed in a lattice $\bar{N}$ of rank $2$.
	\begin{enumerate}
	\item There is a free cover $\gamma$ of $\bs$.
	\item For any free cover $\gamma$ of $\bs$, there is a blowup data $\beta$ whose free cover of $\bs$ is $\gamma$.
	\item There is a blowup data whose full dimensional primitive seed in $\bar{N}$ is $\bs$.
	\end{enumerate}
\end{lemma}
\begin{proof}
	We define an index set $J$ by
	\begin{equation*}
		J = \{ (w,i) \mid w \in \bs,\ 1 \leq i \leq m_w \},
	\end{equation*}
	where $m_w$ is the multiplicity of $w$ in $\bs$.
	Let $V = \QQ^J$, and let $\bi = \{ e_{w,i} \in N \mid (w,i) \in J\}$ be the standard basis of $V$.
	We define a $\QQ$-linear map $\psi :V \to \bar{N} \otimes_{\ZZ} \QQ$ by $e_{w,i} \mapsto w$.
	We define a torsion-free abelian group $N$ by $N \coloneqq \psi^{-1}(\bar{N})$, and a skew-symmetric bilinear form on $N$ by $\{ n,n'\} \coloneqq \psi(n) \wedge \psi(n')$.
	Then $\gamma = (N,\bi,\psi)$ is a free cover of $\bs$, which proves (1).
	(2) follows from \cite[Theorem 10.1.10]{CoxLittleSchenck}, which says that any fan has a smooth refinement.
	(3) follows from (1) and (2).
\end{proof}

\begin{lemma}\label{lemma:K equal ker psi}
	Let $\beta$ be a blowup data.
	Then the lattice $K$ in \eqref{eq:def of K} for $N$ is equal to $\ker \psi$.
	In particular, $K$ is a vector subspace of $N \otimes_\ZZ \QQ$, and $\dim_{\QQ} K = \lvert \bs \rvert - 2$.
\end{lemma}
\begin{proof}
	First we prove the inclusion $K \subseteq \ker \psi$.
	Let $n \in K$.
	Then $ \psi (n) \wedge \psi (n')  = 0$ for any $n' \in N$.
	Since $\bar{N} \otimes_{\ZZ} \QQ$ is generated by $(\psi(e_i))_{i \in I}$ and the skew-symmetric bilinear form on $\bar{N}$ is non-degenerate, we have $\psi(n)=0$.
	Next we prove the inclusion $\ker \psi \subseteq K$.
	Let $n \in \ker \psi$.
	Then $n \in N$ since $0 \in \bar{N}$.
	We have $\{ n ,n' \} =  \psi(n) \wedge \psi(n') = 0 \wedge \psi(n') = 0$ for any $n' \in N$.
	Thus $n \in K$.
\end{proof}

\begin{corollary}\label{corollary:N/K isom bar N}
	We have an isomorphism $N / K  \cong \bar{N}$ induced by $\psi$.
\end{corollary}

We will now see that a blowup data determines a blowup of the toric surface associated with its fan.
Let $\beta$ be a blowup data.
For each cone $\sigma \in \Sigma$, we define an affine scheme $U_{\sigma} \coloneqq \Spec \CC [S_{\sigma}]$, where
\begin{equation*}
	S_{\sigma} \coloneqq \{ n \in N \mid \psi(n) \in \sigma^{\vee} \}  ,\quad
	\sigma^{\vee} \coloneqq \{ w \in \bar{N} \otimes_{\ZZ} \RR \mid w \wedge w' \geq 0 \ \text{for any $w' \in \sigma$}\}.
\end{equation*}
Let $X_{\Sigma}$ be a scheme obtained by gluing $U_{\sigma}$ for $\sigma \in \Sigma$ in the same way as the standard construction of toric varieties.
We have a map
\begin{equation*}
	\bar{\lambda} : X_{\Sigma} \to D(K)
\end{equation*}
induced by the inclusion $K \subseteq N$.
The fiber $\bar{Y} \coloneqq \bar{\lambda}^{-1}(\phi)$ is isomorphic to a smooth complete toric surface associated with the fan $\Sigma$.
Let $\bar{D}$ be the toric boundary of $\bar{Y}$.

For each $i \in I$,
we denote by $\mathcal{D}_i$ the toric divisor of $X_{\Sigma}$ corresponding to the ray in $\Sigma$ generated by $\psi(e_i)$.
We define a subscheme $Z_i \subseteq \mathcal{D}_i$ by
\begin{equation*}
	Z_i \coloneqq \mathcal{D}_i \cap \bar{V} (1+z^{e_i}),
\end{equation*}
where $\bar{V}(f)$ is the closure of $V(f) \subseteq \Spec \CC [N]$ in $X_{\Sigma}$.
Since $\psi(e_i)$ is primitive, there is a section $q_i : D(K) \to X_{\Sigma}$ of $\bar{\lambda}$ such that the underlying set of $Z_i$ is the image of $q_i$ (see \cite[Lemma 5.1]{GHK2015Birational}).
Thus $Z_i \cap \bar{\lambda}^{-1}(\phi)$ consists of a single point, which we denote by $p_i$.
We define $\pi:(Y,D) \to (\bar{Y},\bar{D})$ to be the composition of the blowups at the points $p_1, \dots ,p_{\lvert \bi \rvert}$ (with infinitely near points allowed).

Let $\Sigma(1)$ be the set of the rays in $\Sigma$.
For each $\rho \in \Sigma(1)$, let $u_\rho$ be the primitive generator of $\rho$.
Let $\bar{D}_\rho$ be the toric divisor of $\bar{Y}$ corresponding to $\rho$.
For each $i \in I$, let $w_i$ be the primitive element in $\bar{N}$ given by $w_i \coloneqq \psi(e_i)$.
We define a map $\xi:I \to \Sigma(1)$ by $\xi(i) =\rho$, where $\rho$ is the ray generated by $w_i$.
Note that the map $\xi$ is not injective or surjective in general.
Let $D_\rho$ be the proper transform of $\bar{D}_\rho$,
and $E_i$ be the exceptional divisor associated with the blowup at $p_i$.
Let $D^{\perp}$ be the kernel of the linear map
\begin{equation}\label{eq:D perp for blowup data}
	\Pic(Y) \to \bigoplus_{\rho \in \Sigma(1)} \ZZ D_\rho,\quad
	C \mapsto \sum_{\rho \in \Sigma(1)} (C \cdot D_\rho) D_\rho,
\end{equation}
where $\Pic(Y)$ is the Picard group of $Y$, which is equipped with the intersection form.

\begin{proposition}\label{prop:K isom D perp}
	Let $\beta$ be a blow up data.
	Let $K_{\ZZ} \coloneqq K \cap (\bigoplus_{i \in I} \ZZ e_i)\subseteq N$.
	Then we have an isomorphism
	\begin{equation}\label{eq:K to D perp}
		K_{\ZZ} \to D^{\perp}, \quad
		\sum_{i \in I} a_i e_i \mapsto \pi^* C - \sum_{i \in I} a_i E_i
	\end{equation}
	where $C$ is the unique divisor class of $\bar{Y}$ such that
	\begin{equation*} 
		C \cdot \bar{D}_\rho = \sum_{\substack{i \in I\\ \xi(i) = \rho}} a_i.
	\end{equation*}
\end{proposition}
\begin{proof}
	This is proved by the same way as the proof of \cite[Theorem 5.5]{GHK2015Birational}.
\end{proof}

\begin{remark}\label{rem:period point}
	In \cite[Theorem 5.5]{GHK2015Birational}, they also show that $\phi \in \Hom_{\ZZ}(K,\CC^{\times})$ coincides with the \emph{period point} of $(Y,D)$ under the isomorphism in Proposition \ref{prop:K isom D perp}.
	Moreover, they show that $\mathcal{Y} \setminus \mathcal{D}$ is an approximation of the cluster variety $\mathcal{X}_{\bi}$, where $\mathcal{Y}$ is the scheme obtained from $X_{\Sigma}$ by the composition of the blowups at $Z_1,\dots,Z_{\lvert \bi \rvert}$, and $\mathcal{D}$ is the proper transform of the toric boundary of $X_{\Sigma}$.
	See \cite[Lemma 5.2]{GHK2015Birational} for a precise statement.
\end{remark}

By Proposition \ref{prop:K isom D perp}, we have a symmetric bilinear form $K \times K \to \QQ$ ($(\alpha,\beta) \mapsto \alpha \cdot \beta$) induced by the intersection form on $\Pic(Y)$.
We call it the \emph{intersection form} associated with $\beta$.
We will now see that this intersection form only depends on the seed $\bi$ in $N$ (Proposition \ref{prop:intersection form on i}), and only depends on up to isomorphisms the mutation equivalence class of the seed $\bs$ in $\bar{N}$ (Theorem \ref{thm:preserve intersection forms}).
The proof is divided into several lemmas.

\begin{lemma}\label{lemma:intersection invariant labeling}
	Let $\beta$ and $\beta'$ be blowup data.
	Suppose that $\beta$ and $\beta'$ are identical except for $(e_i)_{i \in I}$ and $(e_i')_{i \in I}$.
	Then the intersection forms associated with $\beta$ and $\beta'$ are identical.
\end{lemma}
\begin{proof}
Let $\tau:I \to I$ be the bijection defined by $e_i' = e_{\tau(i)}$.
Let $\alpha \in K$.
Then $\alpha$ has the following two expressions: $\alpha = \sum_{i \in I} a_i e_i = \sum_{i \in I} a_{\tau(i)} e_i'$.
Let $C$ and $C'$ be the divisor class of $\bar{Y}$ in Proposition \ref{prop:K isom D perp} corresponding to these two expression of $\alpha$.
Then $C=C'$ because
\begin{equation*}
	C' \cdot \bar{D}_\rho = \sum_{\substack{i \in I\\ \xi'(i) = \rho}} a_{\tau(i)}
	= \sum_{\substack{i \in I\\ \xi(\tau(i)) = \rho}} a_{\tau(i)}
	= \sum_{\substack{i \in I\\ \xi(i) = \rho}} a_i
	=C \cdot \bar{D}_\rho.
\end{equation*}
Now the lemma follows from the following calculation:
\begin{align*}
	\biggl( \sum_{i \in I} a_{\tau(i)} E_i' \biggr) \cdot 	\biggl( \sum_{i \in I} b_{\tau(i)} E_i' \biggr)
	&=	- \sum_{i \in I} a_{\tau(i)} b_{\tau(i)}\\
	&=	- \sum_{i \in I} a_i b_i\\
	&= \biggl( \sum_{i \in I} a_i E_i \biggr) \cdot 	\biggl( \sum_{i \in I} b_i E_i \biggr).
\end{align*}
\end{proof}

\begin{lemma}\label{lemma:intersection invariant Sigma}
	Let $\beta$ and $\beta'$ be blowup data.
	Suppose that $\beta$ and $\beta'$ are identical except for $\Sigma$ and $\Sigma'$.
	Then the intersection form associated with $\beta$ and $\beta'$ are identical.
\end{lemma}
\begin{proof}
	Choose a smooth complete fan $\Sigma''$ that refines both $\Sigma$ and $\Sigma'$.
	Then $\Sigma''$ is obtained from both $\Sigma$ and $\Sigma'$ by sequences of star subdivisions \cite[Lemma 10.4.2]{CoxLittleSchenck}.
	Thus we can assume that $\Sigma$ and $\Sigma'$ are related by a star subdivision.
	Let $\tau:(Y',D') \to (Y,D)$ be the blowup at a node of $D$ corresponding to the star subdivision.
	Then the pullback of divisors induced by $\tau$ gives an isomorphism from $D^{\perp}$ to $(D')^{\perp}$.
	This isomorphism commutes with the maps \eqref{eq:K to D perp} for $D^{\perp}$ and $(D')^{\perp}$ by definition.
\end{proof}

\begin{lemma}\label{lemma:intersection invariant phi}
	Let $\beta$ and $\beta'$ be blowup data.
	Suppose that $\beta$ and $\beta'$ are identical except for $\phi$ and $\phi'$.
	Then the intersection forms associated with $\beta$ and $\beta'$ are identical.
\end{lemma}
\begin{proof}
	This is obvious from \eqref{eq:K to D perp}.
\end{proof}

\begin{lemma}\label{lemma:intersection on free cover}
	The intersection form associated with a blowup data $\beta$ only depends on $\gamma$.
\end{lemma}
\begin{proof}
	Noting that $\bar{N}$ and $\bs$ are determined by $\gamma$, the assertion follows from Lemma \ref{lemma:intersection invariant labeling}, \ref{lemma:intersection invariant Sigma}, and \ref{lemma:intersection invariant phi}.
\end{proof}

In Lemma \ref{lemma:intersection form seed mutation}, \ref{lemma:intersection form seed isom}, \ref{lemma:exists lift of mu}, \ref{lemma:exists lift of sigma}, and Proposition \ref{prop:intersection form on i} below, we impose the following common assumptions:
\begin{itemize}
\item $\bar{N}$ and $\bar{N}'$ are lattices of rank $2$.
\item $\bs$ and $\bs'$ are full dimensional primitive seeds in $\bar{N}$ and $\bar{N}'$, respectively.
\item $\gamma$ and $\gamma'$ are free covers of $\bs$ and $\bs'$, respectively.
\end{itemize}
Under these assumptions, we have the intersection forms associated with $\gamma$ and $\gamma'$ by Lemma \ref{lemma:intersection on free cover}.

\begin{lemma}\label{lemma:intersection form seed mutation}
	Suppose that we have a seed mutation $\mu_v^{\varepsilon} : \bi \to \bi'$.
	Then the intersection forms associated with $\gamma$ and $\gamma'$ are identical.
\end{lemma}
\begin{proof}
	This is proved by the same way as the proof of \cite[Theorem 5.6]{GHK2015Birational}.
\end{proof}

\begin{lemma}\label{lemma:intersection form seed isom}
	Suppose that we have a seed isomorphism $\varepsilon \sigma:\bi \to \bi'$.
	Then the isomorphism $\sigma|_{K} : K \to K'$ preserves the intersection forms.
\end{lemma}
\begin{proof}
	By Lemma \ref{lemma:exist blowup data}, choose a blowup data $\beta$ associated with $\gamma$.
	We define a blowup data $\beta'$ associated with $\gamma'$ by setting $e_i' = \sigma(e_i)$, $\Sigma' = \bar{\sigma}(\Sigma)$, and $\phi' = \check{\sigma}(\phi)$, where $\bar{\sigma}:\bar{N} \to \bar{N}'$ and $\check{\sigma}:D(K) \to D(K')$ are the isomorphisms induced by $\sigma$.
	Then we have isomorphisms $\bar{f}:\bar{Y} \to \bar{Y}'$ and $f:Y \to Y'$.
	Since $f^* : \Pic(\bar{Y}') \to \Pic(\bar{Y})$ preserves the intersection forms, it suffices to prove that the following diagram commutes:
	\begin{equation*}
	\begin{tikzpicture}[auto,scale=1.2]
		\node(K) at (0,1) {$K$};
		\node(Kp) at (1.5,1) {$K'$};
		\node(D) at (0,0) {$D^{\perp}$};
		\node(Dp) at (1.5,0) {$(D')^{\perp}$};
		\draw[->] (K) to  (D);
		\draw[->] (Kp) to (Dp);
		\draw[->] (K) to (Kp);
		\draw[->] (D) to (Dp);
	\end{tikzpicture}
	\end{equation*}
	where the map in the first row is $\sigma|_K$,
	the map in the second row is $(f^*|_{(D')^{\perp}})^{-1}$,
	and the vertical maps are the isomorphisms in Proposition \ref{prop:K isom D perp} for $\beta$ and $\beta'$.
	The commutativity of the diagram is equivalent to the following equality:
	\begin{equation}\label{eq:C minus sum E = f C minus sum E}
		  \pi^* C - \sum_{i \in I} a_i E_i  = f^*\biggl( (\pi')^* C' - \sum_{i \in I} a_i E_i' \biggr),
	\end{equation}
	where $\bar{f}^* C' = C$ and $f^* E_i' = E_i$.
	The above equality follows from $\pi' \circ f =\bar{f} \circ \pi$.
\end{proof}

\begin{proposition}\label{prop:intersection form on i}
	The intersection form associated with $\gamma$ only depends on $\bi$.
\end{proposition}
\begin{proof}
	This is proved by applying Lemma \ref{lemma:intersection form seed isom} to the seed isomorphism $\id:\bi \to \bi$.
\end{proof}

\begin{lemma}\label{lemma:exists lift of mu}
	Suppose that we have a seed mutation $\mu_w^{\varepsilon} : \bs \to \bs'$.
	Then there exists a free cover $\gamma''$ of $\bs'$ and $v \in \bi$ such that $\mu_v^{\varepsilon} :\bi \to \bi''$ is a seed mutation.
\end{lemma}
\begin{proof}
	Choose $v \in \bi$ such that $w = \psi(v)$.
	By Lemma \ref{lemma:unique target seed}, we define a seed $\bi''$ by the seed mutation $\mu_v^{\varepsilon}:\bi \to \bi''$.
	Then $\gamma''= (N,\bi'',\psi)$ satisfies the desired property.
\end{proof}

\begin{lemma}\label{lemma:exists lift of sigma}
	Suppose that we have a seed isomorphism $\varepsilon \sigma : \bs \to \bs'$.
	Then there exists an isomorphism $\tilde{\sigma}:N \to N'$ such that $\varepsilon \tilde{\sigma}:\bi \to \bi'$ is a seed isomorphism.
\end{lemma}
\begin{proof}
	Since $\bs' = \sigma(\bs)$, there is a bijection $\tau$ from the set $\bi$ to the set $\bi'$ such that $\sigma(\psi(i)) = \psi'(\tau(i))$ for any $i \in \bi$.
	Then the isomorphism $N \otimes_{\ZZ} \QQ \to N'\otimes_{\ZZ} \QQ$ given by $i \mapsto \tau(i)$ restricts to an isomorphism $\tilde{\tau}:N \to N'$.
	This gives a seed isomorphism $\varepsilon \tilde{\tau}: \bi \to \bi'$.
\end{proof}

We now prove the invariance of the intersection forms under seed cluster transformations.
For any seed $\bi$, let $P(\bi)$ be the following property: there is a lattice $\bar{N}$ of rank $2$, a full dimensional primitive seed $\bs$ in $\bar{N}$, and a free cover $\gamma$ of $\bs$, such that $\bi$ is the seed in $\gamma$.
It is easy to see that $P(\bi)$ is invariant under seed cluster transformations.
By Proposition \ref{prop:intersection form on i}, for any seed $\bi$ satisfying $P(\bi)$, the intersection form associated with $\bi$ is well-defined.

\begin{theorem}\label{thm:preserve intersection form, i case}
	Let $\bi$ and $\bi'$ be seeds.
	Suppose that $P(\bi)$ and $P(\bi')$ hold.
	Suppose that we have a seed cluster transformation $\bc:\bi \to \bi'$.
	Then the isomorphism $K \to K'$ induced by $\bc$ preserves the intersection forms.
	In particular, the cluster modular group $\Gamma_{\bi}$ acts on $K$ as isometries.
\end{theorem}
\begin{proof}
This follows from Lemma \ref{lemma:intersection form seed mutation} and \ref{lemma:intersection form seed isom}.
\end{proof}

\begin{theorem}\label{thm:preserve intersection forms}
	Let $\bar{N}$ and $\bar{N}'$ be lattices of rank $2$.
	Let $\bs$ and $\bs'$ be full dimensional primitive seeds in $\bar{N}$ and $\bar{N}'$, respectively.
	Let $\gamma$ and $\gamma'$ be free covers of $\bs$ and $\bs'$, respectively.
	Suppose that we have a seed cluster transformation $\bc: \bs \to \bs' $.
	Then there is an isomorphism $K \to K'$ that preserves the intersection forms.
\end{theorem}
\begin{proof}
This follows from Lemma \ref{lemma:intersection form seed mutation}, \ref{lemma:intersection form seed isom}, \ref{lemma:exists lift of mu}, and \ref{lemma:exists lift of sigma}.
\end{proof}

We will use the following construction of seed isomorphisms.
\begin{definition}\label{def:seed isom from permutation}
	Let $\bar{N}$ and $\bar{N}'$ be lattices of rank $2$.
	Let $\bs$ and $\bs'$ be full dimensional primitive seeds in $\bar{N}$ and $\bar{N}'$, respectively.
	Let $\gamma$ and $\gamma'$ be free covers of $\bs$ and $\bs'$, respectively.
	Let $(e_i)_{i \in I}$ and $(e'_i)_{i \in I}$ be labelings of $\bi$ and $\bi'$, respectively.
	Let $\sigma:I \to I$ be a bijection.
	Suppose that there is a seed isomorphism $\varepsilon \bar{\sigma}: \bs \to \bs'$ such that $\bar{\sigma}(\psi(e_i)) = \psi'(e'_{\sigma(i)})$ for any $i \in I$.
	Then the isomorphism $\tilde{\sigma}:N \to N'$ given by $e_i \mapsto e'_{\sigma(i)}$ gives a seed isomorphism $\varepsilon \tilde{\sigma}:\bi \to \bi'$.
	By abuse of notation, we will denote it by $\varepsilon \sigma:\bi \to \bi'$.
	We say that $\varepsilon \sigma:\bi \to \bi'$ is the \emph{seed isomorphism induced by the permutation $\sigma$}.
\end{definition}

\subsection{\texorpdfstring{Seeds of $q$-Painlev\'{e} type}{Seeds of q-Painlev\'{e} type}}
\label{section:q-P type}
A pair $(Y,D)$ is called a \emph{log Calabi-Yau surface with maximal boundary} if $Y$ is a smooth rational projective surface over $\CC$, and
$D \in \lvert - K_Y \rvert$ is a reduced nodal curve with at least one singular point.
Such a pair is called a \emph{Looijenga pair} in \cite{GHK2015Moduli}.
We will call a log Calabi-Yau surface with maximal boundary simply a \emph{log CY surface}.
For any blowup data $\beta$, the pairs $(\bar{Y},\bar{D})$ and $(Y,D)$ defined in Section \ref{section:seeds in rank 2} are log CY surfaces.

Let $(Y,D)$ be a log CY surface.
We define a \emph{simple toric blowup}
$(\tilde{Y},\tilde{D}) \to (Y,D)$ to be the blowup at a node
of $D$ such that $\tilde{D}$ is the reduced inverse image of $D$.
Then $(\tilde{Y},\tilde{D})$ is again a log CY surface, and $\tilde{Y} \setminus \tilde{D} = Y \setminus D$.
A \emph{toric blowup} is a composition of simple toric blowups.

For any log CY surface $(Y,D)$, we define $D^{\perp} \subseteq \Pic(Y)$ by
\begin{equation}\label{eq:D perp for log CY}
	D^{\perp} = \{ \alpha \in \Pic(Y) \mid  \text{$\alpha \cdot C = 0$ for any irreducible component $C$ of $D$}\}.
\end{equation}
If $(\tilde{Y},\tilde{D}) \to (Y,D)$
is a toric blowup,
then it induces an isomorphism $D^{\perp} \cong \tilde{D}^{\perp}$ that preserves the intersection forms.
When $(Y,D)$ is a log CY surface associated with a blowup data $\beta$, the kernel of the map \eqref{eq:D perp for blowup data} coincides with $D^{\perp}$ in \eqref{eq:D perp for log CY}.

Let $\beta$ be a blowup data.
We have a unique cyclic order on $\Sigma(1)$ that is compatible with the orientation on $\bar{N} \otimes_{\ZZ} \RR$ given by the fixed isomorphism $\wedge^2 \bar{N} \cong \ZZ$.
For any $\rho \in \Sigma(1)$, we denote by $\rho + 1$ and $\rho - 1$ the rays next to $\rho$ in this cyclic order.
Let $n_\rho$ be the self-intersection number $\bar{D}_\rho \cdot \bar{D}_\rho$,
and $m_\rho$ be the number of blowups on $\bar{D}_\rho$.
Then the intersection matrix $(D_i \cdot D_j)_{i,j \in \Sigma(1)}$ of the irreducible components of $D$ is given by
\begin{equation*}
	D_i \cdot D_j =
	\begin{cases}
		n_i - m_i &\text{if $i=j$},\\
		1 &\text{if $i = j \pm 1$},\\
		0 &\text{otherwise}.
	\end{cases}
\end{equation*}

\begin{proposition}\label{prop:q-painleve definitions}
	Let $\beta$ be a blowup data.
	Then the followings are equivalent:
	\begin{enumerate}
		\item The intersection form on $K$ is negative semi-definite but not negative definite.
		\label{item:K semidef}
		\item The intersection matrix $(D_i \cdot D_j)_{i,j \in \Sigma(1)}$ is negative semi-definite but not negative definite.
		\label{item:H semidef}
		\item There exists a unique primitive vector $(c_{\rho}) \in (\ZZ_{>0})^{\Sigma(1)}$ such that $\bigl(\sum_{\rho \in \Sigma(1)} c_{\rho} D_{\rho} \bigr)^2 = 0$.
		\label{item:H d > 0}
		\item  For any toric blowup $(Y,D) \to (Y^{\flat},D^{\flat})$ such that $D^{\flat}$ does not have $(-1)$-components, $D^{\flat} \in (D^{\flat})^{\perp}$.
		\label{item:minimal (Y,D)}
	\end{enumerate}
\end{proposition}
\begin{proof}
The equivalence of \ref{item:K semidef}, \ref{item:H semidef}, and \ref{item:minimal (Y,D)} is proved by \cite[Theorem 4.2]{Mandel2019}.
The equivalence of \ref{item:H semidef} and \ref{item:H d > 0} follows from \cite[Theorem 4.3 and Lemma 4.5]{Kac}.
\end{proof}

\begin{definition}\label{def:q-P type}
	We say that a blowup data $\beta$ is of \emph{$q$-Painlev\'{e} type} if it satisfies the conditions in Proposition \ref{prop:q-painleve definitions}.
\end{definition}

This definition is motivated by \cite[Definition 2]{Sakai}.
By (3) of Theorem \ref{thm:preserve intersection forms} and the fact that the condition \ref{item:K semidef} of Proposition \ref{prop:q-painleve definitions} is preserved under isomorphisms of vector spaces that preserve intersection forms, the property that a blowup data $\beta$ is of $q$-Painlev\'{e} type only depends on the mutation equivalence class of $\bs$.
In other words, the property that a seed $\bs$ in a lattice of rank $2$ is of $q$-Painlev\'{e} type is well-defined, and this property is invariant under seed cluster transformations.

Suppose that $\beta$ is a blowup data of $q$-Painlev\'{e} type.
Let $(c_{\rho})_{\rho \in \Sigma(1)}$ be the vector in \ref{item:H d > 0} of Proposition \ref{prop:q-painleve definitions}.
We define an element $\delta \in N$ by the following formula:
\begin{equation}\label{eq:def of delta}
	\delta \coloneqq \sum_{i \in I} c_{\xi(i)} e_i .
\end{equation}

\begin{lemma}\label{lemma:delta only depends on gamma}
	$\delta$ only depends on the free cover $\gamma$.
\end{lemma}
\begin{proof}
Clearly, $\delta$ does not depend on the choice of the labeling and the choice of $\phi \in D(K)(\CC)$.
We now prove that $\delta$ does not depend on the choice of the fan.
Suppose that we have two choices $\Sigma$ and $\Sigma'$.
As in the proof of Lemma \ref{lemma:intersection invariant Sigma}, we can assume that $\Sigma$ and $\Sigma'$ are related by a star subdivision at a cone $\sigma=\Cone(u_\pi,u_{\pi+1})$. Then we have
	\begin{equation*}
		\Sigma' = (\Sigma \setminus \sigma) \sqcup \{ \Cone(u_{\pi}+u_{\pi+1}),\Cone(u_{\pi},u_{\pi}+u_{\pi+1}),\Cone(u_{\pi+1},u_{\pi}+u_{\pi+1}) \}.
	\end{equation*}
Under the star subdivision, the numbers $(n_\rho)_{\rho \in \Sigma(1)}$ and $(m_\rho)_{\rho \in \Sigma(1)}$ change into $(n'_\rho)_{\rho \in \Sigma'(1)}$ and $(m'_\rho)_{\rho \in \Sigma'(1)}$ where
\begin{equation*}
	n'_{\rho} = 
	\begin{cases}
		-1 &\text{if $\rho = \Cone(u_{\pi} + u_{\pi+1})$},\\
		n_{\rho} -1 &\text{if $\rho = \pi \pm 1$},\\
		n_{\rho} &\text{otherwise},
	\end{cases}\quad
	m'_{\rho} = 
	\begin{cases}
		0 &\text{if $\rho = \Cone(u_{\pi} + u_{\pi+1})$},\\
		m_{\rho} &\text{otherwise}.
	\end{cases}
\end{equation*}
Thus the numbers $(c_\rho)_{\rho \in \Sigma(1)}$ change into $(c'_\rho)_{\rho \in \Sigma'(1)}$ where 
\begin{equation*}
	c'_{\rho} = 
	\begin{cases}
		c_\rho + c_{\rho+1} &\text{if $\rho = \Cone(u_{\pi} + u_{\pi+1})$},\\
		c_{\rho} &\text{otherwise}.
	\end{cases}
\end{equation*}
Therefore, $c_{\xi(i)} = c'_{\xi'(i)}$ for any $i \in I$.
\end{proof}

\begin{proposition}
	$\delta \in K$, and $\delta \mapsto \sum_{\rho \in \Sigma(1)} c_{\rho} D_{\rho}$ under the isomorphism in Proposition \ref{prop:K isom D perp}.
	In particular, we have $\delta \cdot \delta = 0$.
\end{proposition}
\begin{proof}
We have
\begin{align*}
	\psi(\delta) &=  \sum_{i \in I} c_{\xi(i)} w_i 
	= \sum_{\rho \in \Sigma(1)}  m_\rho c_\rho u_\rho 
	= \sum_{\rho \in \Sigma(1)} (n_\rho c_\rho  + c_{\rho-1} + c_{\rho+1}) u_\rho \\
	&= \sum_{\rho \in \Sigma(1)} (-c_\rho u_{\rho-1} -c_\rho u_{\rho+1} + c_{\rho-1} u_\rho + c_{\rho+1} u_\rho) =0,
\end{align*}
where we use the well-known relation $n_\rho u_\rho + u_{\rho-1} + u_{\rho+1}=0$ (see \cite[Theorem 10.4.4]{CoxLittleSchenck}) at the fourth equality.
Thus $\delta \in K$ by Lemma \ref{lemma:K equal ker psi}.
Moreover, $\delta \in K_{\ZZ}$ by definition.
We have
\begin{equation}\label{eq:delta mapsto}
	\delta \mapsto \pi^* \biggl(\sum_{\rho \in \Sigma(1)} c_\rho \bar{D}_\rho \biggr) - \sum_{i \in I} c_{\xi(i)} E_i
	= \sum_{\rho \in \Sigma(1)} c_\rho D_\rho
\end{equation}
since
\begin{equation*}
	\biggl(\sum_{\rho \in \Sigma(1)} c_\rho \bar{D}_\rho \biggr) \cdot \bar{D}_\rho =
	n_\rho c_\rho  + c_{\rho-1} + c_{\rho+1}
	= c_\rho m_\rho = \sum_{\substack{i \in I\\ \xi(i) = \rho}} c_{\xi(i)}.
\end{equation*}
\end{proof}

\begin{definition}\label{def:null root}
	We say that $\delta \in K$ defined by \eqref{eq:def of delta} is the \emph{null root} associated with $\gamma$.
	We denote by $q$ the function $z^\delta \in \CC[K]$, which gives a globally defined function on the cluster variety $\mathcal{X}_{\mathbf{i}}$ by the map $\lambda:\mathcal{X}_{\bi} \to D(K)$.
\end{definition}

\begin{lemma}\label{lemma:c rho + c minus rho}
	Let $\beta$ be a blowup data of $q$-Painlev\'{e} type.
	Suppose that  $\rho,-\rho \in \Sigma(1)$.
	Let $\varepsilon$ be a sign.
	Then
	\begin{equation*}
		c_{\rho} + c_{-\rho} = \sum_{\rho' \in \Sigma(1)} \maxzero{\varepsilon u_{\rho'} \wedge u_{\rho}} m_{\rho'} c_{\rho'}.
	\end{equation*}
\end{lemma}
\begin{proof}
	Let $\Sigma(1)_{\rho,\varepsilon} = \{ \rho' \in \Sigma(1) \mid \varepsilon u_{\rho'} \wedge u_{\rho} >0 \}$.
	We compute
	\begin{align*}
		&\sum_{\rho' \in \Sigma(1)} \maxzero{\varepsilon u_{\rho'} \wedge u_{\rho}} m_{\rho'} c_{\rho'} 
		\\&=
		\sum_{\rho' \in \Sigma(1)_{\rho,\varepsilon}} (\varepsilon u_{\rho'} \wedge u_{\rho}) m_{\rho'} c_{\rho'} \\
		&=\sum_{\rho' \in \Sigma(1)_{\rho,\varepsilon}} (\varepsilon u_{\rho'} \wedge u_{\rho}) (c_{\rho'-1} + n_{\rho'} c_{\rho'} + c_{\rho'+1})\\
		&=\sum_{\rho' \in \Sigma(1)_{\rho,\varepsilon}} (\varepsilon u_{\rho'} \wedge u_{\rho}) c_{\rho'-1}
		+(\varepsilon u_{\rho'} \wedge u_{\rho}) c_{\rho'+1}
		-(\varepsilon u_{\rho'-1} \wedge u_{\rho}) c_{\rho'} 
		-(\varepsilon u_{\rho'+1} \wedge u_{\rho}) c_{\rho'}\\
		&= c_{\rho} + c_{-\rho}.
	\end{align*}
\end{proof}

\begin{proposition}\label{prop:delta only depends on i}
	Suppose that $\delta$ and $\delta'$ are the null root associated with $\gamma$ and $\gamma'$, respectively.
	Suppose that we have a seed cluster transformation $\bc:\bi \to \bi'$.
	Then $\delta$ maps to $\delta'$ by the isomorphism $K \to K'$ induced by $\bc$.
	In particular, $\delta$ only depends on $\bi$.
\end{proposition}
\begin{proof}
	If $\bc$ is a seed isomorphism, the claim follows from \eqref{eq:C minus sum E = f C minus sum E} and \eqref{eq:delta mapsto}.
	Applying this claim for $\id : \bi \to \bi$, we see that $\delta$ only depends on $\bi$.
	Now we consider the case where $\bc$ is a seed mutation $\mu_{v}^{\varepsilon}$.
	Since $\delta$ only depends on $\bi$, we can assume that $\bar{N}=\bar{N}'$ and $\psi=\psi'$.
	Let $w=\psi(v)$.
	Choose a blowup data $\beta$ and $\beta'$ associated with $\gamma$ and $\gamma'$, respectively, such that $\Sigma'$ is obtained from $\Sigma$ by the piecewise linear map $u \mapsto u + \maxzero{\varepsilon (u \wedge w) } w$.
	Let $\sigma : \Sigma(1) \to \Sigma'(1)$ be the bijection induced by this piecewise linear map.
	It is easy to see that $n'_{\sigma(\rho)} - m'_{\sigma(\rho)}=n_{\rho} - m_{\rho}$ for any $\rho \in \Sigma(1)$, which implies that $c'_{\sigma(\rho)} = c_{\rho}$ for any $\rho \in \Sigma(1)$.
	Now the claim follows from Definition \ref{def:seed mutation} and Lemma \ref{lemma:c rho + c minus rho}.
\end{proof}

Suppose that $\beta$ is a blowup data of $q$-Painlev\'{e} type.
We construct an affine root system associated with $\beta$ as follows.
First we perform a toric blowdown $(Y,D) \to (Y^{\flat},D^{\flat})$ such that $D^{\flat}$ does not have $(-1)$-curves.
By \ref{item:minimal (Y,D)} and \cite[Proposition 2]{Sakai}, there is a birational map $Y^{\flat} \to \PP^2$ that is a composition of nine successive blowups of points, which are possibly infinitely near points.
This implies that $\{ \alpha \in \Pic(Y^{\flat}) \mid \alpha \cdot D^{\flat} = 0\}$ is isomorphic to a root lattice of type $E_8^{(1)}$,
and the irreducible components of $D^{\flat}$ form simple roots for a root system of type $A_{r}^{(1)}$, where $r+1$ is the number of the irreducible components of $D^{\flat}$.
By Theorem \ref{prop:K isom D perp}, $K_{\ZZ}$ is isomorphic to $Q(R)^{\perp}$,
where $Q(R)$ is the root lattice of type $R=A_{r}^{(1)}$,
and the complement is taken in the root lattice $Q(E_8^{(1)})$.

\begin{figure}[t]
	\centering
	\begin{tikzpicture}[scale=1.5]
		\node (0) at (0,0) {$A_0^{(1)}$};
		\node (1) at (1,0) {$A_1^{(1)}$};
		\node (2) at (2,0) {$A_2^{(1)}$};
		\node (3) at (3,0) {$A_3^{(1)}$};
		\node (4) at (4,0) {$A_4^{(1)}$};
		\node (5) at (5,0) {$A_5^{(1)}$};
		\node (6) at (6,0) {$A_6^{(1)}$};
		\node (7) at (7,1) {$A_7^{(1)}$};
		\node (7p) at (7,0) {$A_7^{(1)'}$};
		\node (8) at (8,0) {$A_8^{(1)}$};
		\foreach \i/\j in {0/1,1/2,2/3,3/4,4/5,5/6,6/7,7/8,6/7p}
				\draw (\i) edge[-{To}] (\j);
	\end{tikzpicture}
	\caption{$R$}
	\label{fig:R}
	\centering
	\begin{tikzpicture}[scale=1.5]
		\node (0) at (0,0) {$E_8^{(1)}$};
		\node (1) at (1,0) {$E_7^{(1)}$};
		\node (2) at (2,0) {$E_6^{(1)}$};
		\node (3) at (3,0) {$E_5^{(1)}$};
		\node (4) at (4,0) {$E_4^{(1)}$};
		\node (5) at (5,0) {$E_3^{(1)}$};
		\node (6) at (6,0) {$E_2^{(1)}$};
		\node (7) at (7,1) {$E_1^{(1)}$};
		\node (7p) at (7,0) {$E_1^{(1)'}$};
		\node (8) at (8,0) {$E_0^{(1)}$};
		\foreach \i/\j in {0/1,1/2,2/3,3/4,4/5,5/6,6/7,7/8,6/7p}
				\draw (\i) edge[-{To}] (\j);
	\end{tikzpicture}
	\caption{$R^{\perp}$}
	\label{fig:R perp}
\end{figure}

There are ten embeddings (up to the action of the Weyl group of type $E_8^{(1)}$) of the root system $A_{r}^{(1)}$ into the root system $E_8^{(1)}$, as shown in Figure \ref{fig:R} (see \cite{Coxeter34}).
There is exactly one embedding for $r=0,1,2,3,4,5,6,8$,
and there are two embeddings for $r=7$.
We use the symbol $A_7^{(1)}$ for the one which has no orthogonal real roots of $E_8^{(1)}$ in its complement, and $A_7^{(1)'}$ for the other one.
The arrows $R \to R'$ in the figure mean inclusions $Q(R) \subset Q(R')$.
For any symbol $R=A_{r}^{(1)}$,
it is convenient to introduce the symbol $R^{\perp} = E_{8-r}^{(1)}$ and define $Q(R^{\perp}) \coloneqq Q(R)^{\perp}$, as in Figure \ref{fig:R perp}.
Using Sakai's symbols in \cite{Sakai}, $R^{\perp}$ is expressed as follow:
\begin{align*}
	\renewcommand{\arraystretch}{1.5}
	\begin{array}{l  l  l  l  l  l  l  l  l}
		&E_{6,7,8}^{(1)} & E_5^{(1)} & E_4^{(1)}&E_3^{(1)}&E_2^{(1)}&E_1^{(1)}&E_1^{(1)'}&E_0^{(1)} \\ 
		=
		&E_{6,7,8}^{(1)} & D_5^{(1)} & A_4^{(1)} & (A_2 + A_1)^{(1)}
		& (A_1 + \underset{\lvert \alpha \rvert^2 = 14}{A_1})^{(1)}
		& \underset{\lvert \alpha \rvert^2=8}{A_1^{(1)}}
		& A_1^{(1)}
		& A_0^{(1)}.
	\end{array}
	\renewcommand{\arraystretch}{1}
\end{align*}
In summary, there is an isomorphism $K_{\ZZ} \cong Q(R^{\perp})$ preserving the intersection forms, where $R^{\perp}$ is a symbol in Figure \ref{fig:R perp}.
By (3) of Theorem \ref{thm:preserve intersection forms}, the symbols $R$ and $R^{\perp}$ only depends on the mutation equivalence class of $\bs$.
Conversely, we will prove in the next section that any two seeds that are of $q$-Painlev\'{e} type and have the same $R^{\perp}$ (or $R$) are mutation equivalent (Corollary \ref{cor:mutation equiv iff same R perp}).

\subsection{\texorpdfstring{Fano polygons and the classification of seeds of $q$-Painlev\'{e} type}{Fano polygons and the classification of seeds of q-Painlev\'{e} type}}
\label{section:fano}
In this section, we will classify the seeds of $q$-Painlev\'{e} type up to mutation equivalence by using a classification result for Fano polygons in \cite{KNP2017}.
Let $\bar{N}$ be a lattice of rank $2$.
We say that a convex polytope $P$ in $\bar{N} \otimes_{\ZZ} \RR$ is a \emph{Fano polygon}
if the origin $0 \in \bar{N}$ lies in the interior of $P$,
and each vertex of $P$ is primitive element in $\bar{N}$.
Let $H_{w,c}$ and $H_{w,c}^+$ be the affine hyperplane and the closed half-space given by
\begin{equation*}
	H_{w,c} = \{ v \in \bar{N} \otimes_{\ZZ} \RR \mid \{ v, w \} = -c \},\quad
	H_{w,c}^+ = \{ v \in \bar{N} \otimes_{\ZZ} \RR \mid \{ v, w \} \geq -c \},
\end{equation*}
where $w \in N$ and $c \in \ZZ$.
Then each facet $f$ of a Fano polygon $P$ is uniquely expressed as
$f = P \cap H_{w_f,c_f}$ by a primitive vector $w_f \in N$ and a positive integer $c_f$, and we have
\begin{equation*}
	P = \bigcap_{f: \mathrm{facet}} H_{w_f,c_f}^+.
\end{equation*}
For each facet $f$, we define a positive integer $l_f$ as the greatest degree of divisibility of $v-v'$ in $\bar{N}$, where $v$ and $v'$ are the endpoints of $f$.
We also set $l_v = 0$ for each vertex $v$ of $P$.
We say that a Fano polygon $P$ \emph{has no remainders} if $l_f$ is a multiple of $c_f$ for any facet $f$ of $P$.

Suppose that $\beta$ is a blowup data of $q$-Painlev\'{e} type.
We define $P_{\beta} \subseteq \bar{N} \otimes_{\ZZ} \RR$ by
\begin{equation}\label{eq:fano polygon}
	P_{\beta} =\bigcap_{i \in I} H_{w_i,c_{\xi(i)}}^+,
\end{equation}
where $c_\rho$ for $\rho \in \Sigma(1)$ is given in \ref{item:H d > 0} of Proposition \ref{prop:q-painleve definitions}.
Since the seed $\bs$ in $\beta$ is full dimensional, $P_{\beta}$ is a convex polytope.

\begin{proposition}\label{prop:polygon only depends on s}
	$P_{\beta}$ only depends on the seed $\bs$.
\end{proposition}
\begin{proof}
	By Lemma \ref{lemma:delta only depends on gamma}, $P_{\beta}$ only depends on $\gamma$.
	Suppose that $\gamma = (N,\bi,\psi)$ and $\gamma' = (N',\bi',\psi')$ are free covers of $\bs$.
	Let $P$ and $P'$ be the convex polytopes associated with $\gamma$ and $\gamma'$, respectively.
	Choose a bijection $\tau: \bi \to \bi'$ as in the proof of Lemma \ref{lemma:exists lift of sigma}.
	Let $\eta$ be the map $\eta:\bi \to \Sigma(1)$ given by $i \mapsto \RR_{\geq 0}\psi(i)$.
	Let $\eta': \bi' \to \Sigma(1)$ be the same map for $\bi'$.
	Then we have
	\begin{equation*}
		P
		=\bigcap_{i \in \bi} H_{\psi(i),c_{\eta(i)}}^+
		=\bigcap_{i \in \bi} H_{\psi'(\tau(i)),c'_{\eta'(\tau(i))}}^+
		=\bigcap_{i \in \bi'} H_{\psi'(i),c'_{\eta'(i)}}^+
		=P'.
	\end{equation*}
\end{proof}

Suppose that $\bs$ is a seeds of $q$-Painlev\'{e} type.
By Proposition \ref{prop:polygon only depends on s}, we have the polygon associated with $\bs$, which we denote by $P_{\bs}$.

\begin{proposition}\label{prop:bij seeds and polygons}
	The correspondence $\bs \mapsto P_{\bs}$ is a bijection from the set of the seeds of $q$-Painlev\'{e} type to the set of the Fano polygons that have no remainders.
\end{proposition}
\begin{proof}
We first show that $P_{\bs}$ is a Fano polygon that has no remainders.
Fix a blowup data $\beta$ associated with $\bs$.
We define a polygon $Q_{\beta}$ by
\begin{equation}\label{eq:fano polygon by rho}
	Q_{\beta} = \bigcap_{\rho \in \Sigma(1)} H_{u_\rho,c_\rho}^+.
\end{equation}
Let $f_{\rho} \coloneqq Q_{\beta} \cap H_{u_\rho,c_\rho}$ and $l_\rho \coloneqq l_{f_{\rho}}$ for each $\rho \in \Sigma(1)$.
If $v$ and $v'$ are the vertex of $P$ in $f_{\rho} \cap f_{\rho-1}$ and $f_{\rho} \cap f_{\rho+1}$, respectively,
we have
\begin{equation*}
	\{ v-v' , u_\rho \} = 0,
\end{equation*}
and
\begin{align*}
	\{  v-v' , u_{\rho+1} \} 
	=  \{ v , u_{\rho+1} \} + c_{\rho+1}
	=  \{ v , -n_\rho u_\rho - u_{\rho-1} \} + c_{\rho+1}
	=  n_\rho c_\rho + c_{\rho-1} + c_{\rho+1}
	=  m_\rho c_\rho .
\end{align*}
Since $u_\rho$ and $u_{\rho+1}$ form a basis of $\bar{N}$ for any $\rho \in \Sigma(1)$, we have $l_\rho = m_\rho c_\rho$ for any $\rho \in \Sigma(1)$.
This relation in particular implies that $P_{\bs} = Q_{\beta}$.
Since $c_\rho$ is positive for any $\rho \in \Sigma(1)$, the origin lies in the interior of $Q_{\beta}$.
Suppose that $v$ is a vertex of $P$.
By using the relation $(n_\rho - m_\rho) c_\rho + c_{\rho-1} + c_{\rho+1} = 0$ for $\rho \in \Sigma(1)$ repeatedly, we see that for any $\rho \in \Sigma(1)$ there is an element $u \in \bar{N}$ such that $\{ v , u \} = c_\rho$.
Since $(c_\rho)_{\rho \in \Sigma(1) }$ is primitive, $v$ is also primitive.
Thus $Q_{\beta}$ is a Fano polygon.
Since $l_{\rho} = m_{\rho} c_{\rho}$ for any $\rho \in \Sigma(1)$, we see that $Q_{\beta}$ has no remainders.

We next construct the inverse map.
Suppose that $P$ is Fano polygon that has no remainders.
Let $\mathcal{F}$ be the set of the facets in $P$.
We define a finite index set $J$ by 
\begin{equation*}
	J = \{ (f,i) \mid \text{$f \in \mathcal{F}$ and $1 \leq i \leq l_f / c_f$} \}.
\end{equation*}
We also define a collection of primitive elements $( w_{f,i} )_{(f,i) \in J}$ in $\bar{N}$ by $w_{f,a} \coloneqq w_{f}$.
Let $\bs$ be the seed for $\bar{N}$ defined by $\bs = \{ w_{f,i} \in \bar{N} \mid  (f,i) \in J\}$.
Now we show that $\bs$ is of $q$-Painlev\'{e} type.
It is obvious that $\bs$ is full dimensional and primitive.
Choose a blowup data $\beta$ associated with $\bs$.
To show the condition \ref{item:H d > 0} in Proposition \ref{prop:q-painleve definitions}, we define an integer $c_{\rho}$ for each $\rho \in \Sigma(1)$ as follows.
We set $c_\rho \coloneqq c_{f}$ if $u_\rho = w_{f,i}$ for some $(f,i) \in J$.
If this is not the case, there are two faces $f$ and $f'$ such that
$u_\rho$ lies in the interior of the cone generated by $w_{f,1}$ and $w_{f',1}$.
In this case, we set $c_\rho \coloneqq -\{ v, u_\rho \} $ where $\{ v \} = f \cap f'$.
We have $n_\rho c_\rho + c_{\rho-1} + c_{\rho+1}=0$ in this second case since $n_\rho u_\rho +w_{\rho-1} + w_{\rho+1}=0$.
We now suppose that $c_\rho = c_{f}$ for some face $f$.
Let $f\pm1 \coloneqq P \cap H_{u_{\rho \pm 1},c_{\rho \pm 1}}$ be the neighboring facets or vertices of $f$.
Let $v$ and $v'$ be the vertices of $P$ given by $v \in (f-1) \cap f$ and $v' \in (f+1) \cap f$.
Then we have
\begin{equation*}
	\{ v-v', u_\rho \} = 0,\quad
	\{ v-v', u_{\rho+1} \} = n_\rho c_\rho + c_{\rho-1} + c_{\rho+1}
\end{equation*}
as before.
Since $u_\rho$ and $u_{\rho+1}$ form a basis of $\bar{N}$,
we have $n_\rho c_\rho + c_{\rho-1} + c_{\rho+1} = l_{f} = m_\rho c_\rho$.
Therefore, the seed $\bs$ is of $q$-Painlev\'{e} type.
It follows by definition that this gives the inverse map of $\bs \mapsto P_{\bs}$.
\end{proof}

\begin{theorem}\label{thm:seeds are q-P}
	All the seeds in Figure \ref{fig:seeds of q-P type} are of $q$-Painlev\'{e} type,
	and the symbol at the left of each seed is $R^{\perp}$ for this seed.
	Moreover, Fano polygons associated with these seeds are given in Figure \ref{fig:fano polygons}.
\end{theorem}
\begin{proof}
	This is proved by case-by-case analysis given in Appendix \ref{section:cluster data}.
	We explain the detail in the case of $R^{\perp} = E_7^{(1)}$.
	Let $\bar{N}=\ZZ^2$, which is equipped with the standard exterior product given by $(1,0) \wedge (0,1)=1$.
	Let $\bs$ be a seed in $\bar{N}$ given by
	\begin{equation*}
		\bs = \{ (1,0),(1,0),(1,0),(0,1),(0,1),(0,1),(0,1),(-1,0),(0,-1),(0,-1)\}.
	\end{equation*}
	Let $u_1 = (0,1)$, $u_2 = (-1,0)$, $u_3 = (0,-1)$, and $u_4 = (1,0)$.
	Let $\rho_{i} = \RR_{\geq 0} u_{i}$.
	Let $\Sigma$ be the smooth complete fan in $\RR^2$ whose rays are $\rho_1,\dots,\rho_4$.
	Choose a free cover $\gamma=(N,\bi,\psi)$ of $\bs$ and a labeling $(e_1,\dots,e_{10} )$ of $\bi$ such that
	\begin{equation*}
		w_i=
		\begin{cases}
			(0,1) &\text{if $i=1,\dots,6$},\\
			(-1,0)&\text{if $i=7$},\\
			(0,-1) &\text{if $i=8,9,10$},\\
			(1,0) &\text{if $i=11$},
		\end{cases}
	\end{equation*}
	where $w_i=\psi(e_i)$.
	Choose an element $\phi \in D(K)(\CC)$.
	These data define a blowup data $\beta$ associated with $\bs$.
	We simply write $\bar{D}_{\rho_i}$ and $D_{\rho_i}$ as $\bar{D}_i$ and $D_i$, respectively.
	We have $(D_1+3D_2+2D_3+D_4)^2=0$, which shows that $\bs$ is of $q$-Painlev\'{e} type by Proposition \ref{prop:q-painleve definitions}.
	We define the following basis of $K$:
	\begin{align*}
		\begin{alignedat}{4}
		&\alpha_0 =  e_4 - e_3, \quad
		&&\alpha_1 =  e_3 - e_2, \quad
		&&\alpha_2 =  e_2 - e_1, \quad
		&&\alpha_3 =  e_1 + e_6, \\
		&\alpha_4 =  e_5 + e_8, \quad
		&&\alpha_5 =  e_9-e_8, \quad
		&&\alpha_6 =  e_{10}-e_9,\quad
		&&\alpha_7=e_7-e_6.
		\end{alignedat}
	\end{align*}
	By the isomorphism in Proposition \ref{prop:K isom D perp}, these elements correspond to the following divisor classes:
	\begin{align*}
		&\alpha_0 \mapsto E_3 - E_4,\quad
		\alpha_1 \mapsto E_2 - E_3,\quad
		\alpha_2 \mapsto E_1 - E_2,\quad
		\alpha_3 \mapsto \pi^* \bar{D}_2 - E_1 - E_6,\\
		&\alpha_4 \mapsto \pi^* \bar{D}_1 - E_5 - E_8,\quad
		\alpha_5 \mapsto E_8-E_9,\quad
		\alpha_6 \mapsto E_9 - E_{10},\quad
		\alpha_7 \mapsto E_6 - E_7.
	\end{align*}
	Thus the intersection matrix for these elements is given by 
	\begin{equation*}
		\alpha_i \cdot \alpha_j = 
		\begin{cases}
			-2 &\text{if $i=j$} \\
			1 &\text{if there is an edge between $\alpha_i$ and $\alpha_j$ in the graph below}\\
			0 &\text{otherwise},
		\end{cases}
	\end{equation*}
	\begin{equation*}
		\begin{tikzpicture}[scale=1]
			\node[] (0) at (0,0) {$\alpha_0$};
			\node[] (1) at (1,0) {$\alpha_1$};
			\node[] (2) at (2,0) {$\alpha_2$};
			\node[] (3) at (3,0) {$\alpha_3$};
			\node[] (4) at (4,0) {$\alpha_4$};
			\node[] (5) at (5,0) {$\alpha_5$};
			\node[] (6) at (6,0) {$\alpha_6$};
			\node[] (7) at (3,1) {$\alpha_7$};
			\foreach \i/\j in {0/1,1/2,2/3,3/4,4/5,5/6,3/7}
				\draw (\i) edge (\j);
		\end{tikzpicture}
	\end{equation*}
	which shows that $\bs$ is of type $E_7^{(1)}$.
	By \eqref{eq:fano polygon by rho}, the Fano polygon $P_{\beta}$ is given by
	\begin{equation*}
		P_{\beta} = H_{u_1,1}^+ \cap H_{u_2,3}^+ \cap H_{u_3,2}^+ \cap H_{u_4,1}^+
		=
 		\begin{tikzpicture}[baseline=(base.base),scale=0.4]
 			\node[] (base) at (0,-1) {};
 			\draw[help lines] (-1,-3) grid (2,1);
 			\draw[draw,fill] (0,0) circle (0.6mm);
 			\draw[thick] (-1,1) -- (-1,-3);
 			\draw[thick] (-1,-3) -- (2,-3);
 			\draw[thick] (2,-3) -- (2,1);
 			\draw[thick] (2,1) -- (-1,1);
 		\end{tikzpicture}
	\end{equation*}
\end{proof}

\begin{figure}[t]
\begin{align*}
&
\begin{tikzpicture}[baseline=(o.base),scale=0.4]
	\node[] (o) at (0,0) {};
	\node[] at (0,-1.8) {$E_1^{(1)}$};
	\draw[help lines] (-1,-1) grid (1,1);
	\draw[draw,fill] (0,0) circle (0.6mm);
	\draw[thick] (0,1) -- (-1,0);
	\draw[thick] (-1,0) -- (-1,-1);
	\draw[thick] (-1,-1) -- (1,0);
	\draw[thick] (1,0) -- (0,1);
\end{tikzpicture}
\\
\begin{tikzpicture}[baseline=(o.base),scale=0.4]
	\node[] (o) at (0,0) {};
	\node[] at (0,-1.8) {$E_8^{(1)}$};
	\draw[help lines] (-3,-1) grid (3,2);
	\draw[draw,fill] (0,0) circle (0.6mm);
	\draw[thick] (-3,2) -- (-3,-1);
	\draw[thick] (-3,-1) -- (3,-1);
	\draw[thick] (3,-1) -- (3,2);
	\draw[thick] (3,2) -- (-3,2);
\end{tikzpicture}
\quad
\begin{tikzpicture}[baseline=(o.base),scale=0.4]
	\node[] (o) at (0,0) {};
	\node[] at (1,-1.8) {$E_7^{(1)}$};
	\draw[help lines] (-1,-1) grid (3,2);
	\draw[draw,fill] (0,0) circle (0.6mm);
	\draw[thick] (-1,2) -- (-1,-1);
	\draw[thick] (-1,-1) -- (3,-1);
	\draw[thick] (3,-1) -- (3,2);
	\draw[thick] (3,2) -- (-1,2);
\end{tikzpicture}
\quad
\begin{tikzpicture}[baseline=(o.base),scale=0.4]
	\node[] (o) at (0,0) {};
	\node[] at (0.5,-1.8) {$E_6^{(1)}$};
	\draw[help lines] (-1,-1) grid (2,1);
	\draw[draw,fill] (0,0) circle (0.6mm);
	\draw[thick] (-1,1) -- (-1,-1);
	\draw[thick] (-1,-1) -- (2,-1);
	\draw[thick] (2,-1) -- (2,1);
	\draw[thick] (2,1) -- (-1,1);
\end{tikzpicture}
\quad
\begin{tikzpicture}[baseline=(o.base),scale=0.4]
	\node[] (o) at (0,0) {};
	\node[] at (0,-1.8) {$E_5^{(1)}$};
	\draw[help lines] (-1,-1) grid (1,1);
	\draw[draw,fill] (0,0) circle (0.6mm);
	\draw[thick] (-1,1) -- (-1,-1);
	\draw[thick] (-1,-1) -- (1,-1);
	\draw[thick] (1,-1) -- (1,1);
	\draw[thick] (1,1) -- (-1,1);
\end{tikzpicture}
\quad
\begin{tikzpicture}[baseline=(o.base),scale=0.4]
	\node[] (o) at (0,0) {};
	\node[] at (0,-1.8) {$E_4^{(1)}$};
	\draw[help lines] (-1,-1) grid (1,1);
	\draw[draw,fill] (0,0) circle (0.6mm);
	\draw[thick] (-1,1) -- (-1,-1);
	\draw[thick] (-1,-1) -- (0,-1);
	\draw[thick] (0,-1) -- (1,0);
	\draw[thick] (1,0) -- (1,1);
	\draw[thick] (1,1) -- (-1,1);
\end{tikzpicture}
\quad
\begin{tikzpicture}[baseline=(o.base),scale=0.4]
	\node[] (o) at (0,0) {};
	\node[] at (0,-1.8) {$E_3^{(1)}$};
	\draw[help lines] (-1,-1) grid (1,1);
	\draw[draw,fill] (0,0) circle (0.6mm);
	\draw[thick] (0,1) -- (-1,0);
	\draw[thick] (-1,0) -- (-1,-1);
	\draw[thick] (-1,-1) -- (0,-1);
	\draw[thick] (0,-1) -- (1,0);
	\draw[thick] (1,0) -- (1,1);
	\draw[thick] (1,1) -- (0,1);
\end{tikzpicture}
\quad
\begin{tikzpicture}[baseline=(o.base),scale=0.4]
	\node[] (o) at (0,0) {};
	\node[] at (0,-1.8) {$E_2^{(1)}$};
	\draw[help lines] (-1,-1) grid (1,1);
	\draw[draw,fill] (0,0) circle (0.6mm);
	\draw[thick] (0,1) -- (-1,0);
	\draw[thick] (-1,0) -- (-1,-1);
	\draw[thick] (-1,-1) -- (1,0);
	\draw[thick] (1,0) -- (1,1);
	\draw[thick] (1,1) -- (0,1);
\end{tikzpicture}
\quad
&
\begin{tikzpicture}[baseline=(o.base),scale=0.4]
	\node[] (o) at (0,0) {};
	\node[] at (0,-1.8) {$E_1^{(1)'}$};
	\draw[help lines] (-1,-1) grid (1,1);
	\draw[draw,fill] (0,0) circle (0.6mm);
	\draw[thick] (1,1) -- (-1,0);
	\draw[thick] (-1,0) -- (-1,-1);
	\draw[thick] (-1,-1) -- (1,0);
	\draw[thick] (1,0) -- (1,1);
\end{tikzpicture}
\quad
\begin{tikzpicture}[baseline=(o.base),scale=0.4]
	\node[] (o) at (0,0) {};
	\node[] at (0,-1.8) {$E_0^{(1)}$};
	\draw[help lines] (-1,-1) grid (1,1);
	\draw[draw,fill] (0,0) circle (0.6mm);
	\draw[thick] (0,1) -- (-1,-1);
	\draw[thick] (-1,-1) -- (1,0);
	\draw[thick] (1,0) -- (0,1);
\end{tikzpicture}
\end{align*}
\caption{Representatives of the mutation equivalence classes of Fano polygons that have no remainders. These Fano polygons correspond the seeds in Figure \ref{fig:seeds of q-P type} after rotating $-\pi/2$.}
\label{fig:fano polygons}
\end{figure}
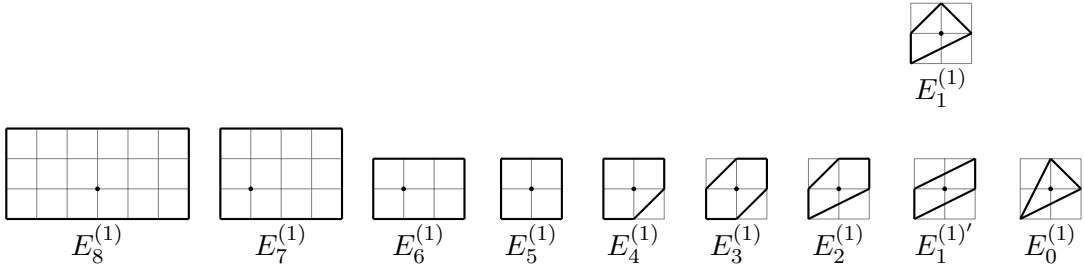

\begin{theorem}\label{thm:classification of q-P seeds}
	Suppose that $\bs$ is a seed of $q$-Painlev\'{e} type.
	Then $\bs$ is mutation equivalent to one and only one of the seeds in Figure \ref{fig:seeds of q-P type}.
\end{theorem}
\begin{proof}
	This follows from Proposition \ref{prop:bij seeds and polygons}, Theorem \ref{thm:seeds are q-P}, and the classification of Fano polygons up to mutation equivalence given by {\cite[Theorem 6]{KNP2017}}.
\end{proof}

\begin{corollary}\label{cor:mutation equiv iff same R perp}
	Suppose that $\bs$ and $\bs'$ are seeds of $q$-Painlev\'{e} type.
	Then $\bs$ and $\bs'$ are mutation equivalent if and only if they have the same $R^{\perp}$.
\end{corollary}

\subsection{\texorpdfstring{Actions of Cremona isometries on cluster Poisson varieties}{Actions of Cremona isometries on cluster Poisson varieties}}
\label{section:qP on cluster}
Let $\beta$ be a blowup data of $q$-Painlev\'{e} type.
Suppose that $\beta$ is of type $R^{\perp}$.
Let $(Y,D) \to (Y^{\flat},D^{\flat})$ be a toric blowup such that $D^{\flat}$ does not have $(-1)$-components as in Section \ref{section:q-P type}.
Let $W(R^{\perp})$ be the Weyl group associated with $R^{\perp}$:
\begin{equation*}
	W(R^{\perp}) = \langle s_{\alpha} \mid \text{$\alpha$ is a real root in $Q(E_{8}^{(1)})$ and $\alpha \in Q(R)^{\perp}$} \rangle \subseteq W(E_8^{(1)}),
\end{equation*}
where $s_{\alpha}$ is a reflection on $Q(E_{8}^{(1)})$ associated with $\alpha$.
The Weyl group $W(R^{\perp})$ can be embedded in $\Aut(\Pic(Y^{\flat}))$
by setting $s_{\alpha}(\lambda) = \lambda + (\alpha \cdot \lambda) \alpha$ for any $\lambda \in \Pic(Y^{\flat})$.
We denote by $\Aut(R^{\perp})$ the group of automorphism of the Dynkin diagram of type $R^{\perp}$,
where the Dynkin diagram is the graph given in Appendix \ref{section:cluster data} for each type $R^{\perp}$.

An automorphism $w \in \Aut(\Pic(Y^{\flat}))$
is called a \emph{Cremona isometry} if $w$ preserves
the intersection form on $\Pic(Y^{\flat})$,
the canonical class of $Y^{\flat}$, and
the monoid generated by the effective classes.
We denote by $\Cr(Y^{\flat})$ the group of the Cremona isometries.
This group is computed by Sakai:
\begin{theorem}[{\cite[Theorem 26]{Sakai}}]
	\label{theorem:Cremona}
	Suppose that $(Y^{\flat},D^{\flat})$ is generic, that is,
	there is no smooth rational curve with self-intersection $-2$
	that is disjoint from the irreducible components of $D^{\flat}$.
	Then
	\begin{enumerate}
	\item $\Cr(Y^{\flat}) \cong W(R^{\perp}) \rtimes \Aut(R^{\perp})$ for $R^{\perp} \neq E_2^{(1)}$, $E_1^{(1)}$, $E_1^{(1)'}$, and $E_0^{(1)}$,
	\item $\Cr(Y^{\flat}) \cong W(A_1^{(1)}) \rtimes \ZZ \rtimes \mathfrak{S}_2$ for $R^{\perp}=E_2^{(1)}$, 
	\item $\Cr(Y^{\flat}) \cong \ZZ \rtimes \mathfrak{S}_2$ for $R^{\perp}=E_1^{(1)}$,
	\item $\Cr(Y^{\flat}) \cong W(A_1^{(1)}) \rtimes \mathfrak{D}_8$ for $R^{\perp}={E_1^{(1)'}}$,
	\item $\Cr(Y^{\flat}) \cong \mathfrak{D}_3$ for $R^{\perp}=E_0^{(1)}$,
	\end{enumerate}
	where $\mathfrak{S}_n$ is the symmetric group of order $n!$ and
	$\mathfrak{D}_{n}$ is the dihedral group of order $2n$.
	In particular, the group $\Cr(Y^{\flat})$ for $R^{\perp}=E_r^{(1)}$ contains a free abelian group of rank $r$.
\end{theorem}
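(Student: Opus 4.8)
The plan is to translate everything into the Lorentzian lattice $\Pic(Y^\flat)$ and then carry out a case-by-case lattice computation using Coxeter's explicit embeddings $A_r^{(1)}\hookrightarrow E_8^{(1)}$ \cite{Coxeter34}. First I would contract all $(-1)$-curves contained in $D^\flat$ so that $Y^\flat$ becomes the blow-up of $\PP^2$ at nine (possibly infinitely near) points; then $\Pic(Y^\flat)=\ZZ H\oplus\bigoplus_{i=1}^{9}\ZZ E_i$, the canonical class $K=K_{Y^\flat}=-3H+\sum_i E_i$ satisfies $K^2=0$, the $(-2)$-classes in $K^\perp$ form the affine root system $E_8^{(1)}$, and the classes $[D_1'],\dots,[D_{r+1}']$ of the boundary components span a copy of $Q(R)$ with $Q(R)^\perp=Q(R^\perp)$ (Section \ref{section:q-P type}, Theorem \ref{theorem:GHK intersection form}). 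Since a Cremona isometry is by definition a lattice isometry of $\Pic(Y^\flat)$ fixing $K$ and the effective monoid, $\Cr(Y^\flat)$ is a purely discrete object, and the whole problem becomes a statement about the lattice $\ZZ^{1,9}$ equipped with its distinguished vector $K$, its effective monoid, and the sub-root-system $R$.

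Next I would use the genericity hypothesis, which enters in two ways. By adjunction every $(-2)$-curve lies in $K^\perp$ and is orthogonal to $[D^\flat]=-K$, hence is disjoint from $D^\flat$; so genericity says there is \emph{no} $(-2)$-curve other than the boundary components themselves, and in particular no root in $Q(R^\perp)$ is effective. It follows first that, for every real root $\alpha\in Q(R)^\perp$, the reflection $s_\alpha$ fixes $K$ (as $\alpha\perp K$) and fixes every $[D_i']$ (as $\alpha\perp Q(R)$) while permuting the classes of $(-1)$-curves and the boundary classes, which generate the effective monoid; hence $W(R^\perp)\subseteq\Cr(Y^\flat)$. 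It follows second that $D^\flat$ is the unique effective anticanonical divisor, so every $w\in\Cr(Y^\flat)$ permutes the irreducible components $[D_i']$ compatibly with their intersection pattern, i.e.\ acts on the cycle of length $r+1$ (a loop when $r=0$). This gives a homomorphism $\varphi\colon\Cr(Y^\flat)\to\Aut(\text{cycle})$, whose image (resp.\ a natural quotient) is identified with the diagram automorphism group $\Aut(R^\perp)$ by direct inspection of the diagrams in Appendix \ref{section:cluster data}.

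For the six types $R^\perp=E_8^{(1)},E_7^{(1)},E_6^{(1)},E_5^{(1)},E_4^{(1)},E_3^{(1)}$ (equivalently $r\le 5$) I would then prove that the sequence
\[
1\longrightarrow W(R^\perp)\longrightarrow\Cr(Y^\flat)\stackrel{\varphi}{\longrightarrow}\Aut(R^\perp)\longrightarrow 1
\]
is exact and split. Exactness amounts to $\ker\varphi=W(R^\perp)$: an isometry fixing $K$ and every $[D_i']$ restricts to an isometry of the (here negative definite) lattice $Q(R^\perp)$ that permutes the classes of $(-1)$-curves, and the standard description of the isometry group of such a lattice, together with genericity identifying every $(-1)$-class with an honest $(-1)$-curve, forces it into the Weyl group. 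Surjectivity and a section are obtained by writing down, for each generator of the diagram automorphism group, an explicit isometry of $\ZZ^{1,9}$ preserving $K$ and the effective monoid; this gives case (1).

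Finally, for the four types $R^\perp=E_2^{(1)},E_1^{(1)},E_1^{(1)'},E_0^{(1)}$ the analysis changes character: $Q(R^\perp)=K^\circ$ has small rank and is only negative \emph{semi}-definite (this is the $q$-Painlev\'{e} condition), so already the isometries of $K^\circ$ fixing the null root $\delta$ form an infinite group, and there are further Cremona isometries acting nontrivially on the long boundary cycle; for $E_1^{(1)}$ and $E_0^{(1)}$ there are in addition no $E_8^{(1)}$-roots orthogonal to $Q(R)$, so $W(R^\perp)$ degenerates. Here I would simply compute $\Cr(Y^\flat)$ by hand inside $\ZZ^{1,9}$, the crucial point being to exhibit the rank-$r$ lattice of ``translations'' and to check that, together with the remaining reflections and diagram symmetries, it exhausts the stabilizer of $(K,R,\text{effective monoid})$; this produces (2)--(5). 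The main obstacle is precisely this last step: genericity disposes of ``hidden'' $(-2)$-curves once and for all, but pinning down the infinite groups in the four degenerate cases --- especially writing the lattice of translations explicitly --- is where the real work lies, and it is there that Coxeter's embeddings are used most heavily.
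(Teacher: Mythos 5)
The paper does not prove this statement at all: it is imported verbatim from Sakai, with the attribution \cite[Theorem 26]{Sakai} and the sentence ``This group is computed by Sakai'' immediately preceding it. So there is no internal argument to compare yours against; what you have written is a reconstruction of Sakai's proof and has to be judged on its own terms.

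As such a reconstruction it has the right overall shape (pass to $\Pic(Y^\flat)\cong\ZZ^{1,9}$, use genericity to exclude nodal classes off the boundary, show $W(R^\perp)\subseteq\Cr(Y^\flat)$, map $\Cr(Y^\flat)$ to the symmetries of the boundary cycle, treat the small types separately), but the step you use to pin down $\ker\varphi$ in case (1) is wrong. You assert that for $R^\perp=E_8^{(1)},\dots,E_3^{(1)}$ the lattice $Q(R^\perp)$ is negative definite, so that an isometry fixing $K$ and every $[D_i']$ is forced into a ``standard'' isometry group of a definite lattice. But $Q(R^\perp)=Q(E_{8-r}^{(1)})$ is an \emph{affine} root lattice in every one of these cases: it is only negative semi-definite, its radical is $\ZZ\delta$ with $\delta$ corresponding to $[D^\flat]=-K_{Y^\flat}$, and $W(R^\perp)$ is the corresponding \emph{infinite} affine Weyl group --- the final sentence of the theorem even records that $\Cr(Y^\flat)$ contains a translation lattice of rank $r$, and for $R^\perp=E_r^{(1)}$ with $r\geq 3$ those translations already sit inside $W(R^\perp)$. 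The mechanism you invoke therefore cannot produce the answer; the actual argument must identify the stabilizer of $(K,\{[D_i']\},\text{effective cone})$ with the affine Weyl group through its action on a fundamental chamber, and it is exactly there that genericity (absence of nodal classes outside the boundary) does its real work. Separately, the four cases (2)--(5) are where the content of the theorem lies --- there the orthogonal ``roots'' have square $-8$ or $-14$ and so are not $(-2)$-classes, the naive Weyl group degenerates or must be supplemented by an extra $\ZZ$ of quasi-translations --- and ``compute by hand inside $\ZZ^{1,9}$'' is a statement of intent rather than a proof.
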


We denote by $\Cr(R^{\perp})$ the group of Cremona isometries for type $R^{\perp}$ in Theorem \ref{theorem:Cremona}.
Let $\Cr(R^{\perp})^{\mathrm{op}} = \{ w^* \mid w \in \Cr(R^{\perp})\}$ be its opposite group.

The main result of this section is the following theorem.
\begin{theorem}\label{theorem:qP action on X}
	Let $\bs$ be a seed of $q$-Painlev\'{e} type.
	Suppose that $\bs$ is of type $R^{\perp}$.
	Let $\gamma = (N,\bi,\psi)$ be a free cover of $\bs$.
	Then there is an injective morphism
	\begin{equation*}
		\Cr(R^{\perp})^{\mathrm{op}} \to \Gamma_{\bi}
	\end{equation*}
	such that,
	under this embedding,
	the action of $\Cr(R^{\perp})^{\mathrm{op}}$
	on $D(K)$ is given by
	the pull-back of functions
	\begin{equation*}
		w(z^{\alpha}) = z^{\sgn(w) w(\alpha)}
	\end{equation*}
	for any $\alpha \in K \cong D^{\perp} \otimes_{\ZZ} \QQ$ and $w \in \Cr(R^{\perp})$,
	where $\sgn(w) \in \{ \pm 1\}$ is determined by $w(q) = q^{\sgn(w)}$.
\end{theorem}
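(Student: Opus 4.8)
The plan is to realize every Cremona isometry as a concrete loop in the cluster modular groupoid based at $\mathbf{i}$, using the identification of seeds with blowing-down structures that underlies Table \ref{table:dictionary}. Fix a log CY surface $(Y^{\flat},D^{\flat})$ and a birational morphism $Y^{\flat}\to\PP^{2}$ as in Section \ref{section:q-P type}, so that $\Pic(Y^{\flat})$ is marked by an exceptional configuration and, by Theorem \ref{theorem:GHK intersection form}, $K^{\circ}$ is identified isometrically with $D^{\perp}\subseteq\Pic(Y^{\flat})$. The first step is to upgrade this to a dictionary for the whole mutation class: a seed $\mathbf{j}$ mutation equivalent to $\mathbf{i}$ determines, through the toric model of Definition \ref{def:rank 2 seed} together with the toric blowdown, a blowing-down structure on $Y^{\flat}$ and hence an isometry $\phi_{\mathbf{j}}\colon K^{\circ}\to D^{\perp}$ refining the one of Theorem \ref{theorem:GHK intersection form}; a seed mutation $\mu_{k}$ corresponds to an elementary change of blowing-down structure, and the $\GL_{2}(\bar N)$-ambiguity of the toric model corresponds to seed isomorphisms. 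Since $w$ permutes the components $D'_{j}$ of $D^{\flat}$, it preserves $D^{\perp}$ and fixes the class $\phi_{\mathbf{j}}(\delta)=\sum_{j}c'_{j}D'_{j}$ for every $\mathbf{j}$.

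Given $w\in\Cr(R^{\perp})\subseteq\Aut(\Pic(Y^{\flat}))$, the configuration obtained by transporting the blowing-down structure $\beta_{\mathbf{i}}$ of $\mathbf{i}$ by $w^{-1}$ is again a blowing-down structure of $(Y^{\flat},D^{\flat})$, because a Cremona isometry preserves the intersection form, the canonical class and the effective cone; hence it equals $\beta_{\mathbf{j}}$ for a unique seed $\mathbf{j}$ in the mutation class of $\mathbf{i}$, together with a distinguished seed cluster transformation $\mathbf{p}_{w}\colon\mathbf{i}\to\mathbf{j}$ realizing the change of structure and a seed isomorphism $\iota_{w}\colon\mathbf{j}\to\mathbf{i}$ induced by $w$ on the markings. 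One sets $\rho(w)\coloneqq\iota_{w}\circ\mathbf{p}_{w}\in\Gamma_{\mathbf{i}}$. Because changes of blowing-down structures compose correctly, a bookkeeping (cocycle) computation shows that $\rho$ reverses composition, giving a group homomorphism $\rho\colon\Cr(R^{\perp})^{\mathrm{op}}\to\Gamma_{\mathbf{i}}$.

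To identify the action on $T_{K^{*}}$, recall that by Corollary \ref{cor:cluster transformation standard form} and the definition of the functor $\mathcal{X}$ the action of any element of $\Gamma_{\mathbf{i}}$ on $T_{K^{*}}$ depends only on the composite of the seed-isomorphism parts restricted to $K$, the seed mutations going to the identity. For $\rho(w)$ this composite equals $\iota_{w}$ up to a sign $\varepsilon_{w}=\pm$, and by construction $\phi_{\mathbf{i}}\circ\iota_{w}|_{K^{\circ}}=\varepsilon_{w}\,w\circ\phi_{\mathbf{i}}$; tracking $\varepsilon_{w}$ through the defining formula $\mathcal{X}(\varepsilon\sigma)^{*}z^{\sigma(n)}=z^{\varepsilon n}$ of a seed isomorphism yields $\rho(w)^{*}z^{\alpha}=z^{\varepsilon_{w}w(\alpha)}$ for $\alpha\in K^{\circ}$. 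Evaluating at $\alpha=\delta$ and using that $w$ fixes $\phi_{\mathbf{i}}(\delta)$ shows $z^{\delta}\mapsto z^{\varepsilon_{w}\delta}$, i.e. $\varepsilon_{w}=\sgn(w)$, which is the claimed formula. Injectivity of $\rho$ follows from the geometry: the general fibers of $\lambda\colon\mathcal{X}_{\mathbf{i}}\to T_{K^{*}}$ are the interiors of $R$-surfaces, $\rho(w)$ restricts on such a fiber to the birational automorphism of $Y^{\flat}\setminus D^{\flat}$ determined by $w$, and this action of $\Cr(R^{\perp})$ is faithful for generic $(Y^{\flat},D^{\flat})$ (so in particular the induced positive birational self-maps of $T_{M}$ are pairwise distinct).

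Finally, since the mutation class of a seed of $q$-Painlev\'{e} type is determined by $R^{\perp}$ (Theorem \ref{theorem:same R perp}) and $\Gamma_{\mathbf{i}}$ depends only on the mutation class, it suffices to carry out the construction for one representative seed of each of the ten types. The reflections $s_{\alpha}$ along the real roots of $R^{\perp}$ together with the Dynkin-diagram automorphisms generate $\Cr(R^{\perp})$ (Theorem \ref{theorem:Cremona}), and in Appendix \ref{section:cluster data} each of these generators is written as an explicit seed cluster transformation; multiplicativity of $\rho$ then reduces to checking the defining relations of $\Cr(R^{\perp})$ on these explicit elements, and the $T_{K^{*}}$-formula and injectivity are verified directly. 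I expect the dictionary step to be the main obstacle: making ``change of blowing-down structure $=$ mutation'' precise enough that $w\mapsto\rho(w)$ is well-defined and reverses composition, and that every blowing-down structure of $(Y^{\flat},D^{\flat})$ arises from the mutation class of $\mathbf{i}$ — the surjectivity here being exactly where the Fano-polygon classification \cite{KNP2017} of Section \ref{section:fano} is used.
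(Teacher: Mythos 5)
Your proposal ends in the same place as the paper's proof, but it gets there by a genuinely different (and only partially executed) route, so let me compare the two. The paper's argument is deliberately minimal on geometry: it notes that a seed cluster transformation conjugates $\Gamma_{\mathbf{i}}$ to $\Gamma_{\mathbf{i}'}$ and that $K^{\circ}$ with its pairing is invariant (Theorem \ref{theorem:GHK intersection form}), reduces via Theorem \ref{theorem:same R perp} to the ten representative seeds, takes the embedding $\Cr(R^{\perp})^{\mathrm{op}}\to\Gamma_{\mathbf{i}}$ from \cite{BGM} as an explicit assignment of generators to seed cluster transformations, and then verifies the defining relations and the formula for the action on $T_{K^*}$ by direct computation in Appendix \ref{section:cluster data}; injectivity falls out because the computed action on $T_{K^*}$ is visibly faithful (the affine Weyl group together with the sign-twisted diagram automorphisms acts faithfully on the root lattice), with no appeal to genericity of the surface. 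Your last paragraph is exactly this, so the operative core of your proof coincides with the paper's. What you add --- and correctly flag as the main obstacle --- is an a priori construction of $\rho(w)$ by transporting blowing-down structures and matching changes of blowing-down structure with mutations. The paper never makes that dictionary precise (Table \ref{table:dictionary} is explicitly ``rough''), and your two uses of it are exactly the fragile points: (i) that every blowing-down structure of $(Y^{\flat},D^{\flat})$ compatible with the toric model is reached by a seed in the mutation class, so that $\mathbf{p}_w$ exists and $\rho$ is well defined and antimultiplicative; and (ii) that $\rho(w)$ acts on a general fiber of $\lambda$ as the birational automorphism induced by $w$, which is what your injectivity argument needs and which is itself a nontrivial statement about the Gross--Hacking--Keel fibration rather than something available for free. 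Neither of these is established in the paper or in the references as stated, so as written your conceptual middle is a sketch, not a proof; but since you independently supply the explicit generator-and-relation verification, the overall proposal is sound, and the $T_{K^*}$ computation already gives you injectivity more cheaply than the fiberwise argument. What the geometric route would buy, if completed, is a uniform, computation-free explanation of why the embedding exists at all, rather than a case-by-case check of ten Dynkin types.
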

\begin{proof}
	Suppose that $\bs$ and $\bs'$ are seeds of $q$-Painlev\'{e} type, and both are of type $R^{\perp}$.
	Then $\bs$ and $\bs'$ are mutation equivalent by Corollary \ref{cor:mutation equiv iff same R perp}.
	Therefore, by the results in Section \ref{section:seeds in rank 2} and Proposition \ref{prop:delta only depends on i}, it is enough to prove the theorem for a free cover associated with each seed in Figure \ref{fig:seeds of q-P type}.
	The proof is given by the explicit case-by-case computations in Appendix \ref{section:cluster data}.
	The construction of the embedding $\Cr(R^{\perp})^{\mathrm{op}} \to \Gamma_{\mathbf{i}}$ in Appendix \ref{section:cluster data} is based on \cite{BGM}.
	We explain the detail in the case of $R^{\perp} = E_7^{(1)}$.
	Let $\bs$ be the seed of type $E_7^{(1)}$ in Figure \ref{fig:seeds of q-P type}.
	Let $\beta$ be the blowup data associated with $\bs$ given in Appendix \ref{section:cluster data}. See also the proof of Theorem \ref{thm:seeds are q-P}.	
	By Theorem \ref{theorem:Cremona}, the group of Cremona isometries is given by
	\begin{equation*}
		\Cr(E_7^{(1)}) = \langle s_0,\dots,s_7 \rangle \rtimes \langle \iota \rangle \cong W(E_7^{(1)}) \rtimes \Aut(E_7^{(1)}),
	\end{equation*}
	where $s_0,\dots,s_7$ are reflections associated with $\alpha_0,\dots,\alpha_7$,
	and $\iota$ is the automorphism of the Dynkin diagram of type $E_7^{(1)}$ given by $\iota(\alpha_{\{ 0,1,2,3,4,5,6,7 \}}) = \alpha_{\{ 6,5,4,3,2,1,0,7 \}}$, where the notation means $\alpha_0 \mapsto  \alpha_6$ and so on.
	We define a map $\Cr(E_7^{(1)})^{\mathrm{op}} \to \Gamma_{\mathbf{i}}$ by
	\begin{align*}
		&s_0^* \mapsto (3,4), \quad 
		s_1^* \mapsto (2,3), \quad 
		s_2^* \mapsto (1,2), \quad 
		s_3^* \mapsto \mu_1^- \circ (1,6) \circ \mu_1^+, \\
		&s_4^* \mapsto \mu_{5}^{-} \circ (5,8) \circ \mu_{5}^{+},\quad
		s_5^* \mapsto (8,9),\quad
		s_6^* \mapsto (9,10),\quad
		s_7^* \mapsto (6,7),\\
		&\iota^* \mapsto \mu_5^{-} \circ -(1,5)(2,8)(3,9)(4,10) \circ \mu_5^+.
	\end{align*}
	We explain the notation in the right hand sides by using the image of $s_3^*$ as an example.
	By Remark \ref{rem:labeled seed mutation}, there are unique labeled seeds $(e_i')_{i \in I}$ and $(e_i'')_{i \in I}$ such that $\mu_1^+:(e_i)_{i \in I} \to (e_i')_{i \in I}$ and $\mu_1^-:(e_i'')_{i \in I} \to (e_i)_{i \in I}$ are labeled seed mutations.
	We can check that the permutation $(1,6)$ on $I$ satisfies the assumption of Lemma \ref{def:seed isom from permutation}, which gives a seed isomorphism $(1,6): [(e_i')_{i \in I}] \to [(e_i'')_{i \in I}]$.
	By composing them, we obtain the seed cluster transformation $\mu_1^- \circ (1,6) \circ \mu_1^+: \bi \to \bi$, where $\mu_1^+$ and $\mu_1^-$ in this equation are understood as seed mutations $\mu_{e_1}^+$ and $\mu_{-e_1}^-$, respectively.
	
	We can verify that this map is well-defined by directly checking that the following defining relations of the Weyl group hold in $\Gamma_{\mathbf{i}}$:
	\begin{align*}
		&s_i^2 = \id,\quad
		s_i s_j = s_j s_i \quad \text{(if $\alpha_i \cdot \alpha_j = 0)$},\quad
		s_i s_j s_i = s_j s_i s_j \quad \text{(if $\alpha_i \cdot \alpha_j = 1)$},\\
		&\iota^2 = \id, \quad \iota s_{\{ 0,1,2,3,4,5,6,7 \}} = s_{\{ 6,5,4,3,2,1,0,7 \}} \iota.
	\end{align*}
	We can also verify that the action of $\Cr(E_7^{(1)})^{\mathrm{op}}$
	on $D(K)$ induced by this map is given by the pull-back of functions
	\begin{equation*}
		s_i(z^{\alpha_j}) = z^{s_i(\alpha_j)},\quad
		\iota(z^{\alpha_i}) = z^{-\iota(\alpha_i)}.
	\end{equation*}
	This in particular implies that the group homomorphism $\Cr(E_7^{(1)})^{\mathrm{op}} \to \Gamma_{\mathbf{i}}$ is injective.
	The remaining statement follows from
	$s_i(z^\delta) = z^\delta$ and $\iota(z^\delta)=z^{-\delta}$.
\end{proof}

Consequently, we have the actions
\begin{equation*}
	\Cr(R^{\perp})^{\mathrm{op}} \to \Aut(\mathcal{X}_{\mathbf{i}}),
	\quad
	\Cr(R^{\perp})^{\mathrm{op}} \to \Aut(D(K))
\end{equation*}
such that the natural map $\lambda: \mathcal{X}_{\mathbf{i}} \to D(K)$ is equivariant with respect to this action.
The map $\lambda: \mathcal{X}_{\mathbf{i}} \to D(K)$ can be essentially thought of as a family of surfaces obtained by blowup a collection of points on the boundary of a toric variety, and deleting the proper transform of the boundary (Remark \ref{rem:period point}).
The action of the group of Cremona isometries on the cluster Poisson variety yields $q$-Painlev\'{e} systems as in \cite{Sakai}.

\begin{example}
Let $\beta$ be the blowup data of type $E_5^{(1)}$ in Appendix \ref{section:cluster data}.
Using the embedding $\Cr(E_5^{(1)})^{\mathrm{op}} \to \Gamma_{\mathbf{i}}$ in Appendix \ref{section:cluster data},
we define two seed cluster transformations $\mathbf{c}_1,\mathbf{c}_2 \in \Gamma_{\mathbf{i}}$ by
\begin{align*}
	&\mathbf{c}_1 = 
	(\iota_3 \circ s_0 \circ s_1 \circ (s_4 \circ s_5 \circ s_3)^2)^*,\\
	&\mathbf{c}_2 = 
	(\iota_1 \circ s_4 \circ s_5 \circ (s_0 \circ s_1 \circ s_2)^2)^*,
\end{align*}
where $\iota_3^* \coloneqq (\iota_2 \circ \iota_1 \circ \iota_2)^* = -(3,7)(4,8)$.
It is easy to verify that these are involutions.
The discrete dynamical system on the cluster variety $\mathcal{X}_{\mathbf{i}}$ obtained by alternately iterating $\mathbf{c}_1$ and $\mathbf{c}_2$ is the \emph{sixth $q$-Painlev\'{e} system} ($\qP_{\mathrm{VI}}$) \cite{JimboSakai,Sakai}.
We see this by writing explicit expressions for these two actions on an appropriate coordinate system.
Let $a_i = z^{\alpha_i}$,
$q = z^\delta = a_0a_1a_2^2a_3^2a_4a_5$,
$f=z^{(e_1+e_2-e_5-e_6)/4}$, and $g=z^{(e_3+e_4-e_7-e_8)/4}$.
Note that $f,g \in \CC[N]$, so these are functions on the initial affine chart of $\mathcal{X}_{\bi}$.
On the other hand, $a_0,\dots,a_5$ and $q$ are globally defined functions on $\mathcal{X}_{\bi}$.
In terms of these coordinates,
the actions of $\mathbf{c}_1$ and $\mathbf{c}_2$ on the initial affine chart of $\mathcal{X}_{\bi}$ are given by
\begin{align*}
	\mathcal{X}(\mathbf{c}_1)^*&: (a_0,a_1,a_2,a_3,a_4,a_5;f,g;q) \mapsto
	(a_0,a_1,q^{-1}a_2,a_3,a_4,a_5;\overline{f},g;q^{-1}),\\
	\mathcal{X}(\mathbf{c}_2)^*&: (a_0,a_1,a_2,a_3,a_4,a_5;f,g;q) \mapsto
	(a_0,a_1,a_2,q^{-1}a_3,a_4,a_5;f,\underline{g};q^{-1}),
\end{align*}
where $\overline{f}\coloneqq \mathcal{X}(\mathbf{c}_1)^*f$ and $\underline{g}\coloneqq \mathcal{X}(\mathbf{c}_2)^*g$, and these satisfy the following relations:
\begin{align}
	\label{eq:ffbar}
	f\overline{f} 
	&=b_7b_8 \frac{(g+b_3)(g+b_4)}{(g+b_7)(g+b_8)},\\
	\label{eq:ggbar}
	g\underline{g}
	&= b_1b_2 \frac{(f+b_5)(f+b_6)}{(f+b_1)(f+b_2)},
\end{align}
where
\begin{alignat*}{2}
	&b_1 \coloneqq z^{-e_1} f = (a_0a_1^{-1}a_2^{-2})^{1/4},\quad
	&&b_2 \coloneqq z^{-e_2} f = (a_0^{-3}a_1^{-1}a_2^{-2})^{1/4},\\
	&b_3 \coloneqq z^{-e_3} g = (a_3^{-2}a_4a_5^{-1})^{1/4},\quad
	&&b_4 \coloneqq z^{-e_4} g = (a_3^{-2}a_4^{-3}a_5^{-1})^{1/4},\\
	&b_5 \coloneqq z^{e_5} f = (a_0a_1^{-1}a_2^2)^{1/4},\quad
	&&b_6 \coloneqq z^{e_6} f = (a_0a_1^3a_2^2)^{1/4},\\
	&b_7 \coloneqq z^{e_7} g = (a_3^{2}a_4a_5^{-1})^{1/4},\quad
	&&b_8 \coloneqq z^{e_8} g = (a_3^{2}a_4a_5^{3})^{1/4}.
\end{alignat*}
The system of relations \eqref{eq:ffbar} and \eqref{eq:ggbar} is called the \emph{sixth $q$-Painlev\'{e} equation} \cite{JimboSakai}.
\end{example}

\appendix
\section{\texorpdfstring{Cluster data for $q$-Painlev\'{e} equations}{Cluster data for q-Painlev\'{e} equations}}
\label{section:cluster data}
In this Appendix, we provide blowup data associated with $q$-Painlev\'{e} systems.
The meaning of these data is explained especially in the proof of Theorem \ref{thm:seeds are q-P} and \ref{theorem:qP action on X}.
We also give a quiver for each type $R^{\perp}$, which is the directed graph whose signed adjacency matrix is $(w_i \wedge w_j)_{i,j \in I}$.
When we draw a quiver, we enclose vertices that have the same rows in the signed adjacency matrix.
Note that there are infinitely many seeds for each type $R^{\perp}$, which are related by seed cluster transformations,
and it seems that there is no canonical choice of a representative.
The seeds given in this appendix are chosen so that the associated quivers are relatively simple.
\subsection*{Type $E_8^{(1)}$}
\subsubsection*{Fan, seed, and quiver}
\[
\begin{tikzpicture}[,baseline=(o.base)]
	\node[] (o) at (0,0) {};
	\draw[help lines] (-1.1,-1.1) grid (1.1,1.1);
	\draw[draw,fill] (0,0) circle(0.4mm);
	\foreach \i/\j in {0/1,-1/0,0/-1,1/0}{
		\draw[fan arrow] (0,0) -- (\i,\j);
	};
	\node[left] at (-1,0) {$u_2$};
	\node[below] at (0,-1) {$u_3$};
	\node[right]  at (1,0) {$u_4$};
	\node[above] at (0,1) {$u_1$};
\end{tikzpicture}
\quad
\begin{tikzpicture}[,baseline=(o.base)]
	\node[] (o) at (0,0) {};
	\draw[help lines] (-1.1,-1.1) grid (1.1,1.1);
	\draw[draw,fill] (0,0) circle(0.4mm);
	\foreach \i/\j in {0/1,-1/0,0/-1,1/0}{
		\draw[fan arrow] (0,0) -- (\i,\j);
	};
	\node[left] at (-1,0) {$w_7$};
	\node[below] at (0,-1) {$w_8,w_9,w_{10}$};
	\node[right]  at (1,0) {$w_{11}$};
	\node[above] at (0,1) {$w_1,\dots,w_6$};
\end{tikzpicture}
\quad
\begin{tikzpicture}[scale=1.3,baseline=(o.base)]
	\node[] (o) at (0,0) {};
	\node[e_vertex,draw] (1) at (90:1) {$1,\dots,6$};
	\node[vertex,draw] (2) at (180:1) {$7$};
	\node[e_vertex,draw] (3) at (270:1) {$8,9,10$};
	\node[vertex,draw] (4) at (0:1) {$11$};
	\foreach \i/\j in {1/2,2/3,3/4,4/1}
		\draw (\i) edge[mid arrow] (\j);
\end{tikzpicture}
\]
\subsubsection*{Root data}
\begin{align*}
	&
	\begin{alignedat}{5}
		&\alpha_0 =  e_6 - e_5,\quad
		&&\alpha_1 =  e_5 - e_4,\quad
		&&\alpha_2 =  e_4 - e_3,\quad
		&&\alpha_3 =  e_3 - e_2,\quad
		&&\alpha_4 =  e_2 - e_1,\\
		&\alpha_5 = e_1 + e_8,\quad
		&&\alpha_6 = e_9-e_8,\quad
		&&\alpha_7 = e_{10}-e_9,\quad
		&&\alpha_8 = e_7 + e_{11},&&
	\end{alignedat}\\
	&
	\begin{alignedat}{1}
		\delta &= \alpha_0 + 2\alpha_1+ 3\alpha_2+ 4\alpha_3+ 5\alpha_4+ 6\alpha_5+ 4\alpha_6+ 2\alpha_7+ 3\alpha_8\\
		&=e_1 + e_2+e_3+e_4+e_5 + e_6 + 3e_7+2e_8+2e_9+2e_{10}+3e_{11}.
	\end{alignedat}
\end{align*}
\begin{align*}
	&\alpha_0 \mapsto E_5-E_6,\quad
	\alpha_1 \mapsto E_4-E_5,\quad
	\alpha_2 \mapsto E_3-E_4,\quad
	\alpha_3 \mapsto E_2-E_3,\quad
	\alpha_4 \mapsto E_1-E_2,\\
	&\alpha_5 \mapsto \pi^*\bar{D}_2-E_1-E_8,\quad
	\alpha_6 \mapsto E_8-E_9,\quad
	\alpha_7 \mapsto E_9-E_{10},\quad
	\alpha_8 \mapsto \pi^*\bar{D}_1 - E_7-E_{11},\\
	&\delta \mapsto D_1+3D_2+2D_3+3D_4.
\end{align*}
\[
\begin{tikzpicture}[scale=1]
	\node[] (0) at (0,0) {$\alpha_0$};
	\node[] (1) at (1,0) {$\alpha_1$};
	\node[] (2) at (2,0) {$\alpha_2$};
	\node[] (3) at (3,0) {$\alpha_3$};
	\node[] (4) at (4,0) {$\alpha_4$};
	\node[] (5) at (5,0) {$\alpha_5$};
	\node[] (6) at (6,0) {$\alpha_6$};
	\node[] (7) at (7,0) {$\alpha_7$};
	\node[] (8) at (5,1) {$\alpha_8$};
	\foreach \i/\j in {0/1,1/2,2/3,3/4,4/5,5/6,6/7,5/8}
		\draw (\i) edge (\j);
\end{tikzpicture}
\]
\[
	\alpha_i^{2} = -2 \quad \text{for any $i$}. 
\]
\subsubsection*{Embedding $\Cr(R^{\perp})^{\mathrm{op}} \to \Gamma_{\mathbf{i}}$}
\[
	\Cr(E_8^{(1)}) = \langle s_0,\dots,s_8 \rangle \cong W(E_8^{(1)}),
\]
\begin{align*}
	&s_{0}^*= (5,6), \quad
	s_{1}^*= (4,5), \quad
	s_{2}^*= (3,4), \quad
	s_{3}^*= (2,3), \quad
	s_{4}^*= (1,2),\\
	&s_5^* = \mu_{1}^{-} \circ (1,8) \circ \mu_{1}^{+},\quad
	s_6^* = (8,9),\quad
	s_7^* = (9,10),\quad
	s_8^* = \mu_7^- \circ (7,11) \circ \mu_7^+.
\end{align*}
\subsubsection*{Action on $D(K)$}
\[
	s_i(z^{\alpha_j}) = z^{s_i(\alpha_j)}.
\]

\subsection*{Type $E_7^{(1)}$}
\subsubsection*{Fan, seed, and quiver}
\[
\begin{tikzpicture}[,baseline=(o.base)]
	\node[] (o) at (0,0) {};
	\draw[help lines] (-1.1,-1.1) grid (1.1,1.1);
	\draw[draw,fill] (0,0) circle(0.4mm);
	\foreach \i/\j in {0/1,-1/0,0/-1,1/0}{
		\draw[fan arrow] (0,0) -- (\i,\j);
	};
	\node[left] at (-1,0) {$u_2$};
	\node[below] at (0,-1) {$u_3$};
	\node[right]  at (1,0) {$u_4$};
	\node[above] at (0,1) {$u_1$};
\end{tikzpicture}
\quad
\begin{tikzpicture}[,baseline=(o.base)]
	\node[] (o) at (0,0) {};
	\draw[help lines] (-1.1,-1.1) grid (1.1,1.1);
	\draw[draw,fill] (0,0) circle(0.4mm);
	\foreach \i/\j in {0/1,-1/0,0/-1,1/0}{
		\draw[fan arrow] (0,0) -- (\i,\j);
	};
	\node[left] at (-1,0) {$w_5$};
	\node[below] at (0,-1) {$w_6,w_7$};
	\node[right]  at (1,0) {$w_8,w_9,w_{10}$};
	\node[above] at (0,1) {$w_1,\dots,w_4$};
\end{tikzpicture}
\quad
\begin{tikzpicture}[scale=1.3,baseline=(o.base)]
	\node[] (o) at (0,0) {};
	\node[e_vertex,draw] (1) at (90:1) {$1,\dots,4$};
	\node[vertex,draw] (2) at (180:1) {$5$};
	\node[e_vertex,draw] (3) at (270:1) {$6,7$};
	\node[e_vertex,draw] (4) at (0:1) {$8,9,10$};
	\foreach \i/\j in {1/2,2/3,3/4,4/1}
		\draw (\i) edge[mid arrow] (\j);
\end{tikzpicture}
\]
\subsubsection*{Root data}
\begin{align*}
	&
	\begin{alignedat}{4}
	&\alpha_0 =  e_4 - e_3, \quad
	&&\alpha_1 =  e_3 - e_2, \quad
	&&\alpha_2 =  e_2 - e_1, \quad
	&&\alpha_3 =  e_1 + e_6, \\
	&\alpha_4 =  e_5 + e_8, \quad
	&&\alpha_5 =  e_9-e_8, \quad
	&&\alpha_6 =  e_{10}-e_9,\quad
	&&\alpha_7=e_7-e_6,
	\end{alignedat}\\
	&
	\begin{alignedat}{1}
	\delta &= \alpha_0+2\alpha_1+3\alpha_2+4\alpha_3+3\alpha_4+2\alpha_5+\alpha_6+2\alpha_7 \\
	&= e_1 + e_2+e_3 + e_4 + 3 e_5 + 2e_6 + 2e_7 + e_8 + e_9 + e_{10}.
	\end{alignedat}
\end{align*}
\begin{align*}
	&\alpha_0 \mapsto E_3 - E_4,\quad
	\alpha_1 \mapsto E_2 - E_3,\quad
	\alpha_2 \mapsto E_1 - E_2,\quad
	\alpha_3 \mapsto \pi^* \bar{D}_2 - E_1 - E_6,\\
	&\alpha_4 \mapsto \pi^* \bar{D}_1 - E_5 - E_8,\quad
	\alpha_5 \mapsto E_8-E_9,\quad
	\alpha_6 \mapsto E_9 - E_{10},\quad
	\alpha_7 \mapsto E_6 - E_7,\\
	&\delta \mapsto D_1 + 3 D_2 + 2D_3 + D_4.
\end{align*}
\[
\begin{tikzpicture}[scale=1]
	\node[] (0) at (0,0) {$\alpha_0$};
	\node[] (1) at (1,0) {$\alpha_1$};
	\node[] (2) at (2,0) {$\alpha_2$};
	\node[] (3) at (3,0) {$\alpha_3$};
	\node[] (4) at (4,0) {$\alpha_4$};
	\node[] (5) at (5,0) {$\alpha_5$};
	\node[] (6) at (6,0) {$\alpha_6$};
	\node[] (7) at (3,1) {$\alpha_7$};
	\foreach \i/\j in {0/1,1/2,2/3,3/4,4/5,5/6,3/7}
		\draw (\i) edge (\j);
\end{tikzpicture}
\]
\[
	\alpha_i^{2} = -2 \quad \text{for any $i$}. 
\]
\subsubsection*{Embedding $\Cr(R^{\perp})^{\mathrm{op}} \to \Gamma_{\mathbf{i}}$}
\[
	\Cr(E_7^{(1)}) = \langle s_0,\dots,s_7 \rangle \rtimes \langle \iota \rangle \cong W(E_7^{(1)}) \rtimes \Aut(E_7^{(1)}),
\]
\begin{align*}
	&s_0^* = (3,4), \quad 
	s_1^* = (2,3), \quad 
	s_2^* = (1,2), \quad 
	s_3^* = \mu_1^- \circ (1,6) \circ \mu_1^+, \\
	&s_4^* = \mu_{5}^{-} \circ (5,8) \circ \mu_{5}^{+},\quad
	s_5^* = (8,9),\quad
	s_6^* = (9,10),\quad
	s_7^* = (6,7),\\
	&\iota^* = \mu_5^{-} \circ -(1,5)(2,8)(3,9)(4,10) \circ \mu_5^+
\end{align*}
\subsubsection*{Action on $D(K)$}
\[
	s_i(z^{\alpha_j}) = z^{s_i(\alpha_j)},\quad
	\iota(z^{\alpha_{\{0,1,2,3,4,5,6,7\}}}) = z^{-\alpha_{\{6,5,4,3,2,1,0,7\}}}.
\]

\subsection*{Type $E_6^{(1)}$}
\subsubsection*{Fan, seed, and quiver}
\[
\begin{tikzpicture}[,baseline=(o.base)]
	\node[] (o) at (0,0) {};
	\draw[help lines] (-1.1,-1.1) grid (1.1,1.1);
	\draw[draw,fill] (0,0) circle(0.4mm);
	\foreach \i/\j in {0/1,-1/0,0/-1,1/0}{
		\draw[fan arrow] (0,0) -- (\i,\j);
	};
	\node[above]  at (0,1) {$u_1$};
	\node[left]  at (-1,0) {$u_2$};
	\node[below] at (0,-1) {$u_3$};
	\node[right] at (1,0) {$u_4$};
\end{tikzpicture}
\quad
\begin{tikzpicture}[,baseline=(o.base)]
	\node[] (o) at (0,0) {};
	\draw[help lines] (-1.1,-1.1) grid (1.1,1.1);
	\draw[draw,fill] (0,0) circle(0.4mm);
	\foreach \i/\j in {0/1,-1/0,0/-1,1/0}{
		\draw[fan arrow] (0,0) -- (\i,\j);
	};
	\node[above]  at (0,1) {$w_1,w_2,w_3$};
	\node[left]  at (-1,0) {$w_4$};
	\node[below] at (0,-1) {$w_5,w_6,w_7$};
	\node[right] at (1,0) {$w_8,w_9$};
\end{tikzpicture}
\quad
\begin{tikzpicture}[scale=1.3,baseline=(o.base)]
	\node[] (o) at (0,0) {};
	\node[e_vertex,draw] (1) at (90:1) {$1,2,3$};
	\node[vertex,draw] (2) at (180:1) {$4$};
	\node[e_vertex,draw] (3) at (270:1) {$5,6,7$};
	\node[e_vertex,draw] (4) at (0:1) {$8,9$};
	\foreach \i/\j in {1/2,2/3,3/4,4/1}
		\draw (\i) edge[mid arrow] (\j);
\end{tikzpicture}
\]
\subsubsection*{Root data}
\begin{align*}
	&\alpha_0 = e_9-e_8,\quad
	\alpha_1 = e_3-e_2, \quad
	\alpha_2 = e_2-e_1, \quad
	\alpha_3 = e_1+e_5,\\
	&\alpha_4 = e_6-e_5, \quad
	\alpha_5 = e_7-e_6,\quad
	\alpha_6 = e_4+e_8,\\
	&
	\begin{alignedat}{1}
		\delta &= \alpha_0+\alpha_1+2\alpha_2+3\alpha_3+2\alpha_4+\alpha_5+2\alpha_6\\
		&= e_1 + e_2 +e_3 +2e_4 +e_5 + e_6 +e_7+e_8+e_9.
	\end{alignedat}
\end{align*}
\begin{align*}
	&\alpha_0 \mapsto E_8-E_9,\quad
	\alpha_1 \mapsto E_2-E_3, \quad
	\alpha_2 \mapsto E_1-E_2, \quad
	\alpha_3 \mapsto \pi^* \bar{D}_2-E_1-E_5,\\
	&\alpha_4 \mapsto E_5-E_6, \quad
	\alpha_5 \mapsto E_6-E_7,\quad
	\alpha_6 \mapsto \pi^* \bar{D}_1 - E_4-E_8,\\
	&\delta \mapsto D_1+ 2 D_2+ D_3+ D_4.
\end{align*}
\[
\begin{tikzpicture}[scale=1]
	\node[] (0) at (3,2) {$\alpha_0$};
	\node[] (1) at (1,0) {$\alpha_1$};
	\node[] (2) at (2,0) {$\alpha_2$};
	\node[] (3) at (3,0) {$\alpha_3$};
	\node[] (6) at (3,1) {$\alpha_6$};
	\node[] (4) at (4,0) {$\alpha_4$};
	\node[] (5) at (5,0) {$\alpha_5$};
	\foreach \i/\j in {0/6,1/2,2/3,3/4,4/5,3/6}
		\draw (\i) edge (\j);
\end{tikzpicture}
\]
\[
	\alpha_i^{2} = -2 \quad \text{for any $i$}. 
\]
\subsubsection*{Embedding $\Cr(R^{\perp})^{\mathrm{op}} \to \Gamma_{\mathbf{i}}$}
\[
	\Cr(E_6^{(1)}) = \langle s_0,\dots,s_6 \rangle \rtimes \langle \iota_1,\iota_2 \rangle \cong W(E_6^{(1)}) \rtimes \Aut(E_6^{(1)}),
\]
\begin{align*}
	&s_0^* = (8,9),\quad
	s_1^* = (2,3),\quad
	s_2^* = (1,2),\quad
	s_3^* = \mu_{1}^{-} \circ (1,5) \circ \mu_{1}^{+},\\
	&s_4^* = (5,6),\quad
	s_5^* = (6,7),\quad
	s_6^* = \mu_{4}^{-} \circ (4,8) \circ \mu_{4}^{+},\\
	&\iota_1^* = -(1,5)(2,6)(3,7),\quad
	\iota_2^*= \mu_4^{-} \circ  - (1,4)(2,8)(3,9) \circ \mu_4^+.
\end{align*}
\subsubsection*{Action on $D(K)$}
\[
	s_i(z^{\alpha_j}) = z^{s_i(\alpha_j)},\quad
	\iota_1(z^{\alpha_{\{0,1,2,3,4,5,6\}}}) = z^{-\alpha_{\{ 0,5,4,3,2,1,6 \}}},\quad
	\iota_2(z^{\alpha_{\{0,1,2,3,4,5,6\}}}) = z^{-\alpha_{\{ 1,0,6,3,4,5,2 \}}}.
\]

\subsection*{Type $E_5^{(1)}$}
\subsubsection*{Fan, seed, and quiver}
\[
\begin{tikzpicture}[,baseline=(o.base)]
	\node[] (o) at (0,0) {};
	\draw[help lines] (-1.1,-1.1) grid (1.1,1.1);
	\draw[draw,fill] (0,0) circle(0.4mm);
	\foreach \i/\j in {0/1,-1/0,0/-1,1/0}{
		\draw[fan arrow] (0,0) -- (\i,\j);
	};
	\node[above]  at (0,1) {$u_1$};
	\node[left]  at (-1,0) {$u_2$};
	\node[below] at (0,-1) {$u_3$};
	\node[right] at (1,0) {$u_4$};
\end{tikzpicture}
\quad
\begin{tikzpicture}[,baseline=(o.base)]
	\node[] (o) at (0,0) {};
	\draw[help lines] (-1.1,-1.1) grid (1.1,1.1);
	\draw[draw,fill] (0,0) circle(0.4mm);
	\foreach \i/\j in {0/1,-1/0,0/-1,1/0}{
		\draw[fan arrow] (0,0) -- (\i,\j);
	};
	\node[above]  at (0,1) {$w_1,w_2$};
	\node[left]  at (-1,0) {$w_3,w_4$};
	\node[below] at (0,-1) {$w_5,w_6$};
	\node[right] at (1,0) {$w_7,w_8$};
\end{tikzpicture}
\quad
\begin{tikzpicture}[scale=1.3,baseline=(o.base)]
	\node[] (o) at (0,0) {};
	\node[e_vertex,draw] (1) at (90:1) {$1,2$};
	\node[e_vertex,draw] (2) at (180:1) {$3,4$};
	\node[e_vertex,draw] (3) at (270:1) {$5,6$};
	\node[e_vertex,draw] (4) at (0:1) {$7,8$};
	\foreach \i/\j in {1/2,2/3,3/4,4/1}
		\draw (\i) edge[mid arrow] (\j);
\end{tikzpicture}
\]
\subsubsection*{Root data}
\begin{align*}
	&\alpha_0 = e_2-e_1,\quad
	\alpha_1 = e_6-e_5,\quad
	\alpha_2 = e_1+e_5,\quad
	\alpha_3 = e_3+e_7,\quad
	\alpha_4 = e_4-e_3,\quad
	\alpha_5 = e_8-e_7,\\
	&\delta = \alpha_0+\alpha_1+2\alpha_2+2\alpha_3+\alpha_4+\alpha_5=e_1 + e_2+e_3+e_4+e_5+e_6+e_7 + e_8.
\end{align*}
\begin{align*}
	&\alpha_0 \mapsto E_1-E_2,\quad
	\alpha_1 \mapsto E_5-E_6,\quad
	\alpha_2 \mapsto \pi^*\bar{D}_2-E_1-E_5,\\
	&\alpha_3 \mapsto \pi^*\bar{D}_1-E_3-E_7,\quad
	\alpha_4 \mapsto E_3-E_4,\quad
	\alpha_5 \mapsto E_7-E_8,\\
	&\delta \mapsto D_1 + D_2 + D_3 + D_4.
\end{align*}
\[
\begin{tikzpicture}[scale=1]
	\node[] (0) at (2,1) {$\alpha_0$};
	\node[] (1) at (1,0) {$\alpha_1$};
	\node[] (2) at (2,0) {$\alpha_2$};
	\node[] (3) at (3,0) {$\alpha_3$};
	\node[] (4) at (4,0) {$\alpha_4$};
	\node[] (5) at (3,1) {$\alpha_5$};
	\foreach \i/\j in {1/2,2/3,3/4,3/5,0/2}
		\draw (\i) edge (\j);
\end{tikzpicture}
\]
\[
	\alpha_i^{2} = -2 \quad \text{for any $i$}. 
\]
\subsubsection*{Embedding $\Cr(R^{\perp})^{\mathrm{op}} \to \Gamma_{\mathbf{i}}$}
\[
	\Cr(E_5^{(1)}) = \langle s_0,\dots,s_5 \rangle \rtimes \langle \iota_1,\iota_2 \rangle \cong W(D_5^{(1)}) \rtimes \Aut(D_5^{(1)}),
\]
\begin{align*}
	&s_0^* = (1,2),\quad
	s_1^* = (5,6),\quad
	s_2^* = \mu_1^- \circ (1,5) \circ \mu_1^+,\\
	&s_3^* = \mu_3^- \circ (3,7) \circ \mu_3^+,\quad
	s_4^* = (3,4),\quad
	s_5^* = (7,8),\\
	&\iota_1^* = -(1,5)(2,6),\quad
	\iota_2^* = -(1,3)(2,4)(5,7)(6,8).
\end{align*}
\subsubsection*{Action on $D(K)$}
\[
	s_i(z^{\alpha_j}) = z^{s_i(\alpha_j)},\quad
	\iota_1(z^{\alpha_{\{0,1,2,3,4,5\}}}) = z^{-\alpha_{\{1,0,2,3,4,5\}}},\quad
	\iota_2(z^{\alpha_{\{0,1,2,3,4,5\}}}) = z^{-\alpha_{\{ 4,5,3,2,0,1 \}}}.
\]

\subsection*{Type $E_4^{(1)}$}
\subsubsection*{Fan, seed, and quiver}
\[
\begin{tikzpicture}[,baseline=(o.base)]
	\node[] (o) at (0,0) {};
	\draw[help lines] (-1.1,-1.1) grid (1.1,1.1);
	\draw[draw,fill] (0,0) circle(0.4mm);
	\foreach \i/\j in {0/1,-1/0,0/-1,1/0,-1/1}{
		\draw[fan arrow] (0,0) -- (\i,\j);
	};
	\node[above]  at (0,1) {$u_1$};
	\node[left] at (-1,1) {$u_2$};
	\node[left]  at (-1,0) {$u_3$};
	\node[below] at (0,-1) {$u_4$};
	\node[right] at (1,0) {$u_5$};
\end{tikzpicture}
\quad
\begin{tikzpicture}[,baseline=(o.base)]
	\node[] (o) at (0,0) {};
	\draw[help lines] (-1.1,-1.1) grid (1.1,1.1);
	\draw[draw,fill] (0,0) circle(0.4mm);
	\foreach \i/\j in {0/1,-1/0,0/-1,1/0,-1/1}{
		\draw[fan arrow] (0,0) -- (\i,\j);
	};
	\node[above]  at (0,1) {$w_1$};
	\node[left] at (-1,1) {$w_2$};
	\node[left]  at (-1,0) {$w_3$};
	\node[below] at (0,-1) {$w_4,w_5$};
	\node[right] at (1,0) {$w_6,w_7$};
\end{tikzpicture}
\quad
\begin{tikzpicture}[scale=1.3,baseline=(o.base)]
	\node[] (o) at (0,0) {};
	\node[vertex,draw] (1) at (360/5-54:1) {$1$};
	\node[vertex,draw] (2) at (2*360/5-54:1) {$2$};
	\node[vertex,draw] (3) at (3*360/5-54:1) {$3$};
	\node[e_vertex,draw] (4) at (4*360/5-54:1) {$4,5$};
	\node[e_vertex,draw] (5) at (5*360/5-54:1) {$6,7$};
	\foreach \i/\j in {1/2,2/3,3/4,4/5,5/1,1/3,2/4,5/2}
		\draw (\i) edge[mid arrow] (\j);
\end{tikzpicture}
\]
\subsubsection*{Root data}
\begin{align*}
	&\alpha_0 = e_2+e_4+e_6,\quad
	\alpha_1 = e_5-e_4,\quad
	\alpha_2 = e_1+e_4,\quad
	\alpha_3 = e_3+e_6,\quad
	\alpha_4 = e_7-e_6,\\
	&\delta = \alpha_0+\alpha_1+\alpha_2+\alpha_3+\alpha_4= e_1+e_2+e_3+e_4+e_5+e_6+e_7.
\end{align*}
\begin{align*}
	&\alpha_0 \mapsto \pi^*(\bar{D}_1+\bar{D}_2+\bar{D}_3)-E_2-E_4-E_6,\quad
	\alpha_1 \mapsto E_4-E_5,\\
	&\alpha_2 \mapsto \pi^*\bar{D}_5 -E_1-E_4,\quad
	\alpha_3 \mapsto \pi^*\bar{D}_4 -E_3-E_6,\quad
	\alpha_4 \mapsto E_6-E_7,\\
	&\delta \mapsto D_1+D_2+D_3+D_4+D_5.
\end{align*}
\[
\begin{tikzpicture}[scale=1]
	\node[] (0) at (2.5,1) {$\alpha_0$};
	\node[] (1) at (1,0) {$\alpha_1$};
	\node[] (2) at (2,0) {$\alpha_2$};
	\node[] (3) at (3,0) {$\alpha_3$};
	\node[] (4) at (4,0) {$\alpha_4$};
	\foreach \i/\j in {1/2,2/3,3/4,3/4,4/0,0/1}
		\draw (\i) edge (\j);
\end{tikzpicture}
\]
\[
	\alpha_i^{2} = -2 \quad \text{for any $i$}. 
\]
\subsubsection*{Embedding $\Cr(R^{\perp})^{\mathrm{op}} \to \Gamma_{\mathbf{i}}$}
\[
	\Cr(E_4^{(1)}) = \langle s_0,\dots,s_4 \rangle \rtimes \langle \iota_1,\iota_2 \rangle \cong W(A_4^{(1)}) \rtimes \Aut(A_4^{(1)}), 
\]
\begin{align*}
	&s_0^* = \mu_2^- \circ \mu_4^- \circ (4,6) \circ \mu_4^+ \circ \mu_2^+,\quad
	s_1^* = (4,5),\\
	&s_2^* = \mu_1^- \circ (1,4) \circ \mu_1^+,\quad
	s_3^* = \mu_3^- \circ (3,6) \circ \mu_3^+,\quad
	s_4^* = (6,7),\\
	&\iota_1 = (1,7,5,3,2)(4,6) \circ \mu_4^+,\quad
	\iota_2 = - (1,3)(4,6)(5,7).
\end{align*}
\subsubsection*{Action on $D(K)$}
\[
	s_i(z^{\alpha_j}) = z^{s_i(\alpha_j)},\quad
	\iota_1(z^{\alpha_{\{0,1,2,3,4\}}}) = z^{\alpha_{\{3,4,0,1,2\}}},\quad
	\iota_2(z^{\alpha_{\{0,1,2,3,4\}}}) = z^{-\alpha_{\{ 0,4,3,2,1 \}}}.
\]

\subsection*{Type $E_3^{(1)}$}
\subsubsection*{Fan, seed, and quiver}
\[
\begin{tikzpicture}[baseline=(o.base)]
	\node[] (o) at (0,0) {};
	\draw[help lines] (-1.1,-1.1) grid (1.1,1.1);
	\draw[draw,fill] (0,0) circle (0.4mm);
	\foreach \i/\j in {0/1,-1/1,-1/0,0/-1,1/-1,1/0}{
		\draw[fan arrow] (0,0) -- (\i,\j);
	};
	\node[above] (w1) at (0,1) {$u_1$};
	\node[left] (w2) at (-1,1) {$u_2$};
	\node[left] (w3) at (-1,0) {$u_3$};
	\node[below] (w4) at (0,-1) {$u_4$};
	\node[right] (w5) at (1,-1) {$u_5$};
	\node[right] (w6) at (1,0) {$u_6$};
\end{tikzpicture}
\quad
\begin{tikzpicture}[baseline=(o.base)]
	\node[] (o) at (0,0) {};
	\draw[help lines] (-1.1,-1.1) grid (1.1,1.1);
	\draw[draw,fill] (0,0) circle (0.4mm);
	\foreach \i/\j in {0/1,-1/1,-1/0,0/-1,1/-1,1/0}{
		\draw[fan arrow] (0,0) -- (\i,\j);
	};
	\node[above] (w1) at (0,1) {$w_1$};
	\node[left] (w2) at (-1,1) {$w_2$};
	\node[left] (w3) at (-1,0) {$w_3$};
	\node[below] (w4) at (0,-1) {$w_4$};
	\node[right] (w5) at (1,-1) {$w_5$};
	\node[right] (w6) at (1,0) {$w_6$};
\end{tikzpicture}
\quad
\begin{tikzpicture}[scale=1.3,baseline=(o.base)]
	\node[] (o) at (0,0) {};
	\node[vertex,draw] (1) at (60:1) {$1$};
	\node[vertex,draw] (2) at (120:1) {$2$};
	\node[vertex,draw] (3) at (180:1) {$3$};
	\node[vertex,draw] (4) at (240:1) {$4$};
	\node[vertex,draw] (5) at (300:1) {$5$};
	\node[vertex,draw] (6) at (360:1) {$6$};
	\foreach \i/\j in {1/2,2/3,3/4,4/5,5/6,6/1,1/3,2/4,3/5,4/6,5/1,6/2}
		\draw[mid arrow] (\i) -- (\j);
\end{tikzpicture}
\]
\subsubsection*{Root data}
\begin{align*}
	&\alpha_0 = e_1+e_4, \quad
	\alpha_1 = e_2+e_5, \quad
	\alpha_2 = e_3+e_6,\\
	&\alpha_3 = e_1 + e_3 + e_5, \quad
	\alpha_4 = e_2 + e_4 + e_6,\\
	&\delta = \alpha_0 + \alpha_1+ \alpha_2 = \alpha_3+\alpha_4=e_1 + e_2+e_3+e_4+e_5 + e_6.
\end{align*}
\begin{align*}
	&\alpha_0 \mapsto \pi^*(\bar{D}_5+\bar{D}_6) -E_1-E_4,\quad
	\alpha_1 \mapsto \pi^*(\bar{D}_3+\bar{D}_4) -E_2-E_5,\quad
	\alpha_2 \mapsto \pi^*(\bar{D}_1+\bar{D}_2) -E_3-E_6,\\
	&\alpha_3 \mapsto \pi^*(\bar{D}_4+\bar{D}_5+\bar{D}_6)-E_1-E_3-E_5,\quad
	\alpha_4 \mapsto \pi^*(\bar{D}_1+\bar{D}_2+\bar{D}_3)-E_2-E_4-E_6,\\
	&\delta \mapsto D_1+D_2+D_3+D_4+D_5+D_6.
\end{align*}
\[
\begin{tikzpicture}[scale=0.8]
	\node[] (0) at (90:1) {$\alpha_0$};
	\node[] (1) at (210:1) {$\alpha_1$};
	\node[] (2) at (330:1) {$\alpha_2$};
	\node[] (3) at (3,0) {$\alpha_3$};
	\node[] (4) at (5,0) {$\alpha_4$};
	\foreach \i/\j in {0/1,1/2,2/0,3/4}
		\draw (\i) edge (\j);
	\node[] at (4,0.2) {$\scalebox{0.7}{2}$};
\end{tikzpicture}
\]
\[
	\alpha_i^{2} = -2 \quad \text{for any $i$}. 
\]
\subsubsection*{Embedding $\Cr(R^{\perp})^{\mathrm{op}} \to \Gamma_{\mathbf{i}}$}
\begin{align*}
	\Cr(E_3^{(1)}) = \langle s_0 ,\dots,s_4 \rangle \rtimes \langle \iota_1,\iota_2 \rangle
	\cong
	W((A_2+A_1)^{(1)}) \rtimes \Aut((A_2+A_1)^{(1)})
\end{align*}
\begin{align*}
	&s_0^* = \mu_{1}^{-} \circ (1,4) \circ \mu_{1}^{+},\quad
	s_1^* = \mu_{2}^{-} \circ (2,5) \circ \mu_{2}^{+},\quad
	s_2^* = \mu_{3}^{-} \circ (3,6) \circ \mu_{3}^{+},\\
	&s_3^* = \mu_{1}^{-} \circ \mu_{3}^{-} \circ (3,5) \circ \mu_{3}^{+} \circ \mu_{1}^{+},\quad
	s_4^* = \mu_{2}^{-} \circ \mu_{4}^{-} \circ (4,6) \circ \mu_{4}^{+} \circ \mu_{2}^{+},\\
	&\iota_1^*=(1,2,3,4,5,6),\quad
	\iota_2^* = - (1,4)(2,3)(5,6).
\end{align*}
\subsubsection*{Action on $D(K)$}
\begin{align*}
	s_i(z^{\alpha_j}) = z^{s_i(\alpha_j)},\quad
	\iota_1(z^{\alpha_{\{0,1,2,3,4\}}}) = z^{\alpha_{\{2,0,1,4,3\}}},\quad
	\iota_2(z^{\alpha_{\{0,1,2,3,4\}}}) = z^{-\alpha_{\{ 0,2,1,4,3\}}}.
\end{align*}

\subsection*{Type $E_2^{(1)}$}
\subsubsection*{Fan, seed, and quiver}
\[
\begin{tikzpicture}[baseline=(o.base)]
	\node[] (o) at (0,0) {};
	\draw[help lines] (-1.1,-1.1) grid (1.1,2.1);
	\draw[draw,fill] (0,0) circle (0.4mm);
	\foreach \i/\j in {-1/2,-1/0,0/-1,1/-1,1/0,0/1,-1/1}{
		\draw[fan arrow] (0,0) -- (\i,\j);
	};
	\node[left]  at (-1,2) {$u_1$};
	\node[left]  at (-1,1) {$u_2$};
	\node[left]  at (-1,0) {$u_3$};
	\node[below]  at (0,-1) {$u_4$};
	\node[right]  at (1,-1) {$u_5$};
	\node[right]  at (1,0) {$u_6$};
	\node[above]  at (0,1) {$u_7$};
\end{tikzpicture}
\quad
\begin{tikzpicture}[baseline=(o.base)]
	\node[] (o) at (0,0) {};
	\draw[help lines] (-1.1,-1.1) grid (1.1,2.1);
	\draw[draw,fill] (0,0) circle (0.4mm);
	\foreach \i/\j in {-1/2,-1/0,0/-1,1/-1,1/0}{
		\draw[fan arrow] (0,0) -- (\i,\j);
	};
	\node[left] (w1) at (-1,2) {$w_1$};
	\node[left] (w2) at (-1,0) {$w_2$};
	\node[below] (w3) at (0,-1) {$w_3$};
	\node[right] (w4) at (1,-1) {$w_4$};
	\node[right] (w5) at (1,0) {$w_5$};
\end{tikzpicture}
\quad
\begin{tikzpicture}[scale=1.3,baseline=(o.base)]
	\node[] (o) at (0,0) {};
	\node[vertex,draw] (1) at (2*360/5-54:1) {$1$};
	\node[vertex,draw] (2) at (3*360/5-54:1) {$2$};
	\node[vertex,draw] (3) at (4*360/5-54:1) {$3$};
	\node[vertex,draw] (4) at (5*360/5-54:1) {$4$};
	\node[vertex,draw] (5) at (360/5-54:1) {$5$};
	\foreach \i/\j in {2/3,3/4,4/5,1/3,2/4,3/5,4/1}
		\draw[mid arrow] (\i) -- (\j);
	\foreach \i/\j in {1/2,5/1}{
		\draw (\i) edge[bend left=15,mid arrow] (\j);
		\draw (\i) edge[bend right=15,mid arrow] (\j);
	}
\end{tikzpicture}
\]
\subsubsection*{Root data}
\begin{align*}
	&\alpha_0 = e_1 + e_3 + e_4,\quad
	\alpha_1 = e_2 + e_5,\\
	&\alpha_2 = e_1 + 3e_3 -e_4+2e_5,\quad
	\alpha_3 = e_2-2e_3+2e_4-e_5,\\
	&\delta = \alpha_0 + \alpha_1 = \alpha_2+\alpha_3= e_1+e_2+e_3 +e_4+e_5
\end{align*}
\begin{align*}
	&\alpha_0 \mapsto \pi^*( \bar{D}_2 + 2\bar{D}_3 +\bar{D}_4 )-E_1-E_3-E_4,\\
	&\alpha_1 \mapsto \pi^*(\bar{D}_4+\bar{D}_5) -E_2-E_5,\\
	&\alpha_2 \mapsto \pi^* (\bar{D}_2+2\bar{D}_3+\bar{D}_4+2\bar{D}_5)-E_1-3E_3+E_4-2E_5,\\
	&\alpha_3 \mapsto \pi^*(\bar{D}_4-\bar{D}_5) -E_2+2E_3-2E_4+E_5,\\
	&\delta \mapsto D_1+D_2+D_3+D_4+D_5+D_6+D_7.
\end{align*}
\[
\begin{tikzpicture}[scale=0.8]
	\node[] (1) at (0,0) {$\alpha_0$};
	\node[] (2) at (2,0) {$\alpha_1$};
	\node[] (3) at (4,0) {$\alpha_2$};
	\node[] (4) at (6,0) {$\alpha_3$};
	\node[] at (1,0.2) {$\scalebox{0.7}{2}$};
	\node[] at (5,0.2) {$\scalebox{0.7}{14}$};
	\foreach \i/\j in {1/2,3/4}
		\draw (\i) -- (\j);
\end{tikzpicture}
\]
\[
	\alpha_1^2 = \alpha_2^2 = -2,\quad \alpha_3^2 = \alpha_4^2 = -14.
\]
\subsubsection*{Embedding $\Cr(R^{\perp})^{\mathrm{op}} \to \Gamma_{\mathbf{i}}$}
\[
	\Cr(E_2^{(1)}) =  \langle s_0,s_1 \rangle \rtimes \langle \tau \rangle \rtimes \langle \iota \rangle \cong
	W(A_1^{(1)}) \rtimes \ZZ \rtimes \mathfrak{S}_2,
\]
\[
	s_0^* = \mu_1^- \circ \mu_3^- \circ (3,4)  \circ \mu_3^+ \circ \mu_1^+ ,\quad
	s_1^* = \mu_2^- \circ (2,5) \circ \mu_2^+,\quad
	\tau^* = (1,5,3,4,2) \circ \mu_3^+, \quad
	\iota^* = -(2,5)(3,4) \circ \tau^*.
\]
\subsubsection*{Action on $D(K)$}
\begin{align*}
	&s_i(z^{\alpha_j}) = z^{s_i(\alpha_j)},\quad
	\tau(z^{\alpha_{\{0,1\}}}) = z^{\alpha_{\{1,0\}}},\quad
	\tau(z^{\alpha_2}) = z^{\alpha_2+\delta},\quad
	\tau(z^{\alpha_3})= z^{\alpha_3-\delta},\\
	&\iota(z^{\alpha_{\{0,1\}}}) = z^{-\alpha_{\{1,0\}}},\quad
	\iota(z^{\alpha_{\{2,3\}}}) = z^{-\alpha_{\{3,2\}}}.
\end{align*}

\subsection*{Type $E_1^{(1)}$}
\subsubsection*{Fan, seed, and quiver}
\[
\begin{tikzpicture}[baseline=(o.base)]
	\node[] (o) at (0,0) {};
	\draw[help lines] (-1.1,-1.1) grid (1.1,2.1);
	\draw[draw,fill] (0,0) circle (0.4mm);
	\foreach \i/\j in {-1/2,-1/-1,1/-1,1/0,-1/1,-1/0,0/-1,0/1}{
		\draw[fan arrow] (0,0) -- (\i,\j);
	};
	\node[left]  at (-1,2) {$u_1$};
	\node[left]  at (-1,1) {$u_2$};
	\node[left]  at (-1,0) {$u_3$};
	\node[left]  at (-1,-1) {$u_4$};
	\node[below]  at (0,-1) {$u_5$};
	\node[right]  at (1,-1) {$u_6$};
	\node[right]  at (1,0) {$u_7$};
	\node[above]  at (0,1) {$u_8$};
\end{tikzpicture}
\quad
\begin{tikzpicture}[baseline=(o.base)]
	\node[] (o) at (0,0) {};
	\draw[help lines] (-1.1,-1.1) grid (1.1,2.1);
	\draw[draw,fill] (0,0) circle (0.4mm);
	\foreach \i/\j in {-1/2,-1/-1,1/-1,1/0}{
		\draw[fan arrow] (0,0) -- (\i,\j);
	};
	\node[left] (w1) at (-1,2) {$w_1$};
	\node[left] (w2) at (-1,-1) {$w_2$};
	\node[right] (w3) at (1,-1) {$w_3$};
	\node[right] (w4) at (1,0) {$w_4$};
\end{tikzpicture}
\quad
\begin{tikzpicture}[scale=1.4,baseline=(o.base)]
	\node[] (o) at (0,0) {};
	\node[vertex,draw] (1) at (135:1) {$1$};
	\node[vertex,draw] (2) at (225:1) {$2$};
	\node[vertex,draw] (3) at (315:1) {$3$};
	\node[vertex,draw] (4) at (45:1) {$4$};
	\foreach \i/\j in {3/4}
		\draw[mid arrow] (\i) -- (\j);
	\foreach \i/\j in {3/1,2/4}
		\draw (\i) edge[mid 7 arrow] (\j);
	\foreach \i/\j in {1/2}{
		\draw (\i) edge[bend left=22,mid arrow] (\j);
		\draw (\i) edge[bend right=22,mid arrow] (\j);
		\draw (\i) edge[mid arrow] (\j);
	}
	\foreach \i/\j in {2/3,4/1}{
		\draw (\i) edge[bend left=15,mid arrow] (\j);
		\draw (\i) edge[bend right=15,mid arrow] (\j);
	}
\end{tikzpicture}
\]
\subsubsection*{Root data}
\begin{align*}
	&\alpha_0=e_1+2e_3-e_4,\quad
	\alpha_1=e_2-e_3+2e_4,\\
	&\delta = \alpha_0+\alpha_1=e_1+e_2+e_3+e_4
\end{align*}
\begin{align*}
	&\alpha_0 \mapsto \pi^*(2\bar{D}_7+\bar{D}_8)-E_1-2E_3+E_4,\quad
	\alpha_1 \mapsto \pi^*(\bar{D}_5+2\bar{D}_6)-E_2+E_3-2E_4,\\
	&\delta \mapsto D_1+D_2+D_3+D_4+D_5+D_6+D_7+D_8.
\end{align*}
\[
\begin{tikzpicture}[scale=1]
	\node[] (0) at (1,0) {$\alpha_0$};
	\node[] (1) at (3,0) {$\alpha_1$};
	\node[] at (2,0.2) {$\scalebox{0.7}{8}$};
	\foreach \i/\j in {0/1}
		\draw (\i) edge (\j);
\end{tikzpicture}
\]
\[
	\alpha_0^{2} = \alpha_1^2 = -8.
\]
\subsubsection*{Embedding $\Cr(R^{\perp})^{\mathrm{op}} \to \Gamma_{\mathbf{i}}$}
\[
	\Cr(E_1^{(1)}) = \langle \tau \rangle \rtimes \langle \iota \rangle \cong \ZZ \rtimes \mathfrak{S}_2,
\]
\[
	\tau^* = (1,3,4,2) \circ \mu_3^+,\quad
	\iota^* = -(1,2)(3,4).
\]
\subsubsection*{Action on $D(K)$}
\[
	\tau(z^{\alpha_0}) = z^{\alpha_0+\delta},\quad
	\tau(z^{\alpha_1}) = z^{\alpha_1-\delta},\quad
	\iota(z^{\alpha_{\{0,1\}}}) = z^{-\alpha_{\{1,0\}}}.
\]

\subsection*{Type $E_1^{(1)'}$}
\subsubsection*{Fan, seed, and quiver}
\[
\begin{tikzpicture}[baseline=(o.base)]
	\node[] (o) at (0,0) {};
	\draw[help lines] (-1.1,-2.1) grid (1.1,2.1);
	\draw[draw,fill] (0,0) circle (0.4mm);
	\foreach \i/\j in {-1/2,-1/0,1/-2,1/0,-1/1,0/-1,1/-1,0/1}{
		\draw[fan arrow] (0,0) -- (\i,\j);
	};
	\node[left] at (-1,2) {$u_1$};
	\node[left] at (-1,1) {$u_2$};
	\node[left] at (-1,0) {$u_3$};
	\node[below] at (0,-1) {$u_4$};
	\node[right] at (1,-2) {$u_5$};
	\node[right] at (1,-1) {$u_6$};
	\node[right] at (1,0) {$u_7$};
	\node[above] at (0,1) {$u_8$};
\end{tikzpicture}
\quad
\begin{tikzpicture}[baseline=(o.base)]
	\node[] (o) at (0,0) {};
	\draw[help lines] (-1.1,-2.1) grid (1.1,2.1);
	\draw[draw,fill] (0,0) circle (0.4mm);
	\foreach \i/\j in {-1/2,-1/0,1/-2,1/0}{
		\draw[fan arrow] (0,0) -- (\i,\j);
	};
	\node[left] (w1) at (-1,2) {$w_1$};
	\node[left] (w2) at (-1,0) {$w_2$};
	\node[right] (w3) at (1,-2) {$w_3$};
	\node[right] (w4) at (1,0) {$w_4$};
\end{tikzpicture}
\quad
\begin{tikzpicture}[scale=1.4,baseline=(o.base)]
	\node[] (o) at (0,0) {};
	\node[vertex,draw] (1) at (135:1) {$1$};
	\node[vertex,draw] (2) at (225:1) {$2$};
	\node[vertex,draw] (3) at (315:1) {$3$};
	\node[vertex,draw] (4) at (45:1) {$4$};
	\foreach \i/\j in {1/2,2/3,3/4,4/1}{
		\draw (\i) edge[bend left=15,mid arrow] (\j);
		\draw (\i) edge[bend right=15,mid arrow] (\j);
	}
\end{tikzpicture}
\]
\subsubsection*{Root data}
\begin{align*}
	&\alpha_0 = e_1+e_3,\quad
	\alpha_1 = e_2+e_4,\\
	&\delta = \alpha_0 + \alpha_1 = e_1 + e_2 + e_3 + e_4.
\end{align*}
\begin{align*}
	&\alpha_0 \mapsto \pi^*(\bar{D}_2+2\bar{D}_3+\bar{D}_4)-E_1-E_3,\quad
	\alpha_1 \mapsto \pi^*(\bar{D}_4+2\bar{D}_5+\bar{D}_6)-E_2-E_4,\\
	&\delta \mapsto D_1+D_2+D_3+D_4+D_5+D_6+D_7+D_8.
\end{align*}
\[
\begin{tikzpicture}[scale=1]
	\node[] (0) at (1,0) {$\alpha_0$};
	\node[] (1) at (3,0) {$\alpha_1$};
	\node[] at (2,0.2) {$\scalebox{0.7}{2}$};
	\foreach \i/\j in {0/1}
		\draw (\i) edge (\j);
\end{tikzpicture}
\]
\[
	\alpha_0^{2} = \alpha_1^2 = -2.
\]
\subsubsection*{Embedding $\Cr(R^{\perp})^{\mathrm{op}} \to \Gamma_{\mathbf{i}}$}
\[
  \Cr(E_1^{(1)'}) = \langle s_0,s_1 \rangle \rtimes \langle \iota_1,\iota_2 \rangle \cong W(A_1^{(1)}) \rtimes \mathfrak{D}_4,
\]
\[
	s_0^* = \mu_1^- \circ (1,3) \circ \mu_1^+,\quad
	s_1^* = \mu_2^- \circ (2,4) \circ \mu_2^+,\quad
	\iota_1^* = (1,2,3,4),\quad \iota_2^*=-(1,3).
\]
\subsubsection*{Action on $D(K)$}
\[
	s_i(z^{\alpha_j}) = z^{s_i(\alpha_j)},\quad
	\iota_1(z^{\alpha_{\{0,1\}}}) = z^{\alpha_{\{1,0\}}},\quad
	\iota_2(z^{\alpha_{\{0,1\}}}) = z^{-\alpha_{\{0,1\}}}.
\]

\subsection*{Type $E_0^{(1)}$}
\subsubsection*{Fan, seed, and quiver}
\[
\begin{tikzpicture}[baseline=(o.base)]
	\node[] (o) at (0,0) {};
	\draw[help lines] (-1.1,-1.1) grid (2.1,2.1);
	\draw[draw,fill] (0,0) circle (0.4mm);
	\foreach \i/\j in {-1/2,-1/-1,2/-1,-1/1,-1/0,0/-1,1/-1,1/0,0/1}{
		\draw[fan arrow] (0,0) -- (\i,\j);
	};
	\node[left]  at (-1,2) {$u_1$};
	\node[left]  at (-1,1) {$u_2$};
	\node[left]  at (-1,0) {$u_3$};
	\node[left]  at (-1,-1) {$u_4$};
	\node[below]  at (0,-1) {$u_5$};
	\node[below]  at (1,-1) {$u_6$};
	\node[right]  at (2,-1) {$u_7$};
	\node[right]  at (1,0) {$u_8$};
	\node[above]  at (0,1) {$u_9$};
\end{tikzpicture}
\quad
\begin{tikzpicture}[baseline=(o.base)]
	\node[] (o) at (0,0) {};
	\draw[help lines] (-1.1,-1.1) grid (2.1,2.1);
	\draw[draw,fill] (0,0) circle (0.4mm);
	\foreach \i/\j in {-1/2,-1/-1,2/-1}{
		\draw[fan arrow] (0,0) -- (\i,\j);
	};
	\node[left] (w1) at (-1,2) {$w_1$};
	\node[left] (w2) at (-1,-1) {$w_2$};
	\node[right] (w3) at (2,-1) {$w_3$};
\end{tikzpicture}
\quad
\begin{tikzpicture}[scale=1.4,baseline=(o.base)]
	\node[] (o) at (0,0) {};
	\node[vertex,draw] (1) at (90:1) {$1$};
	\node[vertex,draw] (2) at (210:1) {$2$};
	\node[vertex,draw] (3) at (330:1) {$3$};
	\foreach \i/\j in {1/2,2/3,3/1}{
		\draw (\i) edge[bend left=20,mid arrow] (\j);
		\draw (\i) edge[bend right=20,mid arrow] (\j);
		\draw (\i) edge[mid arrow] (\j);
	}
\end{tikzpicture}
\]
\subsubsection*{Root data}
\[
	\alpha_0 = \delta = e_1+e_2+e_3.
\]
\[
	\alpha_0,\delta \mapsto D_1+D_2+D_3+D_4+D_5+D_6+D_7+D_8+D_9.
\]
\[
	\alpha_0^{2} = 0.
\]
\subsubsection*{Embedding $\Cr(R^{\perp})^{\mathrm{op}} \to \Gamma_{\mathbf{i}}$}
\[
	\Cr(E_0^{(1)}) = \langle \iota_1,\iota_2 \rangle \cong \mathfrak{D}_3,
\]
\[
	\iota_1^* = (1,2,3),\quad
	\iota_2^* = -(1,2).
\]
\subsubsection*{Action on $D(K)$}
\[
	\iota_1(z^{\alpha_0}) = z^{\alpha_0},\quad
	\iota_2(z^{\alpha_0}) = z^{-\alpha_0}.
\]

\providecommand{\MR}[1]{}
\providecommand{\bysame}{\leavevmode\hbox to3em{\hrulefill}\thinspace}
\providecommand{\MR}{\relax\ifhmode\unskip\space\fi MR }
\providecommand{\MRhref}[2]{%
  \href{http://www.ams.org/mathscinet-getitem?mr=#1}{#2}
}
\providecommand{\href}[2]{#2}

\end{document}